\newtheorem{lemma}{\textbf{Lemma}}
\newtheorem{remark}{\textbf{Remark}}
\newtheorem{theorem}{\textbf{Theorem}}
\newtheorem{assumption}{\textbf{Assumption}}
\newtheorem{definition}{\textbf{Definition}}
\newtheorem{example}{\textbf{Example}}
\newcommand{\norm}[1]{\left\Vert #1 \right\Vert}
\newcommand{\todofuture}[1]{\ifthenelse{\boolean{showtodofuture}}{\textcolor{blue}{(TODO: #1)}}{}}
\newcommand{\todo}[1]{\ifthenelse{\boolean{showtodo}}{\textcolor{red}{(TODO: #1)}}{}}
\newcommand{\notetomyself}[1]{\ifthenelse{\boolean{shownotetomyself}}{\textcolor{red}{(Note: #1)}}{}}
\newcommand{\intoninf}{\Theta}
\def\BibTeX{{\rm B\kern-.05em{\sc i\kern-.025em b}\kern-.08em
    T\kern-.1667em\lower.7ex\hbox{E}\kern-.125emX}}
\begin{document}
\title{
    Optimal Control in Both Steady State and Transient Process with Unknown Disturbances
}

%\title{
%    Optimal Control over the Whole Process with Unknown Disturbances
%}

\author{Ming Li, Zhaojian Wang, Feng Liu, Ming Cao, \IEEEmembership{Fellow, IEEE}, and Bo Yang\thanks{This work was supported by the National Natural Science Foundation of China (62103265), the Young Elite Scientists Sponsorship Program by Cast of China Association for Science and Technology (YESS20220320). (\textit{Corresponding author: Zhaojian Wang})}
\thanks{M. Li, Z. Wang, and B. Yang are with the Key Laboratory of System Control, and Information Processing, Ministry of Education of China, Department of Automation, Shanghai Jiao Tong University, Shanghai 200240, China, (email: wangzhaojian@sjtu.edu.cn). }
\thanks{F. Liu is with the State Key Laboratory of Power System and the Department of Electrical Engineering, Tsinghua University, Beijing, 100084, China (e-mail: lfeng@tsinghua.edu.cn).}
\thanks{ M. Cao is with the Institute of Engineering and Technology, University of Groningen, 9747 AG Groningen, Netherlands (e-mail:
 ming.cao@ieee.org).}
}

\maketitle

%%%%%%%%%%%%%%%%%%%%%%%%%%%%%%%%%%%%%%%%%%%%%%%%%%%%%%%%%%%%%%%%%%%%%%%%%%%%%%%%
\begin{abstract}
    The scheme of online optimization as a feedback controller is widely used to steer the states of a physical system to the optimal solution of a predefined optimization problem. Such methods focus on regulating the physical states to the optimal solution in the steady state, without considering the performance during the transient process. In this paper, we simultaneously consider the performance in both the steady state and the transient process of a linear time-invariant system with unknown disturbances. The performance of the transient process is illustrated by the concept of \textit{overtaking optimality}. An overtaking optimal controller with known disturbances is derived to achieve the transient overtaking optimality while guaranteeing steady-state performance. Then, we propose a disturbance independent near-optimal controller, which can achieve optimal steady-state performance and approach the overtaking optimal performance in the transient process. The system performance gap between the overtaking optimal controller and the proposed controller proves to be inversely proportional to the control gains. A case study on a power system with four buses is used to validate the effectiveness of the two controllers.

\end{abstract}

\begin{IEEEkeywords}
	Optimal control, linear quadratic regulator, steady-state optimization, transient-process optimization, overtaking optimality.
\end{IEEEkeywords}

\section{Introduction}

In a physical system, an optimal operation problem is to get the desired steady state after a disturbance, which is usually formulated as an optimization problem. Usually, the problem is solved offline to work out the optimal setpoint, which will be tracked by the physical system under a proper controller with the optimization of transient performance. However, when the timescale of solving the optimization problem is close to that of the physical system dynamics, the interaction between optimization and control has to be considered \cite{ HauswirthAnnuRevControl2024optimization, liu2022merging}. In this situation, performances in the steady state and transient performance need to be optimized simultaneously.

Recently, there have been many researches investigating the interaction of optimization and control in physical systems, which is usually called \textit{feedback-based optimization} \cite{HauswirthAnnuRevControl2024optimization}, or \textit{optimization-guided dynamic control} \cite{wangJAS2023online}. This scheme implements optimization algorithms as dynamical systems and incorporates the algorithms in controllers to realize the merging of optimization and control \cite{zhangTAC2018distributed, menta2018stability, HauswirthTAC2021Timescale, ColombinoTCNS2020Online, bianchinTCNS2022Timevarying, low1999optimization, ChiangIEEEP2007Layering, kasisTSG2017stability, wangTSG2019distributedpart1, liTCNS2016connect}. In \cite{zhangTAC2018distributed}, a distributed control for the network system is proposed to achieve optimal steady-state performance by combining the primal-dual dynamics of the predefined optimization problem with the physical system that can be seen as realizing a primal-dual gradient algorithm. Reference \cite{menta2018stability} relaxes the requirements of the physical system by allowing it to be any linear time-invariant (LTI) system with asymptotic stability. The results in \cite{menta2018stability} are further extended to nonlinear systems in \cite{HauswirthTAC2021Timescale}. It quantifies the required stability bounds for feedback laws based on different optimization algorithms such as gradient descent methods, momentum methods, and primal-dual gradient methods. The time-varying optimization problem in the physical system is considered in \cite{ColombinoTCNS2020Online}, in which a feedback controller based on primal-dual gradient dynamics is designed to steer the physical system to the optimal time-varying trajectory. The tracking error proves to be bounded and the closed-loop system is asymptotically stable. The stability condition checked by solving a linear matrix inequality in \cite{ColombinoTCNS2020Online} is simplified to require the system and control gains to be less than the given constants in \cite{bianchinTCNS2022Timevarying}.  
The scheme has also been widely used in communication networks \cite{low1999optimization, ChiangIEEEP2007Layering} and power systems \cite{liTCNS2016connect, kasisTSG2017stability, wangTSG2019distributedpart1}.

The above works have made significant progress in obtaining the optimal steady state of a dynamical system. \notetomyself{The expression "modeled after" comes from \cite{HauswirthTAC2021Timescale}.}However, the performance during the transient process is not optimized. To improve the transient performance, the reinforcement learning (RL) method is utilized in \cite{cuiTPS2023reinforcement, jiangOJCSYS2022stable, yuanTPS2023learning} by optimizing the controller parameters. Adaptive dynamic programming (ADP) \cite{vrebieAuto2009Adaptive, PangTAC2021Adaptive, PangTAC2023Adaptive} is also an alternative way. Although RL and ADP methods can help improve the transient performances of physical systems, they can hardly provide optimal controllers or optimal performances theoretically.

In fact, obtaining the control law to optimize the transient process relies on the optimization of a functional, which can be obtained by the Pontryagin Maximum Principle \cite{pontryagin2018mathematical} or the Hamilton-Jacobi-Bellman approach \cite{bellman1954theory}. Especially, for an LTI system with quadratic cost, it is the so-called linear quadratic regulator (LQR) problem, where the controller gain can be obtained by solving an algebraic Riccati equation \cite{naidu2002optimal}. However, when tracking a given nonzero signal in the LQR problem, the performance index will be unbounded due to the infinite time horizon. The concept called \textit{overtaking optimality} is a basic tool to describe the optimal controller by comparing its performance index with other controllers, without restricting the performance indices to be bounded (See Section \ref{section_overtaking_optimal_control} for a formal definition of overtaking optimality). By the concept of overtaking optimality, \cite{ArtsteinTAC1985Tracking} extends the infinite-horizon LQR problem into the problem of optimally tracking periodic signals. The result shows that the overtaking optimal controller is a linear feedback of the superposition of the state and a term concerned with the tracking signals. The results in \cite{ArtsteinTAC1985Tracking} are extended in \cite{TANSCL1998Overtaking} and \cite{tanTAC1998nonlinear} by considering the problems to have a linear time-varying system with a quadratic performance index and a nonlinear system with a general performance index, respectively.

By the results in \cite{ArtsteinTAC1985Tracking, TANSCL1998Overtaking, tanTAC1998nonlinear}, the overtaking optimal controller for infinite-horizon time-invariant LQR problems with given tracking signals can be obtained by solving the algebraic Riccati equation. However, if there exist unknown disturbances in the system, the optimal control scheme cannot be applied since the system dynamics is inaccurate. In this paper, the overtaking optimal controller is first derived to achieve optimality in both the steady state and the transient process, requiring information about the disturbances. To eliminate the dependence on unknown disturbances, a near-optimal controller is proposed by incorporating the feedback-based optimization method. System performances under the overtaking optimal controller and the near-optimal controller are then analyzed. The main contributions are summarized as follows.
\begin{itemize}
    \item The closed-form of the overtaking optimal controller for an LTI system with known disturbances is derived. The overtaking optimal controller for the system is shown to be a superposition of a linear state feedback and a constant term concerned with the optimal steady state. The optimal performance of the steady state and transient process can be simultaneously achieved by the overtaking optimal controller.
    \item A near-optimal controller is proposed to steer the LTI system with unknown disturbances to the optimal steady state. It is proved that the physical system is globally asymptotically stable under the proposed near-optimal controller. Although the disturbances are unknown, the equilibrium of the closed-loop system always coincides with the optimal steady state.
    \item We quantify the transient performances of the proposed near-optimal controller and the overtaking optimal controller. The transient performance gap of the proposed controller compared with the optimal value is proved to be zero or inversely proportional to the control gains, depending on the initial state.
\end{itemize}

The rest of the paper is organized as follows. Section II gives the notations, preliminaries, and problem description. The overtaking optimal controller is presented in Section III. Section IV shows the proposed near-optimal controller and its asymptotically stable property. The transient performance indices of the overtaking optimal controller and the proposed near-optimal controller are analyzed in Section V. Simulation results are given in Section VI to validate the effectiveness of the controllers. Section VII concludes the paper.
\section{Problem Formulation}
In this section, we first introduce notations and preliminaries to be used in the rest of this paper. Then, the problem to be solved will be given.
\subsection{Notations}
In this paper, $\mathbb{R}^n$ denotes the $n$-dimensional Euclidean space. The notations $x^T$ and $A^T$ represent the transposes of the vector $x$ and the matrix $A$, respectively. For an invertible matrix $A$, $A^{-T}=(A^{-1})^T$. We use $\mathbf{0}$ to denote the zero matrix or zero vector, and $I$ to denote the identity matrix. When necessary, a subscript will be attached to $\mathbf{0}$ or $I$ to emphasize the dimension. The notation $\text{col}(x_1, x_2, \cdots, x_n)$ represents the stacked vector for $x_1, x_2, \cdots , x_n$, i.e. $\text{col}(x_1, x_2, \cdots, x_n) = \begin{bmatrix}
    x_1^T&x_2^T&\cdots &x_n^T
\end{bmatrix}^T$. The rank of a matrix $A$ is denoted by $\text{rank}(A)$. The nullspace of a matrix $A\in \mathbb{R}^{m\times n}$ is $\text{ker}(A) = \{x\in \mathbb{R}^n| Ax=\mathbf{0}\}$, and the column space of $A$ is $\text{range}(A)=\{y\in \mathbb{R}^m|\exists x\in \mathbb{R}^n \text{ such that } Ax=y\}$. The norm of a vector $x$ is denoted by $\norm{x}=\sqrt{x^Tx}$, and the norm induced by a positive definite matrix $Q$ is represented by $\norm{x}_{Q}=\sqrt{x^TQx}$. When $Q$ is positive semidefinite, $\norm{x}_Q$ is not a norm, but we still use it to represent the result of $\sqrt{x^TQx}$ for convenience.

\subsection{Preliminaries}
\label{subsection_preliminary}
\subsubsection{Controllable and unobservable subspace}
For matrices $A\in \mathbb{R}^{n\times n}$ and $B\in \mathbb{R}^{n\times m}$, the controllable subspace for the system $\dot{x}=Ax+Bu$ is defined as
\begin{align*}
    \mathcal{C}_{AB} := \bigg\{ x\in \mathbb{R}^n\Big|\begin{array}{ll} \exists u(t) \text{ such that }\\
     ~ e^{At} x+ \int_{0}^t e^{A(t-\tau)}Bu(\tau) d\tau =\mathbf{0}\end{array} \bigg\}
\end{align*}
The controllable subspace is equivalent to the column space of the controllability Gramian and the column space of the controllability matrix \cite[Theorem 11.5]{hespanha2018linear}, i.e.,
\begin{align*}
    \mathcal{C}_{AB} &= \text{range}\left(\int_{0}^t e^{-A\tau}BB^Te^{-A^T\tau}d\tau\right)\\
    &=\text{range}\left(\begin{bmatrix}
        B&AB&A^2B& \cdots &A^{n-1}B
    \end{bmatrix}\right)
\end{align*}

For matrices $A\in\mathbb{R}^{n\times n}$ and $C\in \mathbb{R}^{m\times n}$, the unobservable subspace is defined as 
\begin{align}
    \label{eq_unobservable_subspace_definition}
    \mathcal{UO}_{AC} := \left\{ x\in \mathbb{R}^n \big| Ce^{At} x= \mathbf{0} \right\}
\end{align}
It is equivalent to the nullspace of the observability Gramian and the nullspace of the observability matrix \cite[Theorem 15.1]{hespanha2018linear}, i.e.,
\begin{align*}
    \mathcal{UO}_{AC}&=\text{ker} \left(\int_{0}^t e^{A^T\tau}C^TCe^{A\tau} d\tau\right)\\
    &=\text{ker} \left(\begin{bmatrix} C\\ CA\\ \cdots \\ CA^{n-1} \end{bmatrix} \right)
\end{align*}

\subsubsection{Stabilizability and detectability}

For every matrix pair $(A, B)$ with $A \in \mathbb{R}^{n\times n}$ and $B\in \mathbb{R}^{n\times m}$, there exists a nonsingular matrix $T$ which can transform $(A, B)$ into the following form \cite[Theorem 13.2]{hespanha2018linear}.
\begin{align*}
    \tilde{A}&=T^{-1}AT=\begin{bmatrix}
        A_c&A_{12}\\ \mathbf{0}&A_{u}
    \end{bmatrix}\\
    \tilde{B}&=T^{-1}B=\begin{bmatrix}
        B_c\\ \mathbf{0}
    \end{bmatrix}
\end{align*}
where $(A_c, B_c)$ is controllable. Then $(A, B)$ is said to be stabilizable if the matrix $A_u$ is Hurwitz, i.e. all of the eigenvalues of $A_u$ have negative real parts. If $(A,B)$ is stabilizable, then the matrix $\begin{bmatrix}A&B\end{bmatrix}$ is of full row rank, i.e. $\text{rank}(\begin{bmatrix}A&B\end{bmatrix}) = n$. For matrices $A\in \mathbb{R}^{n\times n}$ and $C\in \mathbb{R}^{m\times n}$, $(A,C)$ is called detectable if $(A^T, C^T)$ is stabilizable.

% For a positive semidefinite matrix $Q=C^TC$, $(A, C)$ is observable/detectable if and only if $(A, Q)$ is observable/detectable, which can be proved by combining the fact that  $\text{ker}(Q)=\text{ker}(C)$ with Theorem 15.8 and Theorem 16.4 in \cite{hespanha2018linear}, respectively. 

\subsubsection{Lyapunov equation and algebraic Riccati equation}
\label{subsection_lyapunov_and_ARE_equation}
For a Hurwitz matrix $A$ and a positive semidefinite matrix $Q$, the Lyapunov equation $PA+A^TP+Q=\mathbf{0}$ has a unique solution $P^*$. The solution $P^*$ can be represented by $P^*=\int_0^{\infty}e^{A^Tt}Qe^{At}dt$ and it is positive semidefinite. Besides, $P^*$ is positive definite if and only if $(A, Q)$ is observable\cite[Theorem 12.4]{hespanha2018linear}. 

If $(A, B)$ is stabilizable and $(A, Q)$ is detectable for the positive semidefinite matrix $Q$, then for any positive definite matrix $R$, the algebraic Riccati equation $PA+A^TP-PBR^{-1}B^TP+Q=\mathbf{0}$ has a unique positive semidefinite solution $P^*$ and $A-BR^{-1}B^TP^*$ is Hurwitz \cite[Theorem 5]{kuvcera1973review}. Further, if $(A, Q)$ is observable, $P^*$ is positive definite.

\subsubsection{$\mathcal{L}$-stabliity} 
The space $\mathcal{L}_p$ for $1\le p <\infty$ is defined as 
\begin{align*}
    \mathcal{L}_p := \left\{u\ \big|\left(\int_{0}^{\infty} \norm{u(t)}^pdt  \right)^{1/p}\le \infty\right\}
\end{align*}
Specially, $\mathcal{L}_{\infty}$ is defined as $\mathcal{L}_{\infty} = \{u(t)|\sup_{t\ge 0} \norm{u(t)}<\infty\}$, which represents the space of piecewise continuous, bounded functions \cite[Chapter 5.1]{2002Nonlinear}. For the system $\dot{x} = Ax+Bu$, if $A$ is Hurwitz and $u\in \mathcal{L}_2$, then $x(t)\to \mathbf{0}$ as $t\to \infty$ \cite[Appendix A.5]{mareels1996adaptive}.

\subsection{Problem description}\label{section_overtaking_optimal_control}
We consider a system with the following dynamics
\begin{align}
    \label{eq_physical_system}
    \dot{x} &= Ax+Bu+Cd \todofuture{+\epsilon}
\end{align}
where $x\in \mathbb{R}^n$ is the system state, $u\in \mathbb{R}^m$ is the input, $d\in \mathbb{R}^p$ is the unknown disturbance, and matrices $A, B$ and $C$ have proper dimensions.

The optimization problem for the system in the steady state is as follows.
\begin{subequations}
    \label{eq_steady_optimization_problem}
    \begin{align}
        \label{eq_steady_optimization_problem_target}
        \min_{y, u} &\quad  f(y, u) := \frac{1}{2} (y^TQy+u^TRu)\\
        \label{eq_steady_optimization_problem_constraint}
        s.t.&\quad Ay+Bu+Cd=\mathbf{0}
    \end{align}
\end{subequations}
where the variable $y\in \mathbb{R}^{n}$ represents the steady state of $x$, which is used here to avoid notation confusion. $Q\in \mathbb{R}^{n\times n}$ is positive semidefinite, $R\in \mathbb{R}^{m\times m}$ is positive definite, and \eqref{eq_steady_optimization_problem_constraint} represents system \eqref{eq_physical_system} is in the steady state. 

The transient performance index to be minimized is
\begin{align}
    \label{eq_performance_index_J}
    J_{\infty}(x_0, u(t)) := \int_{0}^\infty f\left(x(t),u(t)\right) dt
\end{align}
where $x_0$ is the initial state, and $f(x,u)$ is the cost function defined in \eqref{eq_steady_optimization_problem_target}.

We make the following assumption about the system and the performance index, which is very common.
\begin{assumption}
    \label{assumption_system_property}
    The matrix pair $(A, B)$ is stabilizable, and $(A, Q)$ is detectable.
\end{assumption}

We will focus on \textit{admissible} controllers satisfying the following definition.

\begin{definition}
    For a controller $u(t)$ and the corresponding trajectory $x(t)$, $u(t)$ is said to be admissible if $u(t)\in \mathcal{L}_{\infty}$ and $x(t)\in \mathcal{L}_{\infty}$.
\end{definition}
The set for all admissible controllers is denoted by $\mathcal{U}$.

The transient performance index $J_{\infty}(x_0,u(t))$ can be unbounded for every admissible controller due to the existence of disturbance $d$ in the system. In this situation, we adopt the concept of \textit{overtaking optimality} in Definition \ref{Def_overtaking_optimal} to differentiate the performance of different controllers.

\begin{definition}\cite[Definition II.2]{tanTAC1998nonlinear}\label{Def_overtaking_optimal}
    Define $J_{T}(x_0, u) := \int_{0}^{T} f(x(t),u(t)) dt$. The controller $u^*(t)$ is said to be overtaking optimal if the following inequality in \eqref{eq_definition_overtaking_optimal} holds for any admissible control $u(t)\in \mathcal{U}$.
    \begin{equation}
        \label{eq_definition_overtaking_optimal}
        \limsup_{T \to \infty} \{J_T(x_0, u^*)-J_{T}(x_0, u)\}\le 0
    \end{equation}
\end{definition}

In the rest of the paper, we will design admissible controllers to achieve the following two goals simultaneously. 
\begin{itemize}
    \item \textbf{P1}: Steer system \eqref{eq_physical_system} to the optimal solution to problem \eqref{eq_steady_optimization_problem};
    \item \textbf{P2}: Minimize the performance index in \eqref{eq_performance_index_J} in the sense of overtaking optimality.
\end{itemize}

% In the rest of the paper, we will design both overtaking optimal controller and near-optimal controller to realize \textbf{P1} and \textbf{P2} simultaneously.

% Obtaining the overtaking optimal control trajectory over the admissible set $\mathcal{U}$ has been studied in many existing works, e.g. \cite{ArtsteinTAC1985Tracking, TANSCL1998Overtaking, tanTAC1998nonlinear}. An important feature of overtaking optimal control is the \textit{turnpike property} \cite{tanTAC1998nonlinear, Timm2022Turnpike}, which describes that an overtaking optimal trajectory of $(x(t), u(t))$ approaches the optimal solution to the steady-state optimization problem in \eqref{eq_steady_optimization_problem} under mild conditions. Therefore

%\section{Optimal Control Problem}
%\label{section_overtaking_optimal_control}
%This section first describes the optimal control problem in the steady state and the transient process. Then the overtaking optimal controller is given to simultaneously achieve the optimal performances in the two stages.
\section{Overtaking optimal controller}
Before giving the overtaking optimal controller, we first investigate the steady-state optimization problem in \eqref{eq_steady_optimization_problem}. Consider the following Lagrangian function
\begin{align}
    \label{eq_lagrangian}
    L(y, u, \lambda) = \frac{1}{2}(y^TQy+u^TRu)+\lambda^T(Ay+Bu+Cd)
\end{align}
where $\lambda\in \mathbb{R}^n$ is the Lagrangian multiplier for the constraint in \eqref{eq_steady_optimization_problem_constraint}. The properties of the optimal solution and the Lagrangian multiplier are given in the following lemma.

\begin{lemma}
    \label{lemma_steady_state_problem}
    Suppose Assumption \ref{assumption_system_property} holds. For any fixed disturbance $d$, the following results hold.
    \begin{enumerate}
        \item There exists a unique optimal solution $(\bar{x}, \bar{u})$ to the problem in \eqref{eq_steady_optimization_problem}.
        \item $(\bar{x}, \bar{u})$ is the optimal solution if and only if there exists a unique solution to $\lambda = \bar{\lambda}\in \mathbb{R}^{n}$ satisfying
            \begin{subequations}
                \label{eq_KKT_condition_static_OP}
                \begin{align}
                    \label{eq_KKT_condition_static_OP_1}
                    Q\bar{x}+A^T {\lambda}&=\mathbf{0}\\
                    \label{eq_KKT_condition_static_OP_2}
                    R\bar{u}+B^T{\lambda}&=\mathbf{0}\\
                    \label{eq_KKT_condition_static_OP_3}
                    A\bar{x}+B\bar{u}+Cd&=\mathbf{0}
                \end{align}
            \end{subequations}
    \end{enumerate}
\end{lemma}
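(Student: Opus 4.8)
The plan is to treat the steady-state problem \eqref{eq_steady_optimization_problem} as a convex quadratic program with affine equality constraints, so that both claims reduce to standard convex-optimization facts; the role of Assumption \ref{assumption_system_property} is to supply the rank and observability conditions needed to upgrade these facts into existence, uniqueness of the primal minimizer, and uniqueness of the multiplier. Throughout, $d$ is fixed, so $Cd$ is a constant offset in the constraint.

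For claim~1, I would first establish feasibility: since $(A,B)$ is stabilizable, the preliminaries give $\text{rank}(\begin{bmatrix} A & B\end{bmatrix})=n$, so $Ay+Bu=-Cd$ is solvable and the feasible set is a nonempty affine subspace. Existence of a minimizer then follows because $f$ is a convex quadratic that is bounded below by $0$ on this set, so its infimum is attained. For uniqueness I would treat the two blocks separately. Since $R\succ 0$, the map $u\mapsto u^TRu$ is strictly convex, so averaging two minimizers with distinct inputs would strictly decrease the $u$-part of $f$ while staying feasible, a contradiction; hence the optimal input $\bar{u}$ is unique. With $\bar{u}$ fixed, any two optimal states $\bar{x}_1,\bar{x}_2$ obey $A(\bar{x}_1-\bar{x}_2)=\mathbf{0}$, and setting $\delta=\bar{x}_2-\bar{x}_1$ the scalar quadratic $t\mapsto (\bar{x}_1+t\delta)^TQ(\bar{x}_1+t\delta)$, which is feasible for all $t$ because $A\delta=\mathbf{0}$ and is minimized at both $t=0$ and $t=1$, must be constant, forcing $\delta^TQ\delta=0$ and thus $Q\delta=\mathbf{0}$.

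This last implication is the main obstacle: I must upgrade $\delta\in\text{ker}(A)\cap\text{ker}(Q)$ to $\delta=\mathbf{0}$. The key step is that $A\delta=\mathbf{0}$ yields $e^{At}\delta=\delta$, so $Qe^{At}\delta=Q\delta=\mathbf{0}$ for all $t$, placing $\delta$ in the unobservable subspace $\mathcal{UO}_{AQ}$ while simultaneously making $\delta$ an eigenvector of $A$ for the eigenvalue $0$. Detectability of $(A,Q)$ rules out an unobservable mode on the imaginary axis, in particular at the origin, so no nonzero such $\delta$ exists and $\bar{x}$ is unique.

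For claim~2, convexity together with affine constraints makes the KKT system both necessary and sufficient for optimality, so no constraint qualification beyond affineness is required. Stationarity of the Lagrangian \eqref{eq_lagrangian} gives $Q\bar{x}+A^T\lambda=\mathbf{0}$ and $R\bar{u}+B^T\lambda=\mathbf{0}$, i.e. \eqref{eq_KKT_condition_static_OP_1}--\eqref{eq_KKT_condition_static_OP_2}, while primal feasibility is \eqref{eq_KKT_condition_static_OP_3}. It remains to show $\bar{\lambda}$ is unique: if $\lambda_1$ and $\lambda_2$ both satisfy \eqref{eq_KKT_condition_static_OP} for the same $(\bar{x},\bar{u})$, subtracting the stationarity relations gives $A^T(\lambda_1-\lambda_2)=\mathbf{0}$ and $B^T(\lambda_1-\lambda_2)=\mathbf{0}$, so $\lambda_1-\lambda_2$ lies in $\text{ker}(\begin{bmatrix} A^T\\ B^T\end{bmatrix})$. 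Since $\text{rank}(\begin{bmatrix} A & B\end{bmatrix})=n$ forces $\begin{bmatrix} A^T\\ B^T\end{bmatrix}$ to have full column rank, this kernel is trivial and $\lambda_1=\lambda_2$.
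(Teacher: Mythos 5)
Your proposal is correct, and its overall skeleton matches the paper's proof: feasibility from $\text{rank}(\begin{bmatrix}A&B\end{bmatrix})=n$ (stabilizability), uniqueness of the primal solution reduced to showing $\ker(A)\cap\ker(Q)=\{\mathbf{0}\}$ via detectability, KKT necessity and sufficiency from convexity with affine equality constraints, and uniqueness of $\bar{\lambda}$ from the full column rank of $\begin{bmatrix}A^T\\B^T\end{bmatrix}$ — these last two are identical to the paper. You diverge in two local arguments. For existence, the paper proves boundedness of optimizers directly: it assumes $\bar{x}$ unbounded along a unit direction $\alpha$, deduces $A\alpha=\mathbf{0}$ and $Q\alpha=\mathbf{0}$, and gets a contradiction with detectability; you instead invoke attainment of the infimum for a convex quadratic bounded below on an affine set. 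Your route is arguably cleaner and reveals that existence does not actually require detectability at all, but note that as written the inference "bounded below, hence attained" is not valid for general convex functions (consider $e^{-s}$ on $\mathbb{R}$); it is a special property of quadratics (a Frank--Wolfe-type attainment theorem, or a short direct argument parametrizing the feasible affine set and splitting the reduced gradient along $\ker$ and $\text{range}$ of the reduced Hessian), so this step needs a citation or a few lines of proof to be complete. For uniqueness, you argue sequentially ($\bar{u}$ first by strict convexity of the $R$-term, then $\bar{x}$ by a one-dimensional quadratic argument along $\delta$), whereas the paper uses a single midpoint identity to obtain $Q(\bar{x}-\hat{x})=\mathbf{0}$ and $\bar{u}=\hat{u}$ simultaneously; and you phrase the final detectability step as the absence of an unobservable eigenvector of $A$ at the eigenvalue $0$ (a PBH-style statement), which is exactly equivalent to the paper's claim that $\begin{bmatrix}Q\\A\end{bmatrix}$ has full column rank. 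Both packagings work; yours makes the role of the PBH test explicit, while the paper's rank formulation is more compact and is reused verbatim for the multiplier-uniqueness step.
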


\begin{proof} For the first assertion, the existence is proved as follows. Since $(A, B)$ is stabilizable, $\text{rank}\left(\begin{bmatrix}A&B\end{bmatrix}\right)=n$. Therefore, the feasible set $\{(y, u)|Ay+Bu+Cd=\mathbf{0}\}$ is not empty since $\text{rank}\left(\begin{bmatrix}A&B\end{bmatrix}\right)=\text{rank}\left(\begin{bmatrix}A&B&Cd\end{bmatrix}\right)$ holds.  We then show the optimal solution is bounded. Since $R$ is positive definite, $\bar{u}$ is bounded. If $\bar{x}$ is unbounded, then it can be expressed as $\bar{x}=k\alpha$ where $\alpha\in \mathbb{R}^{n}, \norm{\alpha} = 1$, $k\in\mathbb{R}$ and $k\to \infty$. Since $A\bar{x}+B\bar{u}+Cd=\mathbf{0}$, it must hold that $A\alpha=\mathbf{0}$. On the other hand, the fact that $\bar{x}^TQ\bar{x}$ is bounded implies $Q{\alpha}=\mathbf{0}$. Thus, $\alpha$ is in the nullspace of  $\begin{bmatrix}
        Q\\A
    \end{bmatrix}$. Since $(A, Q)$ is detectable, $\begin{bmatrix}
        Q\\A
    \end{bmatrix}$ is of full column rank, which means $\alpha$ must be $\mathbf{0}$ and this contradicts with the fact of $\norm{\alpha}=1$. Therefore, there exists an optimal solution.
    
    To prove the uniqueness of the solution, suppose $(\hat{x}, \hat{u})$ is also the optimal solution. Let $(y, u)=(\frac{1}{2}(\bar{x}+\hat{x}), \frac{1}{2}(\bar{u}+\hat{u}))$ and we have
    \begin{subequations}
        \begin{align*}
            &  f(\bar{x}, \bar{u})+f(\hat{x}, \hat{u})-2f(y,u)\\
            &\quad=\frac{1}{4}(\bar{x}-\hat{x})^T Q (\bar{x}-\hat{x}) + \frac{1}{4}(\bar{u}-\hat{u})^TR(\bar{u}-\hat{u})
        \end{align*}
    \end{subequations}
    Since $(\bar{x},\bar{u})$ and $(\hat{x}, \hat{u})$ are both the optimal solutions, we have
    \begin{align*}
        \frac{1}{4}(\bar{x}-\hat{x})^T Q (\bar{x}-\hat{x}) + \frac{1}{4}(\bar{u}-\hat{u})^TR(\bar{u}-\hat{u})\le 0
    \end{align*}
    Therefore, $(\bar{x}-\hat{x})^T Q (\bar{x}-\hat{x})=0$ and $\bar{u}=\hat{u}$ hold. By $(\bar{x}-\hat{x})^T Q (\bar{x}-\hat{x})=0$, we have $Q(\bar{x}-\hat{x})=\mathbf{0}$, i.e. $\hat{x}-\bar{x}\in \ker(Q)$. Therefore, there exists $\alpha\in \ker(Q)$ such that $\hat{x}=\bar{x}+\alpha$. Since $(\bar{x}, \bar{u})$ and $(\hat{x}, \hat{u})$ satisfy \eqref{eq_steady_optimization_problem_constraint}, we have
    \begin{align*}
        A\alpha &= A(\hat{x}-\bar{x})\\
        &=(A\hat{x}+B\hat{u}+Cd)-(A\bar{x}+B\bar{u}+Cd)\\
        &=\mathbf{0}
    \end{align*}
    which means that $\alpha$ satisfies $\begin{bmatrix}
        Q\\A
    \end{bmatrix}\alpha=\mathbf{0}$. By the full column rank property of $\begin{bmatrix}
        Q\\A
    \end{bmatrix}$, $\alpha=\mathbf{0}$ is the unique solution, i.e. $\bar{x}=\hat{x}$. We conclude the optimal solution $(\bar{x}, \bar{u})$ exists uniquely.

    For the second assertion, the KKT conditions for problem \eqref{eq_steady_optimization_problem} are 
    \begin{align}
        \frac{\partial L}{\partial y} &= \mathbf{0}, &   \frac{\partial L}{\partial u} &=\mathbf{0}, &\frac{\partial L}{\partial \lambda} &=\mathbf{0}
    \end{align}
    which are exactly the conditions in \eqref{eq_KKT_condition_static_OP}. Since problem \eqref{eq_steady_optimization_problem} is a convex optimization problem with linear equality constraints, the KKT conditions are sufficient and necessary for the optimal solution \cite[Chapter 5.5.3]{boyd2004convex}. The uniqueness of $\bar{\lambda}$ can be proved as follows. By \eqref{eq_KKT_condition_static_OP_1} and \eqref{eq_KKT_condition_static_OP_2}, we have 
    \begin{align}
        \label{eq_unique_lambda}
        \begin{bmatrix}
            A^T\\ B^T
        \end{bmatrix}{\lambda}=\begin{bmatrix}
            -Q\bar{x}\\-R\bar{u}
        \end{bmatrix}
    \end{align} Since $\begin{bmatrix}
        A^T\\B^T
    \end{bmatrix}$ is of full column rank, equation \eqref{eq_unique_lambda} has at most one solution. Therefore, $\bar{\lambda}$ is the unique solution.
\end{proof}

For the optimization of the transient performance index $J_{\infty}$, the following theorem is given to obtain the overtaking optimal controller $u^*(t)$.
\begin{theorem}
    \label{theorem_overtaking_optimal}
    Suppose Assumption \ref{assumption_system_property} holds. Let $(\bar{x}, \bar{u})$ be the optimal solution to the optimization problem \eqref{eq_steady_optimization_problem}. The overtaking optimal controller is 
    \begin{equation}
        \label{eq_optimal_control_trace}
        u^*(t)=-K(x(t)-\bar{x})+\bar{u}
    \end{equation}
    where $K=R^{-1}B^TP^*$ with $P^*$ being the unique positive semidefinite solution to the following algebraic Riccati equation
    \begin{equation}
        \label{eq_Riccati}
        PA+A^TP-PBR^{-1}B^TP+Q=\mathbf{0}
    \end{equation}
\end{theorem}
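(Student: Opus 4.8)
The plan is to reduce the statement to a standard infinite-horizon LQR problem by a change of coordinates and then exploit the Riccati equation \eqref{eq_Riccati} through a completion-of-squares argument. First I would introduce the shifted variables $\tilde{x} := x - \bar{x}$ and $\tilde{u} := u - \bar{u}$. Because $(\bar{x}, \bar{u})$ satisfies the steady-state constraint \eqref{eq_KKT_condition_static_OP_3}, the disturbance term cancels and the dynamics \eqref{eq_physical_system} collapse to the disturbance-free system $\dot{\tilde{x}} = A\tilde{x} + B\tilde{u}$. In these coordinates the proposed controller \eqref{eq_optimal_control_trace} reads simply $\tilde{u}^* = -K\tilde{x}$, the classical LQR feedback for this shifted system.

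Next I would expand the running cost $f(x,u)$ in the shifted variables. The quadratic part is $\frac{1}{2}(\tilde{x}^TQ\tilde{x} + \tilde{u}^TR\tilde{u})$, the constant part $\frac{1}{2}(\bar{x}^TQ\bar{x} + \bar{u}^TR\bar{u})$ is independent of the chosen controller, and the cross terms are $\bar{x}^TQ\tilde{x} + \bar{u}^TR\tilde{u}$. Using the optimality relations $Q\bar{x} = -A^T\bar{\lambda}$ and $R\bar{u} = -B^T\bar{\lambda}$ from Lemma \ref{lemma_steady_state_problem}, the cross terms collapse to $-\bar{\lambda}^T(A\tilde{x} + B\tilde{u}) = -\bar{\lambda}^T\dot{\tilde{x}}$, a total time derivative. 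Integrating over $[0,T]$ thus turns the cross terms into a pure boundary term $-\bar{\lambda}^T(\tilde{x}(T) - \tilde{x}(0))$, while the constant part contributes a multiple of $T$ identical for every admissible controller. Consequently, when I form the difference $J_T(x_0,u^*) - J_T(x_0,u)$ the $T$-linear terms cancel exactly, and (since both trajectories share the initial condition $\tilde{x}(0) = x_0 - \bar{x}$) what survives is the difference of two shifted LQR costs $V_T := \int_0^T (\tilde{x}^TQ\tilde{x} + \tilde{u}^TR\tilde{u})\,dt$ plus that boundary contribution.

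I would then treat $V_T$ via the completion-of-squares identity built from \eqref{eq_Riccati}: differentiating $\tilde{x}^TP^*\tilde{x}$ along the trajectory and substituting $A^TP^* + P^*A = P^*BR^{-1}B^TP^* - Q$ gives $\tilde{x}^TQ\tilde{x} + \tilde{u}^TR\tilde{u} = -\frac{d}{dt}(\tilde{x}^TP^*\tilde{x}) + \norm{\tilde{u} + K\tilde{x}}_R^2$, so that $V_T = \tilde{x}(0)^TP^*\tilde{x}(0) - \tilde{x}(T)^TP^*\tilde{x}(T) + \int_0^T \norm{\tilde{u} + K\tilde{x}}_R^2\,dt$. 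Because $K = R^{-1}B^TP^*$ annihilates the residual $\tilde{u}^* + K\tilde{x}^*$, and because $A - BK$ is Hurwitz by the Riccati property recalled in Section \ref{subsection_lyapunov_and_ARE_equation} (whence $\tilde{x}^*(T) \to \mathbf{0}$), for the candidate $u^*$ both the squared residual and its endpoint term drop out. Assembling the pieces, $2[J_T(x_0,u^*) - J_T(x_0,u)]$ reduces to $\tilde{x}(T)^TP^*\tilde{x}(T) + 2\bar{\lambda}^T\tilde{x}(T) - \int_0^T \norm{\tilde{u} + K\tilde{x}}_R^2\,dt$ together with terms that vanish as $T \to \infty$.

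The hard part is the final $\limsup$: the boundary quantity $\tilde{x}(T)^TP^*\tilde{x}(T) + 2\bar{\lambda}^T\tilde{x}(T)$ need not tend to zero for an arbitrary admissible $u$, so I cannot pass to the limit directly. I would resolve this by a case split on the residual $v := \tilde{u} + K\tilde{x}$, noting that the closed loop reads $\dot{\tilde{x}} = (A - BK)\tilde{x} + Bv$ with $A - BK$ Hurwitz. If $v \notin \mathcal{L}_2$, the integral $\int_0^T \norm{v}_R^2\,dt$ diverges while the boundary term stays bounded (admissibility gives $x \in \mathcal{L}_\infty$, hence $\tilde{x}(T)$ bounded), so the whole expression tends to $-\infty$. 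If $v \in \mathcal{L}_2$, then invoking the $\mathcal{L}_2$-to-state convergence result recalled in Section \ref{subsection_preliminary} yields $\tilde{x}(T) \to \mathbf{0}$, so the boundary term vanishes and only $-\int_0^\infty \norm{v}_R^2\,dt \le 0$ remains. In either case $\limsup_{T\to\infty}\{J_T(x_0,u^*) - J_T(x_0,u)\} \le 0$, which is precisely overtaking optimality in Definition \ref{Def_overtaking_optimal}; a closing remark would confirm admissibility of $u^*$ itself, since $\tilde{x}^* \to \mathbf{0}$ keeps both $x^*$ and $u^*$ in $\mathcal{L}_\infty$.
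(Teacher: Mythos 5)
Your proposal is correct and takes essentially the same route as the paper's proof: your completion-of-squares decomposition (controller-independent constant part, the total-derivative terms $-\bar{\lambda}^T\dot{\tilde{x}}$ and $-\tfrac{1}{2}\tfrac{d}{dt}(\tilde{x}^TP^*\tilde{x})$, and the residual $\tfrac{1}{2}\norm{\tilde{u}+K\tilde{x}}_R^2$) is precisely the paper's verification that $V(x,t)=-f(\bar{x},\bar{u})\,t+\tfrac{1}{2}(x-\bar{x})^TP^*(x-\bar{x})+\bar{\lambda}^T(x-\bar{x})$ solves the Hamilton--Jacobi--Bellman equation, just derived constructively rather than checked as an ansatz. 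Your concluding case split on whether the residual $v=\tilde{u}+K\tilde{x}$ lies in $\mathcal{L}_2$ (divergent integral versus bounded boundary term from admissibility, and $\mathcal{L}_2$-to-state convergence when $v\in\mathcal{L}_2$) matches the paper's final argument step for step.
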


The proof of Theorem \ref{theorem_overtaking_optimal} is given in Appendix. 
Theorem \ref{theorem_overtaking_optimal} implies that the overtaking optimal control is the superposition of $\bar{u}$ and the linear feedback of $x-\bar{x}$. If no disturbance exists in the system, we have $\bar{x}=\mathbf{0}$ and $\bar{u}=\mathbf{0}$. Then, the overtaking optimal controller degrades to a common LQR controller. 

We have the following convergence result of the closed-loop system under the overtaking optimal controller.

\begin{lemma}
    \label{lemma_property_of_overtaking_optimal_controller}
    The closed-loop system composed of \eqref{eq_physical_system} and \eqref{eq_optimal_control_trace} is asymptotically stable. The equilibrium for the closed-loop system is $\bar{x}$, and $u^*(t)$ converges to $\bar{u}$, i.e., the optimal solution to the optimization problem \eqref{eq_steady_optimization_problem}.
\end{lemma}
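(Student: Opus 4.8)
The plan is to reduce the closed-loop dynamics to an autonomous linear error system in the shifted coordinate $\tilde{x} := x - \bar{x}$ and then invoke the Hurwitz property of $A - BR^{-1}B^TP^*$ guaranteed by the algebraic Riccati equation. First I would substitute the controller \eqref{eq_optimal_control_trace} into the system \eqref{eq_physical_system} to obtain
\begin{align*}
    \dot{x} = (A - BK)x + BK\bar{x} + B\bar{u} + Cd.
\end{align*}
The key step is to eliminate the disturbance-dependent and constant terms using the steady-state feasibility condition \eqref{eq_KKT_condition_static_OP_3}, namely $A\bar{x} + B\bar{u} + Cd = \mathbf{0}$, which gives $B\bar{u} + Cd = -A\bar{x}$. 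Substituting this identity collapses the affine part, since $BK\bar{x} + B\bar{u} + Cd = BK\bar{x} - A\bar{x} = -(A - BK)\bar{x}$, and yields $\dot{x} = (A - BK)(x - \bar{x})$. Writing $\tilde{x} = x - \bar{x}$, the error dynamics reduce to the homogeneous system $\dot{\tilde{x}} = (A - BK)\tilde{x}$.

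Next I would note that $K = R^{-1}B^TP^*$ with $P^*$ the unique positive semidefinite solution of \eqref{eq_Riccati}. Under Assumption \ref{assumption_system_property}, the preliminaries on the algebraic Riccati equation in Section \ref{subsection_lyapunov_and_ARE_equation} guarantee that $A - BR^{-1}B^TP^* = A - BK$ is Hurwitz. Consequently $\tilde{x} = \mathbf{0}$ is globally asymptotically stable for the error dynamics, so $\tilde{x}(t) \to \mathbf{0}$, equivalently $x(t) \to \bar{x}$, from any initial condition. Since a Hurwitz matrix is nonsingular, $\tilde{x} = \mathbf{0}$ is the unique equilibrium, and hence $\bar{x}$ is the unique equilibrium of the closed-loop system.

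Finally, the convergence of the input follows immediately from that of the state: letting $x(t) \to \bar{x}$ in \eqref{eq_optimal_control_trace} gives $u^*(t) = -K(x(t) - \bar{x}) + \bar{u} \to \bar{u}$, the optimal input of \eqref{eq_steady_optimization_problem}. I expect no genuine obstacle here; the only nontrivial ingredient is recognizing that the steady-state KKT constraint \eqref{eq_KKT_condition_static_OP_3} is precisely what cancels the constant forcing term, turning the affine closed loop into a Hurwitz homogeneous system in $\tilde{x}$ and thereby decoupling the stability analysis from the unknown disturbance $d$.
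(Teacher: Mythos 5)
Your proof is correct and follows essentially the same route as the paper: substitute the controller, use the steady-state KKT condition \eqref{eq_KKT_condition_static_OP_3} to cancel the affine term and obtain $\dot{x} = (A-BK)(x-\bar{x})$, then invoke the Hurwitz property of $A-BK$ from the Riccati-equation preliminaries. The only addition is your explicit remark that nonsingularity of the Hurwitz matrix gives uniqueness of the equilibrium, which the paper leaves implicit.
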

\begin{proof}
    Applying the overtaking optimal controller \eqref{eq_optimal_control_trace} into system \eqref{eq_physical_system}, the closed-loop system becomes
    \begin{align}
        \dot{x}&=(A-BK)x+BK\bar{x}+B\bar{u} +Cd\nonumber\\
        \label{eq_closed_loop_system_A_minus_BK}
        &=(A-BK)x-(A-BK){\bar{x}}
    \end{align}
    where the fact $\bar{u}+Cd=-A\bar{x}$ in \eqref{eq_KKT_condition_static_OP_3} is used. By the preliminaries of the algebraic Riccati equation in Section \ref{subsection_preliminary}, $A-BK$ is Hurwitz. Therefore, system \eqref{eq_closed_loop_system_A_minus_BK} is asymptotically stable. Since $x=\bar{x}$ is the equilibrium of system \eqref{eq_closed_loop_system_A_minus_BK}, $x(t)$ converges to $\bar{x}$. By $u^*(t)=-K(x(t)-\bar{x})+\bar{u}$, $u^*(t)$ converges to $\bar{u}$. This completes the proof.
\end{proof}

Lemma \ref{lemma_property_of_overtaking_optimal_controller} shows that the overtaking optimal controller $u^*(t)$ can steer the system to the optimal steady state, which, together with Theorem \ref{theorem_overtaking_optimal}, implies that $u^*(t)$ can realize \textbf{P1} and \textbf{P2} simultaneously. However, since $d$ is an unknown disturbance, the optimal steady state $\bar{x}$ and input $\bar{u}$ cannot be obtained in \eqref{eq_optimal_control_trace}, for which we propose a near-optimal controller independent of the disturbance in the following section.

\section{Near-optimal Controller}
In this section, we design a near-optimal controller to remove the dependence on the disturbance. The controller is shown to be capable of steering the system to the optimal steady state asymptotically.

\subsection{Near-optimal controller design}

Recalling the Lagrangian function $L(y, u, \lambda)$ in \eqref{eq_lagrangian}, the pair of the primal optimal solution $(\bar{x}, \bar{u})$ and the dual optimal solution $\bar{\lambda}$ is the saddle point of $L(y, u, \lambda)$ according to Lagrange duality \cite[Chapter 5.4]{boyd2004convex}. The primal optimal solution to $L(y, u, \lambda)$ about $u$ can be obtained by
\begin{align}
    \label{eq_h_definition}
    h(y, \lambda):=\arg\mathop{\min}_{u} ~L(y, u, \lambda)=-R^{-1} B^T \lambda
\end{align}
Substituting $u=h(y, \lambda)$ into the Lagrangian function $L(y, u, \lambda)$, we can obtain a reduced Lagrangian function $\tilde{L}\left(y, \lambda\right)$ as follows.
\begin{align}
    \label{eq_reduced_Lagrangian}
    \begin{split}
        \tilde{L}(y, \lambda)&=L(y, h(y, \lambda), \lambda)\\
        &=\frac{1}{2}\left(y^TQy+\lambda^TBR^{-1}B^T\lambda\right) \\
        & \quad \quad +\lambda^T(Ay-BR^{-1}B^T \lambda+Cd)
    \end{split}
\end{align}
The saddle point of $\tilde{L}(y, \lambda)$ is precisely the optimal primal-dual pair $(\bar{x}, \bar{u}, \bar{\lambda})$ by combining $u=h(y, \lambda)$. The primal-dual dynamics to seek the saddle point of $\tilde{L}(y, \lambda)$ is 
\begin{subequations}
    \label{eq_primal_dual_dynamics}
    \begin{alignat}{4}
        \label{eq_primal_dual_dynamics_y}
        \dot{y} &=-K^y\frac{\partial \tilde{L}}{\partial y}&&= -K^y (Qy+A^T\lambda)\\
        \label{eq_primal_dual_dynamics_lambda}
        \dot{\lambda} &=K^{\lambda}\frac{\partial \tilde{L}}{\partial \lambda} &&= K^{\lambda} (Ay-BR^{-1}B^T \lambda +Cd)
    \end{alignat}
\end{subequations}
where $K^y$ and $K^{\lambda}$ are positive definite matrices. 

The primal-dual dynamics in \eqref{eq_primal_dual_dynamics} and the overtaking optimal controller in \eqref{eq_optimal_control_trace} motivate the following near-optimal feedback controller.
\begin{equation}
    \label{eq_controller_lambda}
    u = -K(x-y)-R^{-1}B^T \lambda
\end{equation}
where $\bar{x}$ in \eqref{eq_optimal_control_trace} is replaced by $y$, and $\bar{u}$ is replaced by $-R^{-1}B^T\lambda$. 

In practical implementation, \eqref{eq_primal_dual_dynamics_lambda} cannot be realized since the information $d$ is needed. Performing a linear transformation $\mu = \lambda-K^{\lambda} x$ to the controller \eqref{eq_primal_dual_dynamics} and \eqref{eq_controller_lambda}, we can obtain the following closed-loop system 
\begin{subequations}
    \label{eq_closed_loop_system_mu}
    \begin{align}
        \dot{x} & = Ax+Bu+Cd\\
        \label{eq_controller_mu_1}
        \dot{y} & = -K^y\left(Qy+A^T(\mu+K^\mu x)\right)\\
        \label{eq_controller_mu_2}
        \dot{\mu} & = K^{\mu} (A-BK)(x-y)\\
        \label{eq_controller_mu_3}
        u&=-K(x-y)-R^{-1}B^T(\mu+K^\mu x)
    \end{align}
\end{subequations}
where $K^{\mu}$ is copy of $K^{\lambda}$, i.e. $K^{\mu}=K^{\lambda}$. Obviously, the controller in the closed-loop system \eqref{eq_closed_loop_system_mu} does not rely on the information of the disturbance $d$. Thus, the controller in system \eqref{eq_closed_loop_system_mu} is implementable.

\subsection{Asymptotic stabliity}
In this subsection, we prove the asymptotic stability of the closed-loop system \eqref{eq_closed_loop_system_mu}. 

We first write the closed-loop system under the controller in \eqref{eq_primal_dual_dynamics} and \eqref{eq_controller_lambda} as the following form
\begin{subequations}
    \label{eq_closed_loop_system_lambda}
    \begin{align}
        \dot{x} & = Ax-BK(x-y)+BR^{-1}B^T\lambda+Cd\\
        \dot{y} & = -K^y (Qy+A^T\lambda)\\
        \dot{\lambda} & = K^{\lambda} (Ay-BR^{-1}B^T\lambda+Cd)
    \end{align}
\end{subequations}
Because system \eqref{eq_closed_loop_system_mu} can be obtained by the nonsingular linear transformation $\mu=\lambda-K^{\lambda} x$ from \eqref{eq_closed_loop_system_lambda}, the following lemma is straightforward.
\begin{lemma}
    \label{lemma_equivalence_mu_lambda}
    The closed-loop system \eqref{eq_closed_loop_system_mu} is equivalent to \eqref{eq_closed_loop_system_lambda} if the initial state $(x^{\mu}(0), y^{\mu}(0), \mu^{\mu}(0))$ for system \eqref{eq_closed_loop_system_mu} and the initial state $(x^{\lambda}(0), y^{\lambda}(0), \lambda^{\lambda}(0))$ satisfy the conditions in \eqref{eq_initial_state_condition} below
        \begin{align}
            \label{eq_initial_state_condition}
            \begin{split}
                x^{\mu}(0)&=x^{\lambda}(0)\\
                y^{\mu}(0)&=y^{\lambda}(0)\\
                \mu^{\mu}(0)&=\lambda^{\lambda}(0)-K^{\lambda} x^{\lambda}(0)
            \end{split}
        \end{align}
    In this situation, the following results hold for any $t\ge0$
    \begin{align}
        \begin{split}
            x^{\mu}(t)&=x^{\lambda}(t)\\
            y^{\mu}(t)&=y^{\lambda}(t) \\
            \mu^{\mu}(t)&=\lambda^{\lambda}(t)-K^{\lambda} x^{\lambda}(t)
            \end{split}
        \end{align}
\end{lemma}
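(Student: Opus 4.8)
The plan is to treat Lemma~\ref{lemma_equivalence_mu_lambda} as a pure change-of-variables statement: the map $(x,y,\lambda)\mapsto(x,y,\mu)$ with $\mu=\lambda-K^{\lambda}x$ is linear and invertible (its inverse is $\lambda=\mu+K^{\mu}x$, using $K^{\mu}=K^{\lambda}$), so it suffices to show that this map carries every solution of \eqref{eq_closed_loop_system_lambda} to a solution of \eqref{eq_closed_loop_system_mu}, and then to close the argument by uniqueness of solutions. Since both right-hand sides are affine in their states with the constant disturbance $d$ entering linearly, the vector fields are globally Lipschitz, so the initial-value problem for either system has a unique, globally defined solution; this is what makes the matching of trajectories rigorous rather than merely formal. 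I would state this uniqueness at the outset, citing the linearity of the dynamics, so that establishing agreement of the two systems under the coordinate change immediately forces agreement of the trajectories.

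First I would take an arbitrary solution $(x^{\lambda},y^{\lambda},\lambda^{\lambda})$ of \eqref{eq_closed_loop_system_lambda} and set $x^{\mu}:=x^{\lambda}$, $y^{\mu}:=y^{\lambda}$, and $\mu^{\mu}:=\lambda^{\lambda}-K^{\lambda}x^{\lambda}$, which is precisely the relation prescribed in \eqref{eq_initial_state_condition}. Substituting $\lambda=\mu+K^{\mu}x$ into the control law shows $-K(x-y)-R^{-1}B^{T}(\mu+K^{\mu}x)=-K(x-y)-R^{-1}B^{T}\lambda$, i.e. the input used in \eqref{eq_closed_loop_system_mu} coincides with the one in \eqref{eq_controller_lambda}; hence the $\dot{x}$ equations of the two systems agree term by term. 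The $\dot{y}$ equation is immediate for the same reason, since $A^{T}(\mu+K^{\mu}x)=A^{T}\lambda$ reduces \eqref{eq_controller_mu_1} to the $\dot{y}$ equation of \eqref{eq_closed_loop_system_lambda}.

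The only substantive computation is the $\mu$-dynamics. Differentiating the defining relation gives $\dot{\mu}=\dot{\lambda}-K^{\lambda}\dot{x}$, into which I would substitute the $\dot{\lambda}$ and $\dot{x}$ expressions from \eqref{eq_closed_loop_system_lambda}. The point to watch is that the terms carrying $BR^{-1}B^{T}\lambda$ and, crucially, the disturbance contributions $K^{\lambda}Cd$ cancel against one another, leaving only $K^{\lambda}A(y-x)+K^{\lambda}BK(x-y)$, which collapses to a multiple of $(A-BK)(x-y)$ matching \eqref{eq_controller_mu_2}. This cancellation of $Cd$ is exactly the mechanism by which the transformed system becomes disturbance-free, so it is the step I would verify most carefully, with close attention to sign bookkeeping since that is where a slip is easiest. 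Once the three component equations are verified, $(x^{\mu},y^{\mu},\mu^{\mu})$ solves \eqref{eq_closed_loop_system_mu} from the initial data \eqref{eq_initial_state_condition}, and uniqueness yields the stated identities $x^{\mu}(t)=x^{\lambda}(t)$, $y^{\mu}(t)=y^{\lambda}(t)$, and $\mu^{\mu}(t)=\lambda^{\lambda}(t)-K^{\lambda}x^{\lambda}(t)$ for all $t\ge0$. I expect no genuine obstacle here: the result is labelled \emph{straightforward} precisely because the transformation is invertible and affine systems have unique solutions, so the entire content is the $\dot{\mu}$ computation together with the observation that the disturbance term drops out.
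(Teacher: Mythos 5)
Your overall strategy is exactly the paper's (implicit) argument: the paper gives no detailed proof, asserting the lemma is straightforward because \eqref{eq_closed_loop_system_mu} is the image of \eqref{eq_closed_loop_system_lambda} under the invertible affine map $\mu=\lambda-K^{\lambda}x$, and your additions — verifying the three component equations and invoking uniqueness of solutions of a linear ODE to convert agreement of vector fields into agreement of trajectories — are the right way to make that rigorous. The $\dot{x}$ and $\dot{y}$ verifications are fine.

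However, the step you yourself flagged as the delicate one hides a real sign discrepancy. Your (correct) intermediate expression is $\dot{\mu}=K^{\lambda}\bigl(A(y-x)+BK(x-y)\bigr)$, and this collapses to $K^{\lambda}(A-BK)(y-x)=-K^{\mu}(A-BK)(x-y)$, which is the \emph{negative} of \eqref{eq_controller_mu_2} as printed; the cancellation pattern you describe forces $(y-x)$, not $(x-y)$, so the claim that your result "matches" \eqref{eq_controller_mu_2} is not correct for the equations as literally stated. Note also that your computation implicitly used $\dot{x}=Ax-BK(x-y)-BR^{-1}B^{T}\lambda+Cd$, the form consistent with the controller \eqref{eq_controller_lambda}, whereas \eqref{eq_closed_loop_system_lambda} as printed carries $+BR^{-1}B^{T}\lambda$; had you substituted the printed equation literally, the $BR^{-1}B^{T}\lambda$ terms would add rather than cancel. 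Both signs are best read as typos in the paper (the state matrix $\tilde{A}$ in \eqref{eq_tilde_A}, whose first block row is $\begin{bmatrix}A-BK & BK & -BR^{-1}B^{T}\end{bmatrix}$, confirms the intended form of the $\dot{x}$ equation), but a blind verification should surface these inconsistencies explicitly — either by correcting \eqref{eq_controller_mu_2} to $\dot{\mu}=K^{\mu}(A-BK)(y-x)$ or by stating that the equivalence holds only for the sign-corrected systems — rather than asserting a match that the computation does not deliver.
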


%By the equivalence of system \eqref{eq_closed_loop_system_mu} and \eqref{eq_closed_loop_system_lambda} in Lemma \ref{lemma_equivalence_mu_lambda}, system \eqref{eq_closed_loop_system_lambda} have the same properties with system \eqref{eq_closed_loop_system_mu}. 
Therefore, if we can prove the asymptotic stability of system \eqref{eq_closed_loop_system_lambda}, then the asymptotic stability of system \eqref{eq_closed_loop_system_mu} can be immediately obtained. For the same reason, we will focus on analyzing the behaviors of system \eqref{eq_closed_loop_system_lambda} instead of \eqref{eq_closed_loop_system_mu} for convenience in the rest of the paper.

To analyze the asymptotic stability of dynamics \eqref{eq_primal_dual_dynamics}, we introduce the compact form of \eqref{eq_primal_dual_dynamics} as follows.
    \begin{align}
        \label{eq_primal_dual_dynamics_compact}
        \begin{bmatrix}
            \dot{y}\\ \dot{\lambda}
        \end{bmatrix}&=S\begin{bmatrix}
            y\\ \lambda
        \end{bmatrix}+\begin{bmatrix}
            \mathbf{0}\\ K^{\lambda} Cd
        \end{bmatrix}
    \end{align}
where $S$ is the state matrix defined as
\begin{align}
    \label{eq_S_definition}
    S := K^S T
\end{align}
with 
\begin{subequations}
        \begin{align}
            \label{eq_K_S_def}
            K^S&:=\begin{bmatrix}
            K^y&\mathbf{0}\\\mathbf{0}&K^{\lambda}
            \end{bmatrix}\\
            \label{eq_T_def}
            T&:=\begin{bmatrix}
                -Q&-A^T\\ A&-BR^{-1}B^T
            \end{bmatrix}
        \end{align}
    \end{subequations}
The following lemma is given to show the stability of the primal-dual dynamics in \eqref{eq_primal_dual_dynamics_compact}.
\begin{lemma}
    \label{lemma_Hurwitz_property_T_S}
    Suppose $Q$ is positive definite. Then the following results hold:
    \begin{enumerate}
    \item The state matrix $S$ is Hurwitz;
    \item The primal-dual dynamics in \eqref{eq_primal_dual_dynamics} converges asymptotically with $y(t)\to \bar{x}, \lambda(t)\to \bar{\lambda}$ and $ -R^{-1}B^T\lambda(t)\to \bar{u}$ as $t\to \infty$.
    \end{enumerate}
\end{lemma}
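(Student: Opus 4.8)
The plan is to prove the two assertions separately, with the bulk of the work going into showing that $S = K^S T$ is Hurwitz. Since $K^S$ is block-diagonal with positive definite blocks $K^y, K^{\lambda}$, it admits a positive definite square root $(K^S)^{1/2}$, and $S$ is similar to $\tilde{T} := (K^S)^{1/2} T (K^S)^{1/2}$ via $(K^S)^{-1/2} S (K^S)^{1/2} = \tilde{T}$. Hence $S$ and $\tilde{T}$ share the same spectrum, and it suffices to locate the eigenvalues of $\tilde{T}$. First I would compute the symmetric part of $T$: using $Q = Q^T$ and the symmetry of $BR^{-1}B^T$, the off-diagonal blocks $\pm A^T, \pm A$ cancel, giving $T + T^T = \begin{bmatrix} -2Q & \mathbf{0} \\ \mathbf{0} & -2BR^{-1}B^T \end{bmatrix} \preceq \mathbf{0}$. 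Congruence by $(K^S)^{1/2}$ preserves this sign since $\tilde{T} + \tilde{T}^T = (K^S)^{1/2}(T + T^T)(K^S)^{1/2} \preceq \mathbf{0}$, which immediately forces every eigenvalue $\lambda$ of $\tilde{T}$, and therefore of $S$, to satisfy $\operatorname{Re}(\lambda) \le 0$.

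The crux is to upgrade this to strict inequality, i.e., to exclude eigenvalues on the imaginary axis; this is the main obstacle, because the symmetric part is only negative \emph{semi}definite when $BR^{-1}B^T$ is singular. I would argue by contradiction: let $\tilde{T} w = \lambda w$ with $\operatorname{Re}(\lambda) = 0$ and set $u = (K^S)^{1/2} w$, so that $u$ solves the generalized eigenproblem $T u = \lambda (K^S)^{-1} u$. Reading off the real part of the Hermitian form $w^* \tilde{T} w$ gives $\operatorname{Re}(\lambda)\norm{w}^2 = -u_1^* Q u_1 - u_2^* BR^{-1}B^T u_2$ with $u = \text{col}(u_1, u_2)$. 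With $\operatorname{Re}(\lambda) = 0$, positive definiteness of $Q$ forces $u_1 = \mathbf{0}$, while $u_2^* BR^{-1}B^T u_2 = 0$ together with $R \succ 0$ forces $B^T u_2 = \mathbf{0}$. Substituting $u_1 = \mathbf{0}$ and $B^T u_2 = \mathbf{0}$ back into the two block rows of $T u = \lambda (K^S)^{-1} u$ yields $-A^T u_2 = \mathbf{0}$ from the top block and $\mathbf{0} = \lambda (K^{\lambda})^{-1} u_2$ from the bottom. If $\lambda \ne 0$ the bottom row gives $u_2 = \mathbf{0}$ and hence $w = \mathbf{0}$, a contradiction; if $\lambda = 0$ then $A^T u_2 = \mathbf{0}$ and $B^T u_2 = \mathbf{0}$ say $u_2 \in \text{ker}\begin{bmatrix} A^T \\ B^T \end{bmatrix}$, but stabilizability of $(A,B)$ makes $\begin{bmatrix} A & B \end{bmatrix}$ of full row rank, so this kernel is trivial and again $w = \mathbf{0}$. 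Either way we reach a contradiction, so $S$ has no purely imaginary eigenvalue and is Hurwitz.

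For the second assertion I would note that the forced linear system \eqref{eq_primal_dual_dynamics_compact} has Hurwitz state matrix $S$ and constant input, so its trajectory converges to the unique equilibrium $(y^*, \lambda^*)$ fixed by $S\,\text{col}(y^*, \lambda^*) + \text{col}(\mathbf{0}, K^{\lambda} C d) = \mathbf{0}$. Multiplying out and using invertibility of $K^y, K^{\lambda}$, this reduces exactly to $Q y^* + A^T \lambda^* = \mathbf{0}$ and $A y^* - BR^{-1}B^T \lambda^* + C d = \mathbf{0}$. Comparing with the KKT system \eqref{eq_KKT_condition_static_OP} of Lemma \ref{lemma_steady_state_problem} — where \eqref{eq_KKT_condition_static_OP_2} gives $\bar{u} = -R^{-1} B^T \bar{\lambda}$, which substituted into \eqref{eq_KKT_condition_static_OP_3} yields $A\bar{x} - BR^{-1}B^T \bar{\lambda} + Cd = \mathbf{0}$ — shows $(\bar{x}, \bar{\lambda})$ satisfies the same two equations. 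By uniqueness of the equilibrium of a Hurwitz system, $(y^*, \lambda^*) = (\bar{x}, \bar{\lambda})$, so $y(t) \to \bar{x}$ and $\lambda(t) \to \bar{\lambda}$; finally $-R^{-1}B^T \lambda(t) \to -R^{-1}B^T\bar{\lambda} = \bar{u}$ by continuity, completing the proof.
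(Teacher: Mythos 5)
Your proof is correct, and its first part takes a genuinely different route from the paper. The paper proves that $S$ is Hurwitz dynamically: it takes the Lyapunov function $V(x)=\frac{1}{2}x^T(K^S)^{-1}x$ for $\dot{x}=Sx$, computes $\dot{V}=-x_1^TQx_1-x_2^TBR^{-1}B^Tx_2\le 0$, and invokes LaSalle's invariance principle, using the dynamics $\dot{x}_1=-Qx_1-A^Tx_2\equiv\mathbf{0}$ on the invariant set to extract $A^Tx_2\equiv\mathbf{0}$ and then stabilizability to kill $x_2$. You instead argue spectrally: the similarity $S\sim\tilde{T}=(K^S)^{1/2}T(K^S)^{1/2}$, the Bendixson-type bound $\operatorname{Re}(\lambda)\norm{w}^2=\frac{1}{2}w^*(\tilde{T}+\tilde{T}^T)w\le 0$ from the negative semidefinite symmetric part, and an eigenvector contradiction to exclude the imaginary axis. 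The two are structurally parallel — your square-root congruence plays exactly the role of the paper's weighting $(K^S)^{-1}$ in $V$, and your substitution of $u_1=\mathbf{0}$, $B^Tu_2=\mathbf{0}$ back into $Tu=\lambda(K^S)^{-1}u$ to get $A^Tu_2=\mathbf{0}$ mirrors the paper's use of $\dot{x}_1\equiv\mathbf{0}$ — but yours is self-contained linear algebra (no invariance principle needed) and handles complex eigenvectors explicitly, while the paper's is shorter given LaSalle as a standard tool and is the style that extends to nonlinear saddle-point dynamics. One cosmetic remark: your case split on $\lambda\ne 0$ versus $\lambda=0$ is unnecessary, since in both cases you already have $A^Tu_2=\mathbf{0}$ and $B^Tu_2=\mathbf{0}$, so stabilizability of $(A,B)$ finishes the argument uniformly. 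Your proof of the second assertion (identify the unique equilibrium via the KKT system, then conclude by the Hurwitz property and continuity of $-R^{-1}B^T\lambda$) is essentially identical to the paper's.
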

\begin{proof}
    To prove the first assertion, consider a system $\dot{x}=Sx$ with $x=\text{col}(x_1, x_2), x_1, x_2\in\mathbb{R}^n$. Take the Lyapunov function $V(x) = \frac{1}{2} x^T (K^{S})^{-1} x$, and we have
    \begin{align}
        \dot{V}=-x_1^TQx_1-x_2^TBR^{-1}B&^Tx_2\le 0
    \end{align}
    By $\dot{V}(x)=0$, we have $x_1 \equiv \mathbf{0},B^Tx_2 \equiv\mathbf{0}$. Since $ \dot{x}_1 =-Qx_1-A^Tx_2\equiv\mathbf{0}$, we can know that $A^Tx_2\equiv \mathbf{0}$. Therefore, $\begin{bmatrix}
        B^T\\A^T
    \end{bmatrix}x_2\equiv \mathbf{0}$. Since $(A, B)$ is stabilizable, $\begin{bmatrix}
        B^T\\A^T
    \end{bmatrix}$ is of full column rank, from which we have $x_2\equiv \mathbf{0}$. Therefore, the system is asymptotically stable by LaSalle's invariance principle \cite[Theorem 4.4]{2002Nonlinear}, which implies the matrix $S$ is Hurwitz.
    
    % The Hurwitz property of $S$ can be proved in a similar way as $T$ by constructing a Lyapunov function as $V=\frac{1}{2}x^T (K^S)^{-1} x$, and we omit it here.

    Substituting $(\bar{x}, \bar{\lambda})$ into the righthand side of \eqref{eq_primal_dual_dynamics_compact}, we have
    \begin{align*}
        S\begin{bmatrix}
            \bar{x}\\ \bar{\lambda}
        \end{bmatrix}+\begin{bmatrix}
            \mathbf{0}\\ K^{\lambda} Cd
        \end{bmatrix}&=\begin{bmatrix}
            -K^y(Q\bar{x}+A^T\bar{\lambda})\\
            K^{\lambda}(A\bar{x}-BR^{-1}B^T\bar{\lambda}+Cd)
        \end{bmatrix}\\
        &=\begin{bmatrix}
            \mathbf{0}\\ \mathbf{0}
        \end{bmatrix}
    \end{align*}
    where the second equality holds due to the conditions in \eqref{eq_KKT_condition_static_OP}. Therefore, we can know that $(\bar{x}, \bar{\lambda})$ is the equilibrium of the primal-dual system. By the Hurwitz property of $S$, $y(t)$ and $\lambda(t)$ converge to $\bar{x}$ and $\bar{\lambda}$, respectively. Then we can know $-R^{-1}B^T\lambda(t)$ converges to $\bar{u}$ by the fact $\bar{u}=-R^{-1}B^T\bar{\lambda}$ derived in \eqref{eq_KKT_condition_static_OP_2}. This completes the proof.
\end{proof}

\begin{remark}
    In Lemma \ref{lemma_Hurwitz_property_T_S}, the positive definiteness of $Q$ is a sufficient condition for the asymptotic stability of the primal-dual dynamics. Even if $Q$ is \textit{positive semidefinite}, the analysis to follow in the rest of the paper still holds as long as the primal-dual dynamics is asymptotically stable under other conditions. The case in Section \ref{section_case_study} will provide such an example, where $Q$ is semidefinite while the primal-dual dynamics is asymptotically stable. Generally speaking, the asymptotic stability can be satisfied by many practical examples, which has been investigated by many existing works, e.g. \cite{CortsJNS2016Distributed, ChangCSLetter2019Saddleflow, bianchinTCNS2022Timevarying, TangSIAM2022Running}. Since the primal-dual dynamics \eqref{eq_primal_dual_dynamics_compact} is a linear system, the asymptotic stability is equivalent to the Hurwitz property of $S$. Therefore, we make the following assumption to describe the asymptotic stability of the primal-dual dynamics.
\end{remark}

\begin{assumption}
    \label{assumption_hurwitz_S}
    The matrix $S$ is Hurwitz.
\end{assumption}

Then we are ready for the asymptotic stability of system \eqref{eq_closed_loop_system_lambda}, which indicates the stability of the implementable closed-loop system \eqref{eq_closed_loop_system_mu} by the equivalence property in Lemma \eqref{lemma_equivalence_mu_lambda}.

\begin{theorem}
    \label{theorem_asymptotic_stablity}
    Suppose Assumption \ref{assumption_system_property} and \ref{assumption_hurwitz_S} hold. The closed-loop system \eqref{eq_closed_loop_system_lambda} is asymptotically stable with the unique equilibrium of the system as $(x,y,\lambda)=(\bar{x},\bar{x},\bar{\lambda})$.
\end{theorem}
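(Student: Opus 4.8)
The plan is to exploit the block-triangular (cascade) structure of \eqref{eq_closed_loop_system_lambda}. The key observation is that the $(y,\lambda)$-subsystem is \emph{autonomous}: the equations for $\dot y$ and $\dot\lambda$ do not involve $x$ and are exactly the primal-dual dynamics \eqref{eq_primal_dual_dynamics_compact} with state matrix $S$. Substituting the controller \eqref{eq_controller_lambda} into the plant, the remaining equation reads $\dot x=(A-BK)x+BKy-BR^{-1}B^T\lambda+Cd$, an affine system in $x$ with state matrix $A-BK$ that is merely \emph{driven} by $(y,\lambda)$. I would therefore reduce the asymptotic-stability question to two independent spectral facts rather than seeking a joint Lyapunov function for the coupled dynamics.

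First I would verify that $(\bar x,\bar x,\bar\lambda)$ is an equilibrium. For the $(y,\lambda)$-block this is already established in the proof of Lemma \ref{lemma_Hurwitz_property_T_S}, where \eqref{eq_KKT_condition_static_OP_1} and \eqref{eq_KKT_condition_static_OP_3} are used to annihilate the right-hand side. For the $x$-equation, substituting $x=y=\bar x$ and $\lambda=\bar\lambda$, using $\bar u=-R^{-1}B^T\bar\lambda$ from \eqref{eq_KKT_condition_static_OP_2}, and invoking feasibility $A\bar x+B\bar u+Cd=\mathbf 0$ from \eqref{eq_KKT_condition_static_OP_3}, makes the right-hand side vanish as well.

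Next I would pass to the error coordinates $\tilde x=x-\bar x$, $\tilde y=y-\bar x$, $\tilde\lambda=\lambda-\bar\lambda$. Since \eqref{eq_closed_loop_system_lambda} is affine, the error dynamics are linear and homogeneous, $\dot{\tilde z}=M\tilde z$, with
\[ M=\begin{bmatrix} A-BK & \star \\ \mathbf 0 & S \end{bmatrix}, \]
where the lower-left block is zero precisely because the $(y,\lambda)$-subsystem is autonomous, and the off-diagonal block $\star$ has no effect on the spectrum. Hence the eigenvalues of $M$ are the union of those of $A-BK$ and of $S$. By the algebraic Riccati equation theory recalled in Section \ref{subsection_preliminary}, $A-BK=A-BR^{-1}B^TP^*$ is Hurwitz, and by Assumption \ref{assumption_hurwitz_S} the matrix $S$ is Hurwitz; therefore $M$ is Hurwitz.

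Finally, Hurwitzness of $M$ yields global exponential, and hence asymptotic, stability of the error system; and since $M$ is nonsingular, the affine system \eqref{eq_closed_loop_system_lambda} has a unique equilibrium, which must be the point $(\bar x,\bar x,\bar\lambda)$ identified above. I do not expect a substantial obstacle: the entire content is the recognition that the auxiliary/dual dynamics decouple from $x$, reducing stability to a triangular-block spectrum argument whose two diagonal blocks are already known to be Hurwitz. The only point requiring care is the equilibrium computation for the $x$-component, where the KKT relations \eqref{eq_KKT_condition_static_OP_1}--\eqref{eq_KKT_condition_static_OP_3} must be combined in the correct way.
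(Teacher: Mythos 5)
Your proposal is correct and takes essentially the same approach as the paper: both arguments rest on the block upper-triangular structure of the closed-loop state matrix, identify its spectrum as the union of the spectra of $A-BK$ and $S$, and invoke the Hurwitz property of each (from the algebraic Riccati equation under Assumption \ref{assumption_system_property} and from Assumption \ref{assumption_hurwitz_S}) to conclude asymptotic stability and uniqueness of the equilibrium $(\bar{x},\bar{x},\bar{\lambda})$. Your shift to error coordinates and explicit KKT-based verification of the equilibrium simply spell out what the paper leaves as ``easily verified.''
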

\begin{proof}
    The state matrix of \eqref{eq_closed_loop_system_lambda} is 
    \begin{subequations}
        \label{eq_tilde_A}
        \begin{align}
                \tilde{A}&=\begin{bmatrix}
                        A-BK & BK&-BR^{-1}B^T\\
                        \mathbf{0}&-K^{y}Q&-K^y A^T\\
                        \mathbf{0}&K^{\lambda}A&K^{\lambda}BR^{-1}B^T
                    \end{bmatrix}\\
                &=\begin{bmatrix}
                    A-BK & N\\ \mathbf{0}&S
                \end{bmatrix}
        \end{align}
    \end{subequations}
    where $N$ is defined as $N:=\begin{bmatrix}
        BK&-BR^{-1}B^T
    \end{bmatrix}$. The eigenvalues of the matrix $\tilde{A}$ are the collection of the eigenvalues of $A-BK$ and $S$ by the upper triangular form of $\tilde{A}$. Since $A-BK$ and $S$ are both Hurwitz, all the eigenvalues of $\tilde{A}$ have negative real parts, which means $\tilde{A}$ is Hurwitz and the system is asymptotically stable. Then it can be easily verified that $(\bar{x},\bar{x},\bar{\lambda})$ is the unique equilibrium of system \eqref{eq_closed_loop_system_lambda}. 
\end{proof}

Theorem \ref{theorem_asymptotic_stablity} shows that the near-optimal controller can steer the system to the optimal steady state although the disturbance is unknown, i.e., realizing \textbf{P1}. For \textbf{P2}, we will analyze the performance gap of the system under the overtaking optimal controller and the proposed optimal controller in the following section.

\section{Transient Performance Analysis}
In this section, we first introduce the transient performance index for an asymptotically stable LTI system in Section \ref{subsection_integral_index}. Then, the system performance indices under the overtaking optimal controller and the proposed near-optimal controller are derived in Section \ref{subsection_optimal_performance} and \ref{subsection_near_optimal_performance}, respectively. Section \ref{subsection_discussion} presents discussions about the transient system performances under the two controllers.

\subsection{Performance index}
\label{subsection_integral_index}
The transient performance index $J_{\infty}$ in \eqref{eq_performance_index_J} is an integral over the quadratic function $f(x,u)=\frac{1}{2}(x^TQx+u^TRu)$. Because $u$ is the affine function of $x$ in controllers \eqref{eq_optimal_control_trace} and \eqref{eq_controller_mu_3}, $f(x, u)$ will be quadratic polynomials about the state variables. Then the performance indices will be the integrals over the quadratic polynomials as we will see in Section \ref{subsection_optimal_performance} and \ref{subsection_near_optimal_performance}. To this end, we first investigate the general linear system in \eqref{eq_general_linear_system} with two types of performance index $\mathcal{J}_{\infty}^{\mathcal{Q}}, \mathcal{J}_{\infty}^{\alpha}$ in \eqref{eq_performance_type_Q}, \eqref{eq_performance_type_alpha}, which will be utilized in the computation of the system performance indices under the two controllers.
\begin{subequations}
\begin{align}
    \label{eq_general_linear_system}
    \dot{\psi} &= \mathcal{A}\psi + \beta\\
    \label{eq_performance_type_Q}
    \mathcal{J}_{\infty}^{\mathcal{Q}} &:= \int_{0}^{\infty} \psi^T\mathcal{Q}\psi dt\\
    \label{eq_performance_type_alpha}
    \mathcal{J}_{\infty}^{\alpha}&:=\int_{0}^{\infty} \alpha^T \psi dt
\end{align}    
\end{subequations}
where $\psi\in \mathbb{R}^{l}$ is the state variable for some $l>0$, $\beta \in \mathbb{R}^{l}, \alpha\in \mathbb{R}^{l}$ are constant vectors, $\mathcal{A}\in \mathbb{R}^{l\times l}$ is Hurwitz, and $\mathcal{Q}\in \mathbb{R}^{l\times l}$ is positive semidefinite. 

Since the $\mathcal{A}$ is Hurwitz, the state $\psi(t)$ in system \eqref{eq_general_linear_system} will converge to $\mathcal{A}^{-1}\beta$ as time goes to infinity. Therefore, $\mathcal{J}_{\infty}^{\mathcal{Q}}$ and $\mathcal{J}_{\infty}^{\alpha}$ can be unbouned. For convenience, we define the unbounded integral over the infinite time horizon as follows.
\begin{definition}
    \label{definition_unbounded_integral}
    If $\int_0^{\infty}f(t)dt$ is unbounded while $\int_{0}^{\infty} (f(t)-a)dt$ is bounded, the unbounded integral $\int_{0}^{\infty} f(t)dt$ is represented as $\int_{0}^{\infty} (f(t)-a)dt+a\intoninf$, where the notation $\intoninf$ can be treated as the unbounded integral $\int_0^{\infty} 1dt$. 
\end{definition}

Then, we use the notation of $\intoninf$ in Definition \ref{definition_unbounded_integral} to study the performance indices $\mathcal{J}_{\infty}^{\mathcal{Q}}$ and $\mathcal{J}_{\infty}^{\alpha}$. We have the following result.
\begin{lemma}
    \label{lemma_universal_performance_computation}
   For system \eqref{eq_general_linear_system} and $\mathcal{J}_{\infty}^{\mathcal{Q}}$ and $\mathcal{J}_{\infty}^{\alpha}$ defined in \eqref{eq_performance_type_Q} and \eqref{eq_performance_type_alpha}, we have the following results.
    \begin{enumerate}
        \item For the quadratic cost $\mathcal{J}_{\infty}^{\mathcal{Q}}$, it is computed by
        \begin{align}
            \label{eq_J_Q}
            \begin{split}
                \mathcal{J}_{\infty}^{\mathcal{Q}}&= \norm{\psi_0+\mathcal{A}^{-1}\beta}_{\mathcal{P}}^2\\
                &\quad +2\beta^T\mathcal{A}^{-T}\mathcal{Q}\mathcal{A}^{-1}(\psi_0+\mathcal{A}^{-1}\beta)\\
                &\quad +\norm{\mathcal{A}^{-1}\beta}_{\mathcal{Q}}^2 \intoninf
            \end{split}
        \end{align}
        where $\psi_0$ is the initial state at $t=0$, $\mathcal{P}$ is the solution to the Lyapunov equation $\mathcal{P}\mathcal{A}+\mathcal{A}^T\mathcal{P}+\mathcal{Q}=\mathbf{0}$.
        \item For the linear cost $\mathcal{J}_{\infty}^{\alpha}$, it is computed by
        \begin{align}
            \label{eq_J_alpha}
            \begin{split}
                \mathcal{J}_{\infty}^{\alpha}&=-\alpha^T \mathcal{A}^{-1}(\psi_0+\mathcal{A}^{-1}\beta)-\alpha^T \mathcal{A}^{-1}\beta \intoninf
            \end{split}
        \end{align}
    \end{enumerate}
\end{lemma}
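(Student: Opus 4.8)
The plan is to compute each of the two integrals by first shifting the state to its equilibrium, then separating the convergent part from the genuinely divergent part, and finally expressing the divergent part via the notation $\intoninf$ from Definition \ref{definition_unbounded_integral}. Since $\mathcal{A}$ is Hurwitz, the steady state is $\psi_\infty = -\mathcal{A}^{-1}\beta$, so I would introduce the shifted variable $z(t) := \psi(t)-\psi_\infty = \psi(t)+\mathcal{A}^{-1}\beta$, which satisfies the homogeneous equation $\dot{z} = \mathcal{A}z$ with $z(0) = \psi_0 + \mathcal{A}^{-1}\beta$. Hence $z(t) = e^{\mathcal{A}t}(\psi_0+\mathcal{A}^{-1}\beta)$ decays to zero, and this decomposition $\psi = z + \psi_\infty$ is what isolates the bounded transient from the constant tail.

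For the quadratic cost, I would substitute $\psi = z - \mathcal{A}^{-1}\beta$ into $\psi^T\mathcal{Q}\psi$ and expand into three terms: $z^T\mathcal{Q}z$, the cross term $-2z^T\mathcal{Q}\mathcal{A}^{-1}\beta$, and the constant $\beta^T\mathcal{A}^{-T}\mathcal{Q}\mathcal{A}^{-1}\beta = \norm{\mathcal{A}^{-1}\beta}_{\mathcal{Q}}^2$. The constant term integrates to $\norm{\mathcal{A}^{-1}\beta}_{\mathcal{Q}}^2\,\intoninf$ by definition, matching the last line of \eqref{eq_J_Q}. For the first term, I would use the standard LQR identity $\int_0^\infty z^T\mathcal{Q}z\,dt = z(0)^T\mathcal{P}z(0) = \norm{\psi_0+\mathcal{A}^{-1}\beta}_{\mathcal{P}}^2$, where $\mathcal{P}$ solves the Lyapunov equation (this is exactly the representation $\mathcal{P}=\int_0^\infty e^{\mathcal{A}^Tt}\mathcal{Q}e^{\mathcal{A}t}dt$ recalled in Section \ref{subsection_preliminary}). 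For the cross term, I would compute $\int_0^\infty z(t)\,dt = \int_0^\infty e^{\mathcal{A}t}z(0)\,dt = -\mathcal{A}^{-1}z(0)$, so the cross term contributes $-2(-\mathcal{A}^{-1}z(0))^T\mathcal{Q}\mathcal{A}^{-1}\beta = 2\beta^T\mathcal{A}^{-T}\mathcal{Q}\mathcal{A}^{-1}(\psi_0+\mathcal{A}^{-1}\beta)$, the middle line of \eqref{eq_J_Q}.

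For the linear cost the argument is shorter: writing $\alpha^T\psi = \alpha^T z - \alpha^T\mathcal{A}^{-1}\beta$, the constant piece gives $-\alpha^T\mathcal{A}^{-1}\beta\,\intoninf$, and the transient piece integrates via $\int_0^\infty \alpha^T z\,dt = \alpha^T(-\mathcal{A}^{-1})z(0) = -\alpha^T\mathcal{A}^{-1}(\psi_0+\mathcal{A}^{-1}\beta)$, which together reproduce \eqref{eq_J_alpha}.

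**The main obstacle** is not any single hard estimate but rather the bookkeeping needed to justify the $\intoninf$ manipulations rigorously under Definition \ref{definition_unbounded_integral}: I must verify that after subtracting the constant tail (the value $\norm{\mathcal{A}^{-1}\beta}_{\mathcal{Q}}^2$ for the quadratic case and $-\alpha^T\mathcal{A}^{-1}\beta$ for the linear case) the remaining integrand is genuinely integrable on $[0,\infty)$, so that the split into a finite part plus a multiple of $\intoninf$ is well-defined. This reduces to confirming that $z(t)$ and hence $z^T\mathcal{Q}z$, $z^T\mathcal{Q}\mathcal{A}^{-1}\beta$, and $\alpha^Tz$ all decay exponentially, which follows immediately from $\mathcal{A}$ being Hurwitz; the only care required is that the cross and linear terms decay like $\norm{z(t)}$ rather than $\norm{z(t)}^2$, but exponential decay handles both. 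The identities $\int_0^\infty e^{\mathcal{A}t}dt = -\mathcal{A}^{-1}$ and $\int_0^\infty e^{\mathcal{A}^Tt}\mathcal{Q}e^{\mathcal{A}t}dt = \mathcal{P}$ are the two computational workhorses, both valid precisely because $\mathcal{A}$ is Hurwitz.
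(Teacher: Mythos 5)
Your proof is correct and follows essentially the same route as the paper: both rest on the explicit solution $\psi(t)=e^{\mathcal{A}t}(\psi_0+\mathcal{A}^{-1}\beta)-\mathcal{A}^{-1}\beta$ (your shifted variable $z$ is precisely this decomposition), an expansion into quadratic, cross, and constant terms, and the two identities $\int_0^\infty e^{\mathcal{A}^Tt}\mathcal{Q}e^{\mathcal{A}t}dt=\mathcal{P}$ and $\int_0^\infty e^{\mathcal{A}t}dt=-\mathcal{A}^{-1}$. Your presentation via the homogeneous equation $\dot z=\mathcal{A}z$ is slightly tidier bookkeeping than the paper's direct substitution, and in fact avoids a sign slip that appears in the paper's intermediate cross-term display (though the paper's final formula is correct), but the argument is the same.
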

\begin{proof}
    The solution to $\dot{\psi}=\mathcal{A}\psi+\beta$ is 
    \begin{equation}
        \label{eq_psi_t}
        \psi(t)=e^{\mathcal{A}t}(\psi_0+\mathcal{A}^{-1}\beta)-\mathcal{A}^{-1}\beta
    \end{equation}

    Substituting $\psi(t)$ into $\int_{0}^{\infty} \psi^T\mathcal{Q}\psi dt$, we have
    \begin{subequations}\label{General_JQ}
        \begin{align}
            &\quad \int_{0}^{\infty} \psi^T\mathcal{Q}\psi dt\nonumber\\
            &= \int_{0}^{\infty} \big((\psi_0^T +\beta^T\mathcal{A}^{-T})e^{\mathcal{A}^Tt} -\beta^T\mathcal{A}^{-T}\big)\mathcal{Q}\big(e^{\mathcal{A}t}(\psi_0 \nonumber\\
            &\qquad \qquad \qquad \qquad+ \mathcal{A}^{-1}\beta)-\mathcal{A}^{-1}\beta\big) dt\nonumber\\
            &=\int_{0}^{\infty} (\psi_0^T +\beta^T\mathcal{A}^{-T})e^{\mathcal{A}^Tt} \mathcal{Q}e^{\mathcal{A}t}(\psi_0+\mathcal{A}^{-1}\beta) dt\nonumber\\
            &\quad \quad + \int_{0}^{\infty} 2\beta^T\mathcal{A}^{-T}\mathcal{Q} e^{\mathcal{A}t} (\psi_0+\mathcal{A}^{-1}\beta)dt \nonumber\\
            &\quad \quad + \beta^T\mathcal{A}^{-T}\mathcal{Q}\mathcal{A}^{-1}\beta  \intoninf\nonumber\\
            &=\left(\psi_0^T +\beta^T\mathcal{A}^{-T}\right) \int_{0}^{\infty} e^{\mathcal{A}^Tt} \mathcal{Q}e^{\mathcal{A}t} dt \left(\psi_0+\mathcal{A}^{-1}\beta\right)\\
            &\quad \quad + 2\beta^T\mathcal{A}^{-T}\mathcal{Q} \int_{0}^{\infty} e^{\mathcal{A}t} dt\left(\psi_0+\mathcal{A}^{-1}\beta\right) \\
            &\quad \quad+ \beta^T\mathcal{A}^{-T}\mathcal{Q}\mathcal{A}^{-1}\beta \intoninf
        \end{align}
    \end{subequations}
    Recalling the preliminaries of the Lyapunov equation in Section \ref{subsection_preliminary}, we have 
    \begin{align}
        \label{eq_int_e_At_Q_e_At}
        \int_0^{\infty} e^{\mathcal{A}^Tt}\mathcal{Q}e^{\mathcal{A}t}dt=\mathcal{P}
    \end{align}
    Since $\mathcal{A}$ is Hurwitz, the following result holds.
    \begin{align}
    \label{eq_int_e_At}
        \int_{0}^{\infty}e^{\mathcal{A}t}dt=\mathcal{A}^{-1}e^{\mathcal{A}t}|_0^{\infty}=-\mathcal{A}^{-1}
    \end{align}
    Substituting \eqref{eq_int_e_At_Q_e_At} and \eqref{eq_int_e_At} into the result for $\int_{0}^{\infty} \psi^T\mathcal{Q}\psi dt$, the first assertion holds.

    By substituting $\psi(t)$ in \eqref{eq_psi_t} into $\int_0^{\infty} \alpha^T \psi dt$, we can easily obtain the second result in the lemma.
\end{proof}

Lemma \ref{lemma_universal_performance_computation} shows that both $\mathcal{J}_{\infty}^{\mathcal{Q}}$ and $\mathcal{J}_{\infty}^{\mathcal{\alpha}}$ have unbounded part (the cost related to $\intoninf$) and bounded part (the cost unrelated to $\intoninf$). When the coefficients of $\intoninf$ are not $0$, the index becomes unbounded. Then we will use Lemma \ref{lemma_universal_performance_computation} to compute the performance of system performance indices under the overtaking optimal controller and the near-optimal controller.

% \subsection{Performance gap between overtaking optimal and near-optimal controllers}
% \label{subsection_performance_gap}
\subsection{Performance under the overtaking optimal controller}
\label{subsection_optimal_performance}
Recalling the closed-loop system \eqref{eq_closed_loop_system_A_minus_BK} under the overtaking optimal controller \eqref{eq_optimal_control_trace}, it can be rewritten as
\begin{align}
    \label{eq_closed_loop_system_optiaml}
    \dot{x}=Mx-M\bar{x}
\end{align}
where $M$ is defined as $M:=A-BK$.

Substituting $u^*(t)=-K(x-\bar{x})+\bar{u}$ into the optimal performance index $J_{\infty}^*(x_0,u^*(t))$, we have
\begin{subequations}
    \label{eq_J_star}
    \begin{align}
        J_{\infty}^*&=\frac{1}{2}\int_{0}^{\infty} x^TQx+(u^*)^TRu^*dt \nonumber \\
        &=\frac{1}{2}\int_{0}^{\infty} (x^TQx+(K(x-\bar{x})-\bar{u})^TR(K(x-\bar{x})-\bar{u}))dt\nonumber\\
        &=\frac{1}{2}\int_{0}^{\infty} \left(x^T(Q+K^TRK)x-2(K\bar{x}+\bar{u})RKx \right.\nonumber\\
        & \quad\quad \quad \quad+\left. (K\bar{x}+\bar{u})^TR(K\bar{x}+\bar{u})\right)dt\nonumber\\
        \label{eq_J_star_two_order}
        &=\frac{1}{2}\int_{0}^{\infty} x^T\left(Q+K^TRK\right)x dt\\
        \label{eq_J_star_one_order}
        &\quad -\int_0^{\infty} \left(K^TR\left(K\bar{x}+\bar{u}\right)\right)^T x dt\\
        \label{eq_J_star_zero_order}
        & \quad +\frac{1}{2}\int_{0}^{\infty} (K\bar{x}+\bar{u})^TR(K\bar{x}+\bar{u})dt
    \end{align}
\end{subequations}
where \eqref{eq_J_star_two_order} can be obtained by applying $\mathcal{J}_{\infty}^{\mathcal{Q}}$ in \eqref{eq_J_Q} with $\mathcal{Q}=Q+K^TRK$, \eqref{eq_J_star_one_order} can be obtained by applying $\mathcal{J}_{\infty}^{\alpha}$ in \eqref{eq_J_alpha} with $\alpha=K^T R(K\bar{x}+\bar{u})$, and \eqref{eq_J_star_zero_order} can be rewritten as $\frac{1}{2} (K\bar{x}+\bar{u})^TR(K\bar{x}+\bar{u})\intoninf$.

To apply the result of \eqref{eq_J_Q} to \eqref{eq_J_star_two_order}, the associated Lyapunov equation $PM+M^TP+(Q+K^TRK)=\mathbf{0}$ should be solved, for which we give the following result. 
\begin{lemma}
    \label{lemma_Lyapunov_Riccati_equation}
    The Lyapunov equation $PM+M^TP+(Q+K^TRK)=\mathbf{0}$ has a unique solution, and the solution is the same as the solution $P^*$ to the Riccati equation in \eqref{eq_Riccati}.
\end{lemma}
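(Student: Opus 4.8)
The plan is to settle existence and uniqueness first, then to verify by direct substitution that the Riccati solution $P^*$ is in fact the unique Lyapunov solution.

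For existence and uniqueness I would invoke the Lyapunov-equation preliminaries from Section \ref{subsection_lyapunov_and_ARE_equation}. The matrix $M = A - BK$ is Hurwitz by the algebraic Riccati equation properties established there, since $K = R^{-1}B^TP^*$ with $P^*$ the stabilizing solution of \eqref{eq_Riccati}. The cost matrix $Q + K^TRK$ is positive semidefinite, being the sum of the positive semidefinite $Q$ and $K^TRK$, where the latter is positive semidefinite because $R$ is positive definite. Hence the equation $PM + M^TP + (Q + K^TRK) = \mathbf{0}$ falls exactly into the setting guaranteeing a unique solution, explicitly given by $\int_0^{\infty} e^{M^Tt}(Q+K^TRK)e^{Mt}dt$.

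It then remains to show that $P^*$ is this solution, which I would do by expanding $M = A - BK$ to obtain
\[ P^*M + M^TP^* = P^*A + A^TP^* - P^*BK - K^TB^TP^*. \]
Using $K = R^{-1}B^TP^*$ together with the symmetry of $P^*$ and of $R^{-1}$, both cross terms $P^*BK$ and $K^TB^TP^*$ equal $P^*BR^{-1}B^TP^*$, and $K^TRK$ equals $P^*BR^{-1}B^TP^*$ as well. Adding $Q + K^TRK$ therefore collapses the three copies of $P^*BR^{-1}B^TP^*$ into a single one with the correct sign, leaving
\[ P^*A + A^TP^* - P^*BR^{-1}B^TP^* + Q, \]
which vanishes precisely by the algebraic Riccati equation \eqref{eq_Riccati}. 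Thus $P^*$ satisfies the Lyapunov equation, and by the uniqueness established above it \emph{is} the solution.

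The argument is essentially mechanical, so I do not expect a genuine obstacle; the only point requiring care is the bookkeeping of the three terms $-P^*BK$, $-K^TB^TP^*$, and $+K^TRK$. One must confirm that they sum to $-P^*BR^{-1}B^TP^*$ rather than to some other multiple of $P^*BR^{-1}B^TP^*$, so that the expression reduces exactly to \eqref{eq_Riccati} and not to a perturbed version of it; this is what makes the Lyapunov cost matrix $Q + K^TRK$ the natural one for which the identification goes through.
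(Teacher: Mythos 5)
Your proof is correct and follows essentially the same route as the paper: verify via the substitution $K=R^{-1}B^TP^*$ that $P^*M+M^TP^*+Q+K^TRK$ reduces to the left-hand side of the Riccati equation \eqref{eq_Riccati}, then conclude by uniqueness of the Lyapunov solution for the Hurwitz matrix $M$. The paper performs the identical algebra in the reverse direction (starting from the Riccati expression and rewriting it in terms of $M$), and you are merely somewhat more explicit about why uniqueness holds; there is no substantive difference.
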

\begin{proof}
    Recall that $K=R^{-1}B^TP^*$ and we can know 
    \begin{align*}
        &\quad P^*A+A^TP^*-P^*BR^{-1}B^TP^*+Q\\
        &= P^*(A-BK)+(A-BK)^TP^*+P^*BK+K^TB^TP^*\\
        &\quad -P^*BR^{-1}B^TP^*+Q\\
        &=P^*M+M^T P^*+P^*BR^{-1}B^TP^*+Q\\
        &=P^*M+M^T P^*+K^TRK+Q
    \end{align*}
    Since $P^*$ is the solution to the Riccati equation, we have $P^*M+M^T P^*+K^TRK+Q=\mathbf{0}$, i.e. $P^*$ is the solution to the associated Lyapunov equation. Due to the uniqueness of the Lyapunov equation, the result holds.
\end{proof}

Then we can compute $J_{\infty}^*$ in \eqref{eq_J_star} as follows
\begin{subequations}
    \begin{align}
        J_{\infty}^*&=\frac{1}{2} \left( (x_0-\bar{x})^TP^*(x_0-\bar{x})\right.\nonumber\\
        &\quad\quad\quad -2\bar{x}^T (Q+K^TRK)(A-BK)^{-1}(x_0-\bar{x})\nonumber\\
        &\quad\quad\quad +\left. \bar{x}^T (Q+K^TRK)\bar{x}\intoninf\right)\nonumber\\
        &\quad +\left((K\bar{x}+\bar{u})^T RK(A-BK)^{-1}(x_0-\bar{x})\right.\nonumber\\
        & \quad \quad\quad \left.+ (K\bar{x}+\bar{u})^TRK(-\bar{x})\intoninf\right)\nonumber\\
        &\quad + \frac{1}{2} (K\bar{x}+\bar{u})^TR(K\bar{x}+\bar{u})\intoninf\nonumber\\
        \label{eq_J_star_final_1}
        &=\frac{1}{2}(x_0-\bar{x})^T P^* (x_0-\bar{x})\\
        \label{eq_J_star_final_2}
        &\quad +(-\bar{x}^T Q+\bar{u}^T RK)(A-BK)^{-1}(x_0-\bar{x})\\
        \label{eq_J_star_final_3}
        &\quad +\frac{1}{2}(\bar{x}^TQ\bar{x}+\bar{u}^TR\bar{u})\intoninf
    \end{align}
\end{subequations}
where the first equality is obtained by applying the results of $\mathcal{J}_{\infty}^{\mathcal{Q}}$ and $\mathcal{J}_{\infty}^{\alpha}$ in Lemma \ref{lemma_universal_performance_computation} and \ref{lemma_Lyapunov_Riccati_equation}. The unbounded part of $J_{\infty}^*$ is $\frac{1}{2}(\bar{x}^TQ\bar{x}+\bar{u}^TR\bar{u})\intoninf$, where the coefficient $\frac{1}{2}(\bar{x}^TQ\bar{x}+\bar{u}^TR\bar{u})$ is the same as the optimal steady-state cost. This coincides with the fact that $(x(t),u(t))$ converges to the optimal steady-state solution $(\bar{x}, \bar{u})$ as proved in Lemma \ref{lemma_property_of_overtaking_optimal_controller}, and thus the unbound part of the integral cost $J^*_{\infty}$ increase in the speed of the optimal steady-state cost.

\subsection{Performance under the near-optimal controller}
\label{subsection_near_optimal_performance}
Lemma \ref{lemma_equivalence_mu_lambda} shows that the implementable closed-loop system \eqref{eq_closed_loop_system_mu} is equivalent to system \eqref{eq_closed_loop_system_lambda}. Therefore, we focus on the performance analysis of system \eqref{eq_closed_loop_system_lambda} in this subsection for convenience. The closed-loop system \eqref{eq_closed_loop_system_lambda} can be written as the following compact form.
\begin{align}
    \label{eq_dot_z}
    \dot{z} = \tilde{A}z+\tilde{b}
\end{align}
where $z:=\text{col}(x,y,\lambda)$,  $\tilde{A}$ is the state matrix defined in \eqref{eq_tilde_A}, and $\tilde{b}:=\text{col}(Cd,\mathbf{0}_n,K^{\lambda}Cd)\in \mathbb{R}^{3n}$.

System performance index $J_{\infty}^p$ under the proposed controller \eqref{eq_controller_lambda} can be rewritten as follows.
    \begin{align}
        J_{\infty}^p&:= \frac{1}{2}\int_{0}^{\infty} (x^TQx+u^TRu)dt\nonumber\\
        &= \frac{1}{2} \int_{0}^{\infty}\left(x^TQx+ \norm{\big(K(x-y)+R^{-1}B^T \lambda\big)^T}_R^2 \right)dt\nonumber\\
        &=\frac{1}{2}\int_0^{\infty}\bigg(\norm{x}_{Q+K^TRK}^2+\norm{y}_{K^TRK}^2+\norm{\lambda}^2_{BR^{-1}B^T}\nonumber\\
        &\quad\quad\quad\quad\quad - 2x^TK^TRKy+2x^TK^T B^T\lambda\nonumber\\
        &\quad\quad\quad\quad\quad-2y^TK^T B^T\lambda\bigg)dt\nonumber\\
        \label{eq_J_p}
        &= \frac{1}{2} \int_{0}^{\infty} z^T \tilde{Q} z dt
    \end{align}
where $\tilde{Q}$ is 
\begin{align*}
    \tilde{Q}&:=\begin{bmatrix}Q+K^TRK& -K^T RK &K^TB^T\\
    -K^TRK & K^TRK& -K^TB^T\\
    BK& -BK&BR^{-1}B^T \end{bmatrix}\\
    &=\begin{bmatrix} Q+K^TRK&\tilde{Q}_1\\\tilde{Q}_1^T&\tilde{Q}_2\end{bmatrix}
\end{align*}
with 
\begin{align*}
    \tilde{Q}_1&:=\begin{bmatrix}-K^TRK&K^TB^T\end{bmatrix}\\
    \tilde{Q}_2&:=\begin{bmatrix}K^TRK &-K^TB^T \\ -BK & BR^{-1}B^T \end{bmatrix}
\end{align*}

Since the performance index is $J_{\infty}^p=\frac{1}{2}\int_{0}^{\infty} z^T\tilde{Q}zdt$ and the system is $\dot{z} = \tilde{A}z+\tilde{b}$, we can utilize the result of $\mathcal{J}_{\infty}^{\mathcal{Q}}$ in \eqref{eq_J_Q} with $\mathcal{Q}=\tilde{Q}, \mathcal{A}=\tilde{A}, \beta = \tilde{b}$ to compute $J_{\infty}^p$. The difficulties of applying the result of $\mathcal{J}_{\infty}^{\mathcal{Q}}$ lie in two aspects: 1) the solving of the associated Lyapunov equation $\tilde{P}\tilde{A}+\tilde{A}^T \tilde{P}+\tilde{Q}=\mathbf{0}$, and 2) the computation of $\tilde{A}^{-1}$. We solve the two problems in the following Lemma \ref{lemma_Lyapunov_equation_for_proposed_controller} and \ref{lemma_tilde_A_inv}, respectively.

\begin{lemma}
    \label{lemma_Lyapunov_equation_for_proposed_controller}
    Suppose Assumption \ref{assumption_system_property} and \ref{assumption_hurwitz_S} hold. The Lyapunov equation $\tilde{P}\tilde{A}+\tilde{A}^T \tilde{P}+\tilde{Q}=\mathbf{0}$ has a unique positive semidefinite solution $\tilde{P}^*=\begin{bmatrix}
        P^*&\mathbf{0}\\\mathbf{0}&\tilde{P}^{S^*}
    \end{bmatrix}$, where $\tilde{P}^{S^*}$ is the unique positive semidefinite solution to the Lyapunov equation $\tilde{P}^S S+S^T\tilde{P}^S +\tilde{Q}_2=\mathbf{0}$.
\end{lemma}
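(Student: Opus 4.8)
The plan is to exploit the block upper-triangular structure $\tilde{A}=\begin{bmatrix} M & N \\ \mathbf{0} & S\end{bmatrix}$ together with the block structure of $\tilde{Q}$, verify that the claimed block-diagonal matrix solves the Lyapunov equation, and then invoke a uniqueness argument. First I would settle existence and uniqueness abstractly: $\tilde{A}$ is Hurwitz by Theorem \ref{theorem_asymptotic_stablity}, and since $z^T\tilde{Q}z = x^TQx+u^TRu\ge 0$ for all $z=\text{col}(x,y,\lambda)$, the matrix $\tilde{Q}$ is positive semidefinite. By the preliminaries on the Lyapunov equation in Section \ref{subsection_preliminary}, the equation $\tilde{P}\tilde{A}+\tilde{A}^T\tilde{P}+\tilde{Q}=\mathbf{0}$ then has a unique positive semidefinite solution. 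It therefore suffices to exhibit one symmetric solution of the claimed form, and uniqueness does the rest.

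Next I would substitute a general symmetric ansatz $\tilde{P}=\begin{bmatrix} P_{11} & P_{12} \\ P_{12}^T & P_{22}\end{bmatrix}$ and collect the three independent blocks of $\tilde{P}\tilde{A}+\tilde{A}^T\tilde{P}+\tilde{Q}$. The $(1,1)$ block is $P_{11}M+M^TP_{11}+(Q+K^TRK)=\mathbf{0}$, which is exactly the Lyapunov equation of Lemma \ref{lemma_Lyapunov_Riccati_equation}, forcing $P_{11}=P^*$. Taking the off-diagonal guess $P_{12}=\mathbf{0}$, the $(2,2)$ block collapses to $P_{22}S+S^TP_{22}+\tilde{Q}_2=\mathbf{0}$, i.e. the defining equation of $\tilde{P}^{S^*}$; this has a unique positive semidefinite solution because $S$ is Hurwitz (Assumption \ref{assumption_hurwitz_S}) and $\tilde{Q}_2\succeq\mathbf{0}$, the latter following from the identity $w^T\tilde{Q}_2 w=(Kw_1-R^{-1}B^Tw_2)^TR(Kw_1-R^{-1}B^Tw_2)\ge 0$ for $w=\text{col}(w_1,w_2)$.

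The crux, which I expect to be the main obstacle, is that the off-diagonal $(1,2)$ block must vanish under the choice $P_{12}=\mathbf{0}$, i.e. I must check $P^*N+\tilde{Q}_1=\mathbf{0}$. Here I would use the feedback relation $K=R^{-1}B^TP^*$ together with the symmetry of $P^*$ and $R$ to derive $P^*B=K^TR$. Then $P^*N=P^*\begin{bmatrix}BK&-BR^{-1}B^T\end{bmatrix}=\begin{bmatrix}K^TRK&-K^TB^T\end{bmatrix}=-\tilde{Q}_1$, so the block indeed vanishes. With all three blocks satisfied by the symmetric, block-diagonal, positive semidefinite matrix $\tilde{P}^*=\begin{bmatrix} P^* & \mathbf{0}\\ \mathbf{0}&\tilde{P}^{S^*}\end{bmatrix}$, the uniqueness established in the first step identifies it as the unique positive semidefinite solution, completing the proof.
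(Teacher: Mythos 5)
Your proposal is correct and follows essentially the same route as the paper: exploit the block upper-triangular structure of $\tilde{A}$, force the $(1,1)$ block to be $P^*$ via Lemma \ref{lemma_Lyapunov_Riccati_equation}, kill the off-diagonal block with the identity $P^*B=K^TR$ (so $P^*N=-\tilde{Q}_1$), reduce the $(2,2)$ block to the $S$-Lyapunov equation, and conclude by uniqueness of the solution for a Hurwitz state matrix. Your explicit verification that $\tilde{Q}\succeq\mathbf{0}$ and $\tilde{Q}_2\succeq\mathbf{0}$ is a small addition of rigor that the paper leaves implicit, but it does not change the argument.
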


\begin{proof}
    Suppose $\tilde{P}=\begin{bmatrix} \tilde{P}_1&\tilde{P}_2\\\tilde{P}_3&\tilde{P_4} \end{bmatrix}$ is the solution, where $\tilde{P}_1\in \mathbb{R}^{n\times n}, \tilde{P}_2\in \mathbb{R}^{n\times 2n},\tilde{P}_3\in \mathbb{R}^{2n\times n},\tilde{P}_4\in \mathbb{R}^{2n\times 2n}$. Then $\tilde{P}\tilde{A}+\tilde{A}^T\tilde{P}+\tilde{Q}=\mathbf{0}$ can be written as
    \begin{subequations}
        \begin{align}
            \label{eq_Laypunov_1}
            \tilde{P}_1 M+M^T \tilde{P}_1+(Q+K^TRK)&=\mathbf{0}\\
            \label{eq_Laypunov_2}
            \tilde{P}_1N+\tilde{P}_2S+M^T\tilde{P}_2+\tilde{Q}_1&=\mathbf{0}\\
            \label{eq_Laypunov_3}
            \tilde{P}_3M+N^T\tilde{P}_1+S^T\tilde{P}_3+\tilde{Q}_1^T&=\mathbf{0}\\
            \label{eq_Laypunov_4}
            \tilde{P}_3N+\tilde{P}_4S+N^T\tilde{P}_2+S^T\tilde{P}_4+\tilde{Q}_2&=\mathbf{0}
        \end{align}
    \end{subequations}

    By Lemma \ref{lemma_Lyapunov_Riccati_equation} and \eqref{eq_Laypunov_1}, we can know that the solution to $\tilde{P}_1$ is $P^*$. Further, notice that when substituting $\tilde{P}_1=P^*$ into \eqref{eq_Laypunov_2}, $\tilde{P}_2=\mathbf{0}$ is naturally a solution since the following equations hold.
    \begin{align*}
        \tilde{P}_1N+\tilde{Q}_1&=P^*\begin{bmatrix}BK&-BR^{-1}B^T\end{bmatrix}+\begin{bmatrix}-K^TRK&K^TB^T\end{bmatrix}\\
        &=\begin{bmatrix}P^*BK&-K^TB^T\end{bmatrix}+\begin{bmatrix}-K^TRK&K^TB^T\end{bmatrix}\\
        &=\begin{bmatrix}
            \mathbf{0}_{n\times n}&\mathbf{0}_{n\times n}
        \end{bmatrix}
    \end{align*}
    where the definition of $K=R^{-1}B^TP^*$ is used. Similarly, $\tilde{P}_3=\mathbf{0}$ also satisfies \eqref{eq_Laypunov_3}.

    The above observation about $\tilde{P}_2,\tilde{P}_3$ motivates us to assume the solution to $\tilde{P}\tilde{A}+\tilde{A}^T\tilde{P}+\tilde{Q}=\mathbf{0}$ has the block diagonal form. As long as $\tilde{P}S+S^T\tilde{P}+\tilde{Q}_2=\mathbf{0}$ has a unique solution, the proof will be completed due to the uniqueness of the solution to $\tilde{P}\tilde{A}+\tilde{A}^T\tilde{P}+\tilde{Q}=\mathbf{0}$.

    By Lemma \ref{lemma_Hurwitz_property_T_S}, $S$ is Hurwitz. Therefore, the Lyapunov equation $\tilde{P}^SS+S^T\tilde{P}^S+\tilde{Q}_2=\mathbf{0}$ has a unique solution $P^{S^*}$, which shows that $\tilde{P}^*$ is a solution to the Lyapunov equation $\tilde{P}\tilde{A}+\tilde{A}^T \tilde{P}+\tilde{Q}=\mathbf{0}$. Therefore, by the uniqueness of the solution, the assertion in this Lemma holds.
\end{proof}

\begin{lemma}
    \label{lemma_tilde_A_inv}
    The inverse of $\tilde{A}$ is given by
    \begin{subequations}
     \label{eq_tilde_A_inv}
     \begin{align}
            \label{eq_tilde_A_inv_1}
                \tilde{A}^{-1}&=\begin{bmatrix}
                    M^{-1} & -M^{-1}NS^{-1}\\
                    0&S^{-1}
                \end{bmatrix}\\
                 \label{eq_tilde_A_inv_2}
                &=\begin{bmatrix}
                    M^{-1} & -M^{-1}NS^{-1}\\
                    0&T^{-1} (K^S)^{-1}\end{bmatrix}
            \end{align}   
    \end{subequations}
    where $(K^S)^{-1}$ and $T^{-1}$ can be represented as 
    \begin{align*}
        \left(K^S\right)^{-1}=\begin{bmatrix}
        (K^{y})^{-1}&\mathbf{0}\\\mathbf{0}&(K^{\lambda})^{-1}
    \end{bmatrix}, T^{-1}=\begin{bmatrix}
        T_1&T_2\\T_3&T_4
    \end{bmatrix}
    \end{align*} 
    with $T_1, T_2, T_3, T_4\in \mathbb{R}^{n\times n}$ satisfying the following conditions.
     \begin{subequations}
        \label{eq_T_matrix_inv}
        \begin{align}
            \label{eq_T_matrix_inv_1}
            -QT_1-A^TT_3&=I_{n}\\
            \label{eq_T_matrix_inv_2}
            -QT_2-A^TT_4&=\mathbf{0}\\
            \label{eq_T_matrix_inv_3}
            AT_1-BR^{-1}B^TT_3&=\mathbf{0}\\
            \label{eq_T_matrix_inv_4}
            AT_2-BR^{-1}B^TT_4&=I_n
        \end{align}
    \end{subequations}
\end{lemma}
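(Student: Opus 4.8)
The plan is to exploit the block upper-triangular structure $\tilde{A}=\begin{bmatrix} M & N\\ \mathbf{0}&S\end{bmatrix}$ already exhibited in \eqref{eq_tilde_A} and apply the standard formula for inverting a block-triangular matrix. Before doing so, I would first record why every block to be inverted is invertible. The matrix $M=A-BK$ is Hurwitz by the Riccati-equation preliminaries in Section \ref{subsection_preliminary}, and $S$ is Hurwitz by Assumption \ref{assumption_hurwitz_S} (equivalently Lemma \ref{lemma_Hurwitz_property_T_S}); hence both $M^{-1}$ and $S^{-1}$ exist. This already forces $\tilde{A}$ to be invertible, consistent with Theorem \ref{theorem_asymptotic_stablity}.

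Rather than merely quoting the block-inverse formula, I would verify \eqref{eq_tilde_A_inv_1} directly by checking $\tilde{A}\,\tilde{A}^{-1}=I_{3n}$. Multiplying the candidate inverse on the left by $\tilde{A}$ gives $M M^{-1}=I_n$ in the top-left block and $S S^{-1}=I_{2n}$ in the bottom-right block, while the off-diagonal block is $M\bigl(-M^{-1}NS^{-1}\bigr)+NS^{-1}=-NS^{-1}+NS^{-1}=\mathbf{0}$ and the bottom-left block is trivially zero. Thus the product is the identity and \eqref{eq_tilde_A_inv_1} holds.

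For the refined form \eqref{eq_tilde_A_inv_2}, I would factor $S^{-1}$ through the definition $S=K^S T$ in \eqref{eq_S_definition}. Since $K^S$ in \eqref{eq_K_S_def} is block diagonal with the positive definite blocks $K^y$ and $K^{\lambda}$, it is invertible with the stated block-diagonal inverse; and because $S$ is invertible, $T=(K^S)^{-1}S$ is invertible as well, so $S^{-1}=T^{-1}(K^S)^{-1}$, yielding \eqref{eq_tilde_A_inv_2}. To pin down $T^{-1}$, I would write $T\,T^{-1}=I_{2n}$ in $n\times n$ blocks, using $T$ from \eqref{eq_T_def} and $T^{-1}=\begin{bmatrix}T_1&T_2\\ T_3&T_4\end{bmatrix}$. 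The four block entries of the product are $-QT_1-A^TT_3$, $-QT_2-A^TT_4$, $AT_1-BR^{-1}B^TT_3$, and $AT_2-BR^{-1}B^TT_4$; equating each to the corresponding block of $I_{2n}$ reproduces exactly the conditions \eqref{eq_T_matrix_inv_1}--\eqref{eq_T_matrix_inv_4}.

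The argument is essentially linear-algebraic bookkeeping, so I do not expect a deep obstacle; the only point requiring care is the invertibility chain, and in particular deducing that $T$ is invertible. I expect this to follow cleanly from the Hurwitz property of $S$ together with the invertibility of $K^S$, so that $T$ need not be analyzed on its own, and the four block equations characterize $T^{-1}$ implicitly without solving them in closed form.
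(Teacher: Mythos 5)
Your proof is correct and follows essentially the same route as the paper's: direct verification of $\tilde{A}\,\tilde{A}^{-1}=I_{3n}$ using the block upper-triangular structure, and derivation of \eqref{eq_T_matrix_inv} from $T\,T^{-1}=I_{2n}$. The only difference is that you explicitly record the invertibility chain ($M$ and $S$ Hurwitz, $K^S$ positive definite, hence $T=(K^S)^{-1}S$ invertible), which the paper leaves implicit; this is a welcome addition but not a departure in method.
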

\begin{proof}
    The mathematical form of $\tilde{A}^{-1}$ in \eqref{eq_tilde_A_inv} can be verified by comparing the result of $\tilde{A}*\tilde{A}^{-1}$ with $I_{3n}$. Conditions in \eqref{eq_T_matrix_inv} can be proved by substituting $T$ in \eqref{eq_T_def} into $T*T^{-1}=I_{2n}$.
\end{proof}
%     Denote by $T^{-1}=\begin{bmatrix}
%         T_1&T_2\\T_3&T_4
%     \end{bmatrix}$, where $T$ is defined in Lemma \ref{lemma_Hurwitz_property_T_S}, and $T_1,T_2,T_3,T_4\in \mathbb{R}^{n\times n}$. Then we have 
%     \begin{itemize}
%         \item The inverse of $\tilde{A}$ can be given by 
%             \begin{align*}
%                 \tilde{A}^{-1}=\begin{bmatrix}
%                     M^{-1} & -M^{-1}NS^{-1}\\
%                     0&S^{-1}
%                 \end{bmatrix}
%             \end{align*}
%             where $S^{-1}=T^{-1}(K^S)^{-1}$.
%         \item $T_2=-T_3^T$, and $T_1$ and $T_4$ are both symmetric matrices, i.e. $T_1=T_1^T, T_4=T_4^T$.
%     \end{itemize}
% \end{lemma}
% \begin{proof}
%     The first assertion holds obviously since $\tilde{A}$ is block upper triangular. 
    
%     For the second assertion, since $\begin{bmatrix}
%         T_1&T_2\\T_3&T_4
%     \end{bmatrix}$ is the inverse of $T$, we have $T*T^{-1}=I_{2n}$, i.e.
   
%     It can be verified that $\begin{bmatrix}
%         T_1^T&-T_3^T\\-T_2^T&T_4^T
%     \end{bmatrix}*T$ is also equal to $I_{2n}$ by directly computations and utilizing the equation in \eqref{eq_T_matrix_inv}. Therefore, $\begin{bmatrix}
%         T_1^T&-T_3^T\\-T_2^T&T_4^T
%     \end{bmatrix}$ is also the inverse of $T$. Since the inverse matrix for $T$ is unique, we have $T_2 = -T_3^T, T_1=T_1^T$, and $T_4=T_4^T$.
% \end{proof}

Lemma \ref{lemma_Lyapunov_equation_for_proposed_controller} and Lemma \ref{lemma_tilde_A_inv} give the simplified results for the solution to the Lyapunov equation $\tilde{P}\tilde{A}+\tilde{A}^T\tilde{P}+\tilde{Q}=\mathbf{0}$ and the inverse of $\tilde{A}$. We are ready to obtain the system performance index $J_{\infty}^p$ under the proposed near-optimal controller in the following theorem. 
% by applying the results in Lemma \ref{lemma_universal_performance_computation}, Lemma \ref{lemma_Lyapunov_equation_for_proposed_controller}, and Lemma \ref{lemma_tilde_A_inv}.

\begin{theorem}
    The performance index $J_{\infty}^p=\frac{1}{2}\int_{0}^{\infty} z^T\tilde{Q}zdt$ for  system \eqref{eq_dot_z} is given below
    \begin{align}\label{perf_near}
        J_{\infty}^p=J_{\infty}^*+\frac{1}{2}\norm{\begin{bmatrix}
            y_0-\bar{x} \\ \lambda_0-\bar{\lambda}
        \end{bmatrix}}^2_{\tilde{P}^{S^*}}
    \end{align}
    where $J_{\infty}^*$ is the performance index for the overtaking optimal controller, $\tilde{P}^{S^*}$ is the solution to Lyapunov equation $\tilde{P}^S S+S^T\tilde{P}^S +\tilde{Q}_2=0$ in Lemma \ref{lemma_Lyapunov_equation_for_proposed_controller}, and $y_0,\lambda_0$ are the initial value for $y$ and $\lambda$.
\end{theorem}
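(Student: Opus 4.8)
The plan is to apply Lemma~\ref{lemma_universal_performance_computation} directly to $J_{\infty}^p=\frac{1}{2}\int_0^\infty z^T\tilde{Q}z\,dt$ with the identifications $\mathcal{A}=\tilde{A}$, $\mathcal{Q}=\tilde{Q}$, $\beta=\tilde{b}$, $\psi_0=z_0=\text{col}(x_0,y_0,\lambda_0)$, and $\mathcal{P}=\tilde{P}^*$, giving $J_{\infty}^p=\frac{1}{2}\norm{z_0+\tilde{A}^{-1}\tilde{b}}^2_{\tilde{P}^*}+\tilde{b}^T\tilde{A}^{-T}\tilde{Q}\tilde{A}^{-1}(z_0+\tilde{A}^{-1}\tilde{b})+\frac{1}{2}\norm{\tilde{A}^{-1}\tilde{b}}^2_{\tilde{Q}}\intoninf$. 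The first step is to identify $\tilde{A}^{-1}\tilde{b}$: since $(\bar{x},\bar{x},\bar{\lambda})$ is the equilibrium of $\dot z=\tilde{A}z+\tilde{b}$ by Theorem~\ref{theorem_asymptotic_stablity}, we have $\tilde{A}^{-1}\tilde{b}=-\text{col}(\bar{x},\bar{x},\bar{\lambda})$, so $z_0+\tilde{A}^{-1}\tilde{b}=e:=\text{col}(e_x,e_s)$ with $e_x=x_0-\bar{x}$ and $e_s=\text{col}(y_0-\bar{x},\lambda_0-\bar{\lambda})$. The block-diagonal form of $\tilde{P}^*$ from Lemma~\ref{lemma_Lyapunov_equation_for_proposed_controller} immediately splits the quadratic term into $\frac{1}{2}(x_0-\bar{x})^TP^*(x_0-\bar{x})+\frac{1}{2}\norm{e_s}^2_{\tilde{P}^{S^*}}$, whose second summand is exactly the extra term in \eqref{perf_near}. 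It then remains to show that the $e_x$-quadratic part, the cross term, and the unbounded term reassemble into $J_{\infty}^*$ as written in \eqref{eq_J_star_final_1}--\eqref{eq_J_star_final_3}.

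For the unbounded term I would expand $\text{col}(\bar{x},\bar{x},\bar{\lambda})^T\tilde{Q}\,\text{col}(\bar{x},\bar{x},\bar{\lambda})$ blockwise: the $\bar{x}$--$\bar{x}$ blocks of $\tilde{Q}$ collapse to $\bar{x}^TQ\bar{x}$, the $\bar{x}$--$\bar{\lambda}$ cross blocks cancel, and the $\bar{\lambda}$--$\bar{\lambda}$ block yields $\bar{\lambda}^TBR^{-1}B^T\bar{\lambda}=\bar{u}^TR\bar{u}$ via $\bar{u}=-R^{-1}B^T\bar{\lambda}$ from \eqref{eq_KKT_condition_static_OP_2}. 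Hence this term equals $\frac{1}{2}(\bar{x}^TQ\bar{x}+\bar{u}^TR\bar{u})\intoninf$, which matches \eqref{eq_J_star_final_3}.

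The cross term is where the real work lies. Writing the left factor as $w^T=-\text{col}(\bar{x},\bar{x},\bar{\lambda})^T\tilde{Q}=(w_1^T,w_{23}^T)$, a blockwise computation using $\bar{u}^TRK=-\bar{\lambda}^TBK$ gives $w_1^T=-\bar{x}^TQ+\bar{u}^TRK$ and $w_{23}^T=(-\bar{u}^TRK,\,-\bar{\lambda}^TBR^{-1}B^T)$. The key identity that unlocks everything is $w_1^TM^{-1}=\bar{\lambda}^T$, which I would derive by substituting the stationarity condition $\bar{x}^TQ=-\bar{\lambda}^TA$ from \eqref{eq_KKT_condition_static_OP_1} to obtain $w_1^T=\bar{\lambda}^T(A-BK)=\bar{\lambda}^TM$. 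Feeding the upper-triangular form of $\tilde{A}^{-1}$ from Lemma~\ref{lemma_tilde_A_inv} into $w^T\tilde{A}^{-1}e$, the coefficient multiplying $e_s$ is $-w_1^TM^{-1}N+w_{23}^T$ with $N=\begin{bmatrix}BK&-BR^{-1}B^T\end{bmatrix}$; substituting $w_1^TM^{-1}=\bar{\lambda}^T$ and again using $\bar{\lambda}^TBK=-\bar{u}^TRK$ shows this coefficient vanishes identically. The cross term therefore collapses to $w_1^TM^{-1}e_x=(-\bar{x}^TQ+\bar{u}^TRK)(A-BK)^{-1}(x_0-\bar{x})$, matching \eqref{eq_J_star_final_2}. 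Collecting the three matched pieces reconstructs $J_{\infty}^*$, and the leftover $\frac{1}{2}\norm{e_s}^2_{\tilde{P}^{S^*}}$ gives \eqref{perf_near}.

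The step I expect to be the main obstacle is this cancellation of the $e_s$-dependence in the cross term. Without the identity $w_1^TM^{-1}=\bar{\lambda}^T$ (which is precisely the statement that $\bar{\lambda}$ is the stationary costate), the cross term would retain a spurious linear dependence on the primal-dual initial data $y_0$ and $\lambda_0$, and the clean decomposition in \eqref{perf_near} into $J_{\infty}^*$ plus a single Lyapunov-weighted initialization penalty would not hold.
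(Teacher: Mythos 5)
Your proof is correct, and while it follows the same skeleton as the paper's (apply Lemma \ref{lemma_universal_performance_computation} with $\mathcal{A}=\tilde{A}$, $\mathcal{Q}=\tilde{Q}$, $\beta=\tilde{b}$, then match the unbounded part, the block-diagonal quadratic part, and the cross term against \eqref{eq_J_star_final_1}--\eqref{eq_J_star_final_3}), your execution of the two technical steps is genuinely different and cleaner. For $\tilde{A}^{-1}\tilde{b}=-\mathrm{col}(\bar{x},\bar{x},\bar{\lambda})$, the paper grinds through the explicit block inverse of Lemma \ref{lemma_tilde_A_inv}, the identity $M^{-1}NT^{-1}=\begin{bmatrix}-T_1 & M^{-1}-T_2\end{bmatrix}$, and the representation $(\bar{x},\bar{\lambda})=(-T_2Cd,-T_4Cd)$; you get it in one line from the equilibrium property in Theorem \ref{theorem_asymptotic_stablity} plus invertibility of the Hurwitz matrix $\tilde{A}$. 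For the cross term, the paper splits $\tilde{A}^{-1}$ into a block-diagonal plus an off-diagonal piece and kills the latter by a lengthy computation with the $T_i$ identities \eqref{eq_T_matrix_inv}; you instead use only the upper-triangular form of $\tilde{A}^{-1}$ together with the costate identity $w_1^T=(-\bar{x}^TQ+\bar{u}^TRK)=\bar{\lambda}^TM$ (a direct consequence of the KKT stationarity \eqref{eq_KKT_condition_static_OP_1} and $\bar{u}^TRK=-\bar{\lambda}^TBK$), which makes the $e_s$-coefficient $-\bar{\lambda}^TN+w_{23}^T$ vanish by inspection. The payoff of your route is that the detailed conditions \eqref{eq_T_matrix_inv} on the blocks of $T^{-1}$ are never needed, and the cancellation acquires a transparent interpretation ($\bar{\lambda}$ is the stationary costate); the payoff of the paper's route is that it is self-contained at the level of matrix algebra and exhibits the explicit dependence of $\tilde{A}^{-1}\tilde{b}$ on the disturbance $d$, which the equilibrium argument hides.
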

\begin{proof}
    Since system \eqref{eq_dot_z} is an asymptotically stable linear system with $z$ as its state and $J_{\infty}^p$ is an integral over the quadratic function about $z$, we can apply the result of $\mathcal{J}_{\infty}^{\mathcal{Q}}$ in \eqref{eq_J_Q} with $\mathcal{Q}=\tilde{Q}$ to compute $J_{\infty}^p$. Recalling \eqref{General_JQ}, the result consists of two parts: one related to $\intoninf$ and the other unrelated to $\intoninf$.
    
    First, we compute the part related to $\intoninf$ in $J_{\infty}^p$, which is $\frac{1}{2}\norm{\tilde{A}^{-1}\tilde{b}}_{\tilde{Q}}\intoninf$ according to \eqref{eq_J_Q}. The value of $\tilde{A}^{-1}$ is 
    \begin{align*}
        \tilde{A}^{-1}
        &=\begin{bmatrix}
            M^{-1}&-M^{-1}NS^{-1}\\\mathbf{0}_{2n\times n}&S^{-1}
        \end{bmatrix}\\
        &=\begin{bmatrix}
            M^{-1}&M^{-1}NT^{-1}\\
            \mathbf{0}&T^{-1}
        \end{bmatrix} \begin{bmatrix}I_n&\mathbf{0}\\
        \mathbf{0}&(K^S)^{-1}\end{bmatrix}
    \end{align*}
    The value of $M^{-1}NT^{-1}$ is computed  as follows.
    \begin{align}
        &\quad M^{-1}NT^{-1} \nonumber\\
        &=M^{-1}\begin{bmatrix}BK&-BR^{-1}B^T\end{bmatrix}\begin{bmatrix}
            T_1&T_2\\T_3&T_4
        \end{bmatrix} \nonumber\\
        &=M^{-1} \begin{bmatrix}
            BKT_1-BR^{-1}B^T T_3&BKT_2-BR^{-1}B^T T_4
        \end{bmatrix} \nonumber\\
        &=M^{-1}\begin{bmatrix}
            BKT_1-A T_1&BKT_2-AT_2+I_{n}
        \end{bmatrix} \nonumber\\
        \label{eq_M_1_N_T_1}
        &=M^{-1}\begin{bmatrix}
            -(A-BK)T_1 &I_{n}-(A-BK)T_2
        \end{bmatrix}
    \end{align}
    where in the third equation, $BR^{-1}B^TT_3=AT_1$ in \eqref{eq_T_matrix_inv_3} and $BR^{-1}B^TT_4=AT_2-I_{n}$ in \eqref{eq_T_matrix_inv_4} are used. Since $M$ is defined as $M=A-BK$, we can know $M^{-1}NT^{-1}=\begin{bmatrix}-T_1&M^{-1}-T_2\end{bmatrix}$. Therefore, $\tilde{A}^{-1}\tilde{b}$ can be written as 
    \begin{align*}
        &\quad \tilde{A}^{-1}\tilde{b}\\
        &=\begin{bmatrix}
            M^{-1}&T_1&T_2-M^{-1}\\
            \mathbf{0}&T_1&T_2\\
            \mathbf{0}&T_3&T_4
        \end{bmatrix} \begin{bmatrix}I_n&\mathbf{0}\\
        \mathbf{0}&(K^S)^{-1}\end{bmatrix}\begin{bmatrix}
            Cd\\\mathbf{0}\\K^{\lambda}Cd
        \end{bmatrix}\\
        &=\begin{bmatrix}
            M^{-1}&T_1&T_2-M^{-1}\\
            \mathbf{0}&T_1&T_2\\
            \mathbf{0}&T_3&T_4
        \end{bmatrix} \begin{bmatrix}
            Cd\\\mathbf{0}\\Cd
        \end{bmatrix}\\
        &=\begin{bmatrix}
            T_2Cd\\T_2Cd\\T_4Cd
        \end{bmatrix}
    \end{align*}
    Eliminating $\bar{u}$ in \eqref{eq_KKT_condition_static_OP_2} by $\bar{u}=R^{-1}B^T \bar{\lambda}$, \eqref{eq_KKT_condition_static_OP} is equivalent to 
    \begin{align*}
        Q\bar{x}+A^T\bar{\lambda}&=\mathbf{0}\\
        A\bar{x}-BR^{-1}B^T\bar{\lambda}+Cd&=\mathbf{0}
    \end{align*}
    which can be written as $T\begin{bmatrix}
            \bar{x}\\\bar{\lambda}
        \end{bmatrix}=\begin{bmatrix}
            \mathbf{0}\\-Cd
        \end{bmatrix}$. Therefore, we have 
        \begin{align*}
            \begin{bmatrix}
            \bar{x}\\\bar{\lambda}
        \end{bmatrix}=T^{-1}\begin{bmatrix}
            \mathbf{0}\\-Cd
        \end{bmatrix}=\begin{bmatrix}
            T_1&T_2\\T_3&T_4
        \end{bmatrix}\begin{bmatrix}
            \mathbf{0}\\-Cd
        \end{bmatrix}=\begin{bmatrix}
            -T_2Cd\\-T_4Cd
        \end{bmatrix}
        \end{align*}from which we can know
        \begin{align*}
            \tilde{A}^{-1}\tilde{b}=\begin{bmatrix}
                T_2Cd\\T_2Cd\\T_4Cd
            \end{bmatrix}=-\begin{bmatrix}
                \bar{x}\\\bar{x}\\\bar{\lambda}
            \end{bmatrix}
        \end{align*}
        Then the part related to $\intoninf$ in the performance index is 
        \begin{align*}
            &\quad \frac{1}{2}\norm{\tilde{A}^{-1}\tilde{b}}_{\tilde{Q}}^2\intoninf \\
            &=\frac{1}{2}\begin{bmatrix}
                \bar{x}\\\bar{x}\\\bar{\lambda}
            \end{bmatrix}^T\begin{bmatrix}Q+K^TRK& -K^T RK &K^TB^T\\
                -K^TRK & K^TRK& -K^TB^T\\
                BK& -BK&BR^{-1}B^T \end{bmatrix}\begin{bmatrix}
                    \bar{x}\\\bar{x}\\\bar{\lambda}
                \end{bmatrix}\intoninf\\
                &= \frac{1}{2}\left(\bar{x}^TQ\bar{x}+\bar{\lambda}^TBR^{-1}B^T\bar{\lambda}\right)\intoninf\\
                &=\frac{1}{2}\left(\bar{x}^TQ\bar{x}+\bar{u}^TR\bar{u}\right)\intoninf
        \end{align*}
    which is the same as the part related to $\intoninf$ of $J_{\infty}^*$ in \eqref{eq_J_star_final_3}.

    For the part unrelated to $\intoninf$ in $J_{\infty}^p$, it can be computed by applying the result of $\mathcal{J}_{\infty}^{\mathcal{Q}}$ in \eqref{eq_J_Q} as follows. 
    \begin{align*}
        & \quad \frac{1}{2} (z_0+\tilde{A}^{-1}\tilde{b})^T \tilde{P}^* (z_0+\tilde{A}^{-1}\tilde{b}) + \tilde{b}^T \tilde{A}^{-T} 
        \tilde{Q} \tilde{A}^{-1} (z_0+\tilde{A}^{-1}\tilde{b})\\
        &=\frac{1}{2}\!\begin{bmatrix}
            x_0\!-\!\bar{x}\\y_0\!-\!\bar{x}\\\lambda_0\!-\!\bar{\lambda}
        \end{bmatrix}^T\! \begin{bmatrix}
            P^*&\mathbf{0}\\\mathbf{0}&\tilde{P}^{S^*}
        \end{bmatrix}\! \begin{bmatrix}
            x_0\!-\!\bar{x}\\y_0\!-\!\bar{x}\\\lambda_0\!-\!\bar{\lambda}
        \end{bmatrix}\!-\!\begin{bmatrix}
            \bar{x}\\\bar{x}\\\bar{\lambda}
        \end{bmatrix}^T\!\tilde{Q}\!\tilde{A}^{-1}\!\begin{bmatrix}
            x_0-\bar{x}\\y_0-\bar{x}\\\lambda_0-\bar{\lambda}
        \end{bmatrix}\\
        &=\frac{1}{2}\!\norm{x_0\!-\!\bar{x}}^2_{P^*}\!+\!\frac{1}{2}\norm{\begin{bmatrix}
            y_0-\bar{x} \\ \lambda_0-\bar{\lambda}
        \end{bmatrix}}^2_{\tilde{P}^{S^*}}\!-
        \!\begin{bmatrix}
            \bar{x}\\\bar{x}\\\bar{\lambda}
        \end{bmatrix}^T \! \tilde{Q}\!\tilde{A}^{-1}\!\begin{bmatrix}
            x_0\!-\!\bar{x}\\y_0\!-\!\bar{x}\\\lambda_0\!-\!\bar{\lambda}
        \end{bmatrix}
    \end{align*}
    Comparing the result with the part unrelated to $\intoninf$ in $J_{\infty}^*$ in \eqref{eq_J_star_final_1} and \eqref{eq_J_star_final_2}, it is sufficient to prove 
    \begin{align}
        \label{eq_performance_difference_of_bounded_part_target}
        \!\!-\!\!\begin{bmatrix}
            \bar{x}\\\bar{x}\\\bar{\lambda}
        \end{bmatrix}^{\!T\!}\!\!\tilde{Q}\tilde{A}^{-1}\!\!\begin{bmatrix}
            x_0\!-\!\bar{x}\\y_0\!-\!\bar{x}\\\lambda_0\!-\!\bar{\lambda}
        \end{bmatrix}\!=(-\bar{x}^T \!Q\!+\!\bar{u}^T \!RK)M^{\!-\!1}(x_0\!-\!\bar{x})
    \end{align}
    
    The lefthand side yields 
    \begin{align*}
        &\quad \!-\!\!\!\begin{bmatrix}
            \bar{x}\\\bar{x}\\\bar{\lambda}
        \end{bmatrix}^{\!T\!}\!\! \tilde{Q}\tilde{A}^{-1}\begin{bmatrix}
            x_0-\bar{x}\\y_0-\bar{x}\\\lambda_0-\bar{\lambda}
        \end{bmatrix}\\
        &=-\begin{bmatrix}
            Q\bar{x}+K^TB^T\bar{\lambda}\\-K^TB^T\bar{\lambda}\\BR^{-1}B^T\bar{\lambda}
        \end{bmatrix}^T \tilde{A}^{-1}\begin{bmatrix}
            x_0-\bar{x}\\y_0-\bar{x}\\\lambda_0-\bar{\lambda}
        \end{bmatrix}
    \end{align*}
    We separate $\tilde{A}^{-1}$ into two parts as follows.
    \begin{align*}
        \tilde{A}^{-1}&=\begin{bmatrix}
        M^{-1}&\mathbf{0}\\\mathbf{0}&\mathbf{0}
        \end{bmatrix}+\begin{bmatrix}
        \mathbf{0}&M^{-1}NS^{-1}\\\mathbf{0}&S^{-1}
        \end{bmatrix}
    \end{align*}
    Then we have
    \begin{subequations}
        \begin{align}
            &\quad -\begin{bmatrix}
                \bar{x}\\\bar{x}\\\bar{\lambda}
            \end{bmatrix}^T\tilde{Q}\tilde{A}^{-1}\begin{bmatrix}
                x_0-\bar{x}\\y_0-\bar{x}\\\lambda_0-\bar{\lambda}
            \end{bmatrix}\nonumber\\
            \label{eq_performance_final_J_p_bounded_1}
            &= -(\bar{x}^TQ+\bar{\lambda}^TBK) M^{-1} (x_0-\bar{x})\\
            \label{eq_performance_final_J_p_bounded_2}
            &\quad -\!\begin{bmatrix}
                Q\bar{x}+K^{T}B^T\bar{\lambda}\\-K^TB^T\bar{\lambda}\\BR^{-1}B^T\bar{\lambda}
            \end{bmatrix}^{T} \! \begin{bmatrix}
                \mathbf{0}&M^{-1}NS^{-1}\\ \mathbf{0} &S^{-1}
            \end{bmatrix} \! \begin{bmatrix}
                x_0-\bar{x}\\y_0-\bar{x}\\\lambda_0-\bar{\lambda}
            \end{bmatrix}
        \end{align}
    \end{subequations}
    
    Since \eqref{eq_performance_final_J_p_bounded_1} is equal to the righthand side of \eqref{eq_performance_difference_of_bounded_part_target}, we only need to show \eqref{eq_performance_final_J_p_bounded_2} is $0$. 
    
    By $S^{-1} = T^{-1} (K^S)^{-1}$, we have
    \begin{align*}
        \begin{bmatrix}
                \mathbf{0}&M^{-1}NS^{-1}\\ \mathbf{0} &S^{-1}
            \end{bmatrix} =  \begin{bmatrix}
                \mathbf{0}&M^{-1}NT^{-1}\\ \mathbf{0} &T^{-1}
            \end{bmatrix} \begin{bmatrix}
                I_{n} &\mathbf{0} \\ \mathbf{0} &(K^S)^{-1}
            \end{bmatrix} 
    \end{align*}
    Therefore, it is sufficient to show 
    \begin{align*}
    \begin{bmatrix}
                 Q\bar{x}+K^TB^T\bar{\lambda}\\-K^TB^T\bar{\lambda}\\BR^{-1}B^T\bar{\lambda}
        \end{bmatrix}^T  \begin{bmatrix}
                \mathbf{0}&M^{-1}NT^{-1}\\ \mathbf{0} &T^{-1}
            \end{bmatrix}=\mathbf{0}    
    \end{align*}
    By \eqref{eq_M_1_N_T_1}, we have 
    \begin{align}
       &\!\quad \begin{bmatrix}
                 Q\bar{x}+K^TB^T\bar{\lambda}\\-K^TB^T\bar{\lambda}\\BR^{-1}B^T\bar{\lambda}
        \end{bmatrix}^T  \begin{bmatrix}
                \mathbf{0}&M^{-1}NT^{-1}\\ \mathbf{0} &T^{-1}
            \end{bmatrix}\nonumber\\
        &\!=\! \!\begin{bmatrix}
            Q\bar{x}+K^TB^T\bar{\lambda}\\-K^TB^T\bar{\lambda}\\BR^{-1}B^T\bar{\lambda}
        \end{bmatrix}^T \begin{bmatrix}
            \mathbf{0}&T_1&T_2-M^{-1}\\
            \mathbf{0}&T_1&T_2\\
            \mathbf{0}&T_3&T_4
        \end{bmatrix}\nonumber\\
        \label{eq_performance_equal_simlifying_0}
        &\!=\!\!\begin{bmatrix}
            \mathbf{0}\\
            T_1^TQ\bar{x}+T_3^TBR^{-1}B^T\bar{\lambda}\\
            T_2^TQ\bar{x}\!-\!M^{-T}(Q\bar{x}\!+\!K^TB^T\bar{\lambda})\!+\!T_4^TBR^{-1}B^T\bar{\lambda}
        \end{bmatrix}^T
    \end{align}
    
    Substituting $\bar{x} = -T_2Cd, \bar{\lambda} = -T_4 Cd$ into the second row in \eqref{eq_performance_equal_simlifying_0}, we have
        \begin{align}
            &\quad T_1^TQ\bar{x}+T_3^TBR^{-1}B^T\bar{\lambda} \nonumber\\
            &=  -(T_1^T Q T_2 +T_3^T BR^{-1}B^T T_4)Cd\nonumber\\
            \label{eq_performance_equal_simlifying_1}
            & =-\left(T_1^T\left(-A^T T_4\right)+\left(AT_1\right)^TT_4\right)Cd\\
            &=\mathbf{0}\nonumber
        \end{align}
    where \eqref{eq_performance_equal_simlifying_1} holds by substituting the results in \eqref{eq_T_matrix_inv_2} and \eqref{eq_T_matrix_inv_3}.
    
    Similarly, substituting $\bar{x} = -T_2Cd, \bar{\lambda} = -T_4 Cd$ into the third row in \eqref{eq_performance_equal_simlifying_0}, we have
    \begin{subequations}
     \begin{align}
        &\quad T_2^TQ\bar{x}-M^{-T}(Q\bar{x}+K^TB^T\bar{\lambda})+T_4^TBR^{-1}B^T\bar{\lambda}\nonumber \\
        &=\Big(-T_2^TQT_2+M^{-T}\big(QT_2+K^TB^TT_4\big) \nonumber\\
        &\quad\qquad - T_4^TBR^{-1}B^TT_4\Big)Cd\nonumber\\ 
        &=\Big(T_2^TA^TT_4+M^{-1}(-A^TT_4+K^TB^TT_4)\nonumber\\
        \label{eq_performance_equal_simlifying_2_2}
        & \quad\qquad-(AT_2-I_n)^T T_4\Big)Cd\\
        &=\Big(T_2A^TT_4-M^{-1}(A-BK)^TT_4\nonumber\\
        &\quad \qquad-T_2^TA^TT_4+T_4\Big)Cd\nonumber\\
        \label{eq_performance_equal_simlifying_2_3}
        &=(T_2A^TT_4-T_4-T_2^TA^TT_4+T_4)Cd\\
        &=\mathbf{0}\nonumber
    \end{align}
    \end{subequations}
    where \eqref{eq_performance_equal_simlifying_2_2} is obtained by substituting \eqref{eq_T_matrix_inv_2} and \eqref{eq_T_matrix_inv_4}, \eqref{eq_performance_equal_simlifying_2_3} holds due the the definition of $M=A-BK$.

    To conclude, we prove \eqref{eq_performance_difference_of_bounded_part_target} holds since \eqref{eq_performance_final_J_p_bounded_1} is the same to the righthand side of \eqref{eq_performance_difference_of_bounded_part_target} and \eqref{eq_performance_final_J_p_bounded_2} is $0$. Therefore, $J_{\infty}^p=J_{\infty}^*+\frac{1}{2}\norm{\begin{bmatrix}
        y_0-\bar{x} \\ \lambda_0-\bar{\lambda}
    \end{bmatrix}}^2_{\tilde{P}^{S^*}}$ holds. 
    
    This completes the proof.
\end{proof}
\begin{remark}
    Compared with the optimal value, the performance gap of the proposed controller is $\frac{1}{2}\norm{\begin{bmatrix}
        y_0-\bar{x} \\ \lambda_0-\bar{\lambda}
    \end{bmatrix}}^2_{\tilde{P}^{S^*}}$, which is related to the differences between the initial state $(y_0, \lambda_0)$ and the optimal steady state $(\bar{x}, \bar{\lambda})$. Since $\tilde{P}^{S^*}$ is positive semidefinite, system performance under the overtaking optimal controller is always no worse than that under the near-optimal controller, which is consistent with the definition of overtaking optimality. On the other hand, the positive semidefiniteness of $\tilde{P}^{S^*}$ means that $J_{\infty}^p$ can be equal to $J_{\infty}^*$ sometimes, which will be formally stated in the following subsection.
\end{remark}

\subsection{Discussion}
\label{subsection_discussion}
In this subsection, we first give the condition that eliminates the transient performance gap between the overtaking optimal controller and the near-optimal controller. Then, the relationship between the performance gap and the control parameters is discussed.

The following lemma gives the conditions for eliminating the performance gap.
\begin{lemma}
    \label{corollary_zero_gap}
    $J_{\infty}^p-J_{\infty}^* = 0$ holds if and only if $(y_0-\bar{y}, \lambda_0-\bar{\lambda})$ is in the unobservable subspace of $(S, \tilde{C})$ with $\tilde{C}:= \begin{bmatrix}RK& -B^T\end{bmatrix}$.
\end{lemma}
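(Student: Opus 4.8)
The starting point is the closed-form gap already established in the preceding theorem, namely $J_{\infty}^p-J_{\infty}^*=\frac{1}{2}\norm{\text{col}(y_0-\bar{x},\lambda_0-\bar{\lambda})}^2_{\tilde{P}^{S^*}}$ as in \eqref{perf_near} (the $\bar{y}$ in the statement being the steady-state value $\bar{x}$ of $y$). Writing $w:=\text{col}(y_0-\bar{x},\lambda_0-\bar{\lambda})$, the condition $J_{\infty}^p-J_{\infty}^*=0$ is equivalent to $w^T\tilde{P}^{S^*}w=0$. Since $\tilde{P}^{S^*}$ is positive semidefinite, the entire lemma reduces to characterizing $\{w\mid w^T\tilde{P}^{S^*}w=0\}$, i.e. the nullspace of $\tilde{P}^{S^*}$, and showing it coincides with the unobservable subspace $\mathcal{UO}_{S\tilde{C}}$.

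The plan is to pass from the Lyapunov solution to its Gramian (integral) representation. By Assumption \ref{assumption_hurwitz_S}, $S$ is Hurwitz, so as recalled in the preliminaries on the Lyapunov equation, the unique positive semidefinite solution of $\tilde{P}^SS+S^T\tilde{P}^S+\tilde{Q}_2=\mathbf{0}$ admits the representation $\tilde{P}^{S^*}=\int_0^{\infty}e^{S^Tt}\tilde{Q}_2 e^{St}dt$. The key algebraic step is the factorization of $\tilde{Q}_2$ through $\tilde{C}=\begin{bmatrix}RK&-B^T\end{bmatrix}$: a direct computation gives
\begin{align*}
    \tilde{C}^T R^{-1}\tilde{C}&=\begin{bmatrix}K^TR\\-B\end{bmatrix}R^{-1}\begin{bmatrix}RK&-B^T\end{bmatrix}=\begin{bmatrix}K^TRK&-K^TB^T\\-BK&BR^{-1}B^T\end{bmatrix}=\tilde{Q}_2.
\end{align*}
Substituting this into the Gramian yields $w^T\tilde{P}^{S^*}w=\int_0^{\infty}\norm{\tilde{C}e^{St}w}_{R^{-1}}^2 dt$.

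From here the conclusion is immediate: because $R^{-1}$ is positive definite, $\norm{\cdot}_{R^{-1}}$ is a genuine norm and the integrand $t\mapsto\norm{\tilde{C}e^{St}w}_{R^{-1}}^2$ is continuous and nonnegative, so the integral vanishes if and only if the integrand is identically zero, i.e. $\tilde{C}e^{St}w=\mathbf{0}$ for all $t\ge 0$. By the definition \eqref{eq_unobservable_subspace_definition} this is exactly $w\in\mathcal{UO}_{S\tilde{C}}$, giving both directions at once. The one point I would flag as the main (if minor) obstacle is that the quadratic form carries the weight $R^{-1}$ rather than the identity, so the Gramian appearing is weighted; I would stress that since $R^{-1}\succ 0$ this weight does not change the vanishing set, $\norm{\tilde{C}e^{St}w}_{R^{-1}}=0\iff\tilde{C}e^{St}w=\mathbf{0}$, hence the weighted unobservable subspace coincides with the ordinary $\mathcal{UO}_{S\tilde{C}}$. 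Everything else is the routine equivalence between ``the integral of a continuous nonnegative function is zero'' and ``the function vanishes identically.''
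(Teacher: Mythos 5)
Your proof is correct and follows essentially the same route as the paper's: both reduce the gap to $w^T\tilde{P}^{S^*}w=0$ via \eqref{perf_near}, pass to the Gramian representation of the Lyapunov solution, use the factorization $\tilde{Q}_2=\tilde{C}^TR^{-1}\tilde{C}$ to write the quadratic form as $\int_0^{\infty}\norm{\tilde{C}e^{St}w}_{R^{-1}}^2dt$, and conclude from the definition of the unobservable subspace. Your explicit verification of the factorization and the remark that the positive definite weight $R^{-1}$ does not change the vanishing set are minor elaborations of steps the paper leaves implicit.
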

\begin{proof}
    Denote by $\alpha = \text{col}(y_0-\bar{y}, \lambda_0-\bar{\lambda})$. $J_{\infty}^p-J_{\infty}^* = 0$ holds if and only if $\alpha^T \tilde{P}^{S^*} \alpha=0$. Since $\tilde{P}^{S^*}$ is the solution to the Lyapunov equation $\tilde{P}^S S+S^T \tilde{P}^S+\tilde{Q}_2=0$, it can be expressed as follows.
    \begin{align*}
        \tilde{P}^{S^*}&=\int_{0}^{\infty} e^{S^T t}\tilde{Q}_2 e^{St}dt\\
        &=\int_{0}^{\infty} e^{S^T t} \tilde{C}^TR^{-1}\tilde{C}  e^{St}dt
    \end{align*}
    where the fact $\tilde{C}^TR^{-1}\tilde{C} = \tilde{Q}_2$ is used. Then we have 
    \begin{align*}
        \alpha^T \tilde{P}^{S^*}\alpha&=\int_{0}^{\infty} \alpha e^{S^T t} \tilde{C}^TR^{-1}\tilde{C}  e^{St}\alpha dt\\
        &=\int_{0}^{\infty} \norm{\tilde{C}e^{St} \alpha}_{R^{-1}}^2 dt
    \end{align*} 
    Therefore, $J_{\infty}^p-J_{\infty}^*=0$ if and only if $\tilde{C}e^{St} \alpha=\mathbf{0}$. Recalling the definition of unobservable subspace in \eqref{eq_unobservable_subspace_definition}, we can know $\tilde{C}e^{St}\alpha=\mathbf{0}$ implies $\alpha$ is in the unobservable subspace of $(S, \tilde{C})$. This completes the proof.
\end{proof}

Lemma \ref{corollary_zero_gap} shows that the proposed controller can achieve \textbf{P2} for some initial states of $y$ and $\lambda$ when $(S, \tilde{C})$ is not observable. We give the following example to show that the conditions can be satisfied by some systems.
\begin{example}
    Consider the following system.
    \begin{align}
        \dot{x} &= \begin{bmatrix}
            0&1\\0&0
        \end{bmatrix}x+\begin{bmatrix}
            0\\1
        \end{bmatrix}u+\begin{bmatrix}
            0\\1
        \end{bmatrix}\\
        Q & = \begin{bmatrix}
            1&0\\0&1
        \end{bmatrix}, R = 1, K^y = K^{\lambda} = I_2
    \end{align}

    We first show the example satisfies Assumption \ref{assumption_system_property} and \ref{assumption_hurwitz_S}. The system above is with the controllable canonical form, and thus stabilizable. The observability matrix of $(A, Q)$ is of full column rank since $Q$ is of full column rank. Then $(A, Q)$ is observable, and thus detectable. Therefore, Assumption \ref{assumption_system_property} is satisfied. Since $Q$ is positive definite, Assumption \ref{assumption_hurwitz_S} also holds by Lemma \ref{lemma_Hurwitz_property_T_S}. 
    
    Then, it can be easily obtained that $\bar{x} = (0, 0)^T, \bar{\lambda} = (0, 1)^T, K = (1, \sqrt {3})$. Then we have 
    \begin{align*}
        S = \begin{bmatrix}
            -1&0&0&0\\0&-1 &-1&0\\0&1&0&0\\0&0&0&-1
        \end{bmatrix},\tilde{C}=\begin{bmatrix}
            1&\sqrt{3}&0&-1
        \end{bmatrix}
    \end{align*}
    The unobservable space of $(S, \tilde{C})$ is $\{x\in \mathbb{R}^4| x=k\alpha, \forall k \in \mathbb{R}\}$ with $\alpha = (1,0,0,1)$. Therefore, when $y_0=(1, 0)^T$ and $\lambda_0 = (0, 2)^T$, $J^p_{\infty}-J^*_{\infty}$ will be zero.
\end{example}

The following lemma shows the relationship between the performance gap and the control gain $K^S$.

\begin{lemma}
    \label{corollary_1_k}
    Under the same initial value $(y_0, \lambda_0)$, the performance gap $J_{\infty}^p-J_{\infty}^*$ is inversely proportional to the gain parameter $K^S$, i.e. if $K^{S_1}=k K^{S_2}$, the corresponding performance difference will be 
    \begin{align*}
        J_{\infty}^{p_1}-J_{\infty}^*=\frac{1}{k}(J_{\infty}^{p_2}-J_{\infty}^*)
    \end{align*}
    where $J_{\infty}^{p_1}$ and $J_{\infty}^{p_2}$ are the system performance under control gains $K^{S_1}$ and $K^{S_2}$, respectively. 
\end{lemma}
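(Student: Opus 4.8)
The plan is to read off the gap from the closed-form expression \eqref{perf_near} obtained in the previous theorem,
\begin{align*}
    J_{\infty}^p-J_{\infty}^* = \frac{1}{2}\norm{\begin{bmatrix} y_0-\bar{x} \\ \lambda_0-\bar{\lambda} \end{bmatrix}}^2_{\tilde{P}^{S^*}},
\end{align*}
and to track exactly how each ingredient depends on the gain $K^S$. First I would observe that the offset vector $\alpha := \text{col}(y_0-\bar{x},\lambda_0-\bar{\lambda})$ does not depend on $K^S$: the initial data $(y_0,\lambda_0)$ are held fixed by hypothesis, and $(\bar{x},\bar{\lambda})$ are determined solely by the steady-state problem \eqref{eq_steady_optimization_problem} through the KKT conditions \eqref{eq_KKT_condition_static_OP}, in which neither $K^y$ nor $K^{\lambda}$ appears. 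Consequently, all of the $K^S$-dependence of the gap is carried by the single matrix $\tilde{P}^{S^*}$.

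Next I would use the factorization $S=K^S T$ from \eqref{eq_S_definition}. Writing $K^{S_1}=kK^{S_2}$ with $k>0$ (positivity is forced since both gains are positive definite) gives $S_1=kS_2$ at once. The decisive auxiliary fact is that the right-hand side $\tilde{Q}_2$ of the Lyapunov equation $\tilde{P}^S S+S^T\tilde{P}^S+\tilde{Q}_2=\mathbf{0}$ is itself independent of $K^S$, because $\tilde{Q}_2$ is assembled only from $K=R^{-1}B^TP^*$, $B$, and $R$, and $P^*$ solves the algebraic Riccati equation \eqref{eq_Riccati}, which makes no reference to the primal-dual gains. Substituting $S_1=kS_2$ into the Lyapunov equation associated with $K^{S_1}$ produces $k\left(\tilde{P}^{S_1^*}S_2+S_2^T\tilde{P}^{S_1^*}\right)+\tilde{Q}_2=\mathbf{0}$; comparing this with the Lyapunov equation for $K^{S_2}$ and invoking uniqueness of the positive semidefinite solution (valid since $S_2$, hence $S_1=kS_2$, is Hurwitz under Assumption \ref{assumption_hurwitz_S}) yields $\tilde{P}^{S_1^*}=\frac{1}{k}\tilde{P}^{S_2^*}$.

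Combining the two observations, the gap is linear in $\tilde{P}^{S^*}$, so
\begin{align*}
    J_{\infty}^{p_1}-J_{\infty}^* = \frac{1}{2}\alpha^T\tilde{P}^{S_1^*}\alpha = \frac{1}{2}\alpha^T\frac{1}{k}\tilde{P}^{S_2^*}\alpha = \frac{1}{k}\left(J_{\infty}^{p_2}-J_{\infty}^*\right),
\end{align*}
which is the asserted inverse proportionality. I expect no genuine analytic obstacle here; the only step demanding care is justifying the two independence claims, namely that neither $\alpha$ nor $\tilde{Q}_2$ moves with $K^S$, after which the scaling of $\tilde{P}^{S^*}$ is a one-line manipulation. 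The remaining routine point is to note that $k>0$ preserves the Hurwitz property, so both $K^{S_1}$ and $K^{S_2}$ fall under the standing assumptions and the expression \eqref{perf_near} applies verbatim in each case.
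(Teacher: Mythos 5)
Your proposal is correct and follows essentially the same route as the paper: both exploit the factorization $S=K^S T$ and the linearity of the Lyapunov equation $\tilde{P}^S S+S^T\tilde{P}^S+\tilde{Q}_2=\mathbf{0}$ to conclude that scaling the gain by $k$ scales the solution $\tilde{P}^{S^*}$ by $1/k$, and then read the result off the gap formula \eqref{perf_near}. Your additional care in noting that $\alpha$ and $\tilde{Q}_2$ are independent of $K^S$, that uniqueness of the Lyapunov solution is what pins down the scaling, and that $k>0$ preserves the Hurwitz property makes explicit what the paper leaves implicit, but the argument is the same.
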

\begin{proof}
    Suppose $\tilde{P}^S$ is the solution to the Lyapunov equation $\tilde{P}^S S+S^T\tilde{P}^S+\tilde{Q}_2=\mathbf{0}$, which yields
    \begin{align}
    \label{eq_lyapunov_K_S_T}
        \tilde{P}^S(K^S T)+(K^ST)^T\tilde{P}^S+\tilde{Q}_2=\mathbf{0}
    \end{align}
    Suppose $\tilde{P}_1$ is the solution to the Lyapunov equation when $K^S=K^{S_1}$, i.e. 
    \begin{align*}
        \tilde{P}_1(K^{S_1} T)+(K^{S_1} T)^T\tilde{P}_1+\tilde{Q}_2=\mathbf{0}
    \end{align*}
    We can know that $\tilde{P}_2={k}\tilde{P}_1$ will be the solution to \eqref{eq_lyapunov_K_S_T} when $K^S=K^{S_2}$ since $K^{S_1}=kK^{S_2}$. Then, we have 
    \begin{align*}
        J_{\infty}^{p_1}-J^*&=\frac{1}{2}\norm{\begin{bmatrix}
            y_0-\bar{x} \\ \lambda_0-\bar{\lambda}
        \end{bmatrix}}^2_{\tilde{P}_1}\\
        &=\frac{1}{k}\cdot \frac{1}{2}\norm{\begin{bmatrix}
            y_0-\bar{x} \\ \lambda_0-\bar{\lambda}
        \end{bmatrix}}^2_{\tilde{P}_2}\\
        &=\frac{1}{k}(J_{\infty}^{p_2}-J_{\infty}^*)
    \end{align*}
which shows that the performance gap is inversely proportional to the control gain. 
\end{proof}

Lemma \ref{corollary_1_k} shows that \textbf{P2} can be realized with any given error bound by increasing the control gain. Together with Theorem \ref{theorem_asymptotic_stablity}, both \textbf{P1} and \textbf{P2} are achieved simultaneously by the near-optimal controller.

\section{Case study}
\label{section_case_study}
\begin{figure}[t]
    \centering
    \includegraphics[width=0.15\textwidth]{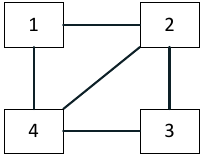}
    \caption{Structure of the four-bus system} 
    \label{fig_structure_of_four_bus} 
    \vspace{-0.3cm}
\end{figure}

% \begin{table}[t]
%     \footnotesize
%     \centering
%         \caption{Bus Parameters}
%     \label{table_bus_parameter}
% 	\begin{tabular}{c|cccc}
%         \hline
%         Bus & Bus 1&Bus 2 &Bus 3 &Bus 4\\
%         \hline
%         $M_i$&2.0 &1.5&1.8&3.0 \\
%         $D_i$&2.0 &2.0&3.0&4.0\\
%         $q_i$&15.0 &10.0&12.0&18.0\\
%         $r_i$&1.0 &1.0&2.0&1.5\\
%         \hline
% 	\end{tabular}
% \end{table}
% \begin{table}[t]
% \footnotesize
% \centering
%         \caption{Line parameters}
%     \label{table_line_parameter}
% 	\begin{tabular}{c|ccccc}
% 		\hline
% 		Line $(i,j)$& $(1,2)$& $(2,3)$ & $(3,4)$&$(4,1)$&$(2,4)$ \\
%         \hline
%         Reactance $B_{ij}$&1.5&1.0&2.0&1.8&2.5\\
% 		\hline
% 	\end{tabular}
% \end{table}
\begin{table}[t]
    \footnotesize
    \centering
    \caption{System parameters}
    \label{label_system_parameters}
    \begin{subtable}[t]{0.5\textwidth}
    \centering
        \caption{Bus Parameters}
    \label{table_bus_parameter}
	\begin{tabular}{c|cccc}
        \hline
        Bus & Bus 1&Bus 2 &Bus 3 &Bus 4\\
        \hline
        $M_i$&2.0 &1.5&1.8&3.0 \\
        $D_i$&2.0 &2.0&3.0&4.0\\
        $q_i$&15.0 &10.0&12.0&18.0\\
        $r_i$&1.0 &1.0&2.0&1.5\\
        \hline
	\end{tabular}
    \end{subtable}
    \\ 
    \hspace{0.1mm}
     \begin{subtable}[t]{0.5\textwidth}
    \centering
        \caption{Line parameters}
    \label{table_line_parameter}
	\begin{tabular}{c|ccccc}
		\hline
		Line $(i,j)$& $(1,2)$& $(2,3)$ & $(3,4)$&$(4,1)$&$(2,4)$ \\
        \hline
        Reactance $B_{ij}$&1.5&1.0&2.0&1.8&2.5\\
		\hline
	\end{tabular}
    \end{subtable}
\end{table}
\begin{table}[t]
\footnotesize
\centering
        \caption{Optimal steady state}
    \label{table_optimal_steady_state}
    \begin{tabular}{c|ccc}
        \hline
        Bus&$\bar{\delta}_{i}$&$\bar{\omega}_i$&$\bar{u}_i$ \\
        \hline
        {Bus 1} & -0.201 &-0.298&1.491\\
        {Bus 2} & 0.500 &-0.298&1.491\\
        {Bus 3} & 0.713 &-0.298&0.745\\
        {Bus 4} & 0 &-0.298&0.994\\
        \hline
	\end{tabular}
\end{table}

\begin{figure}[t]
    \centering
    \includegraphics[width=0.4\textwidth]{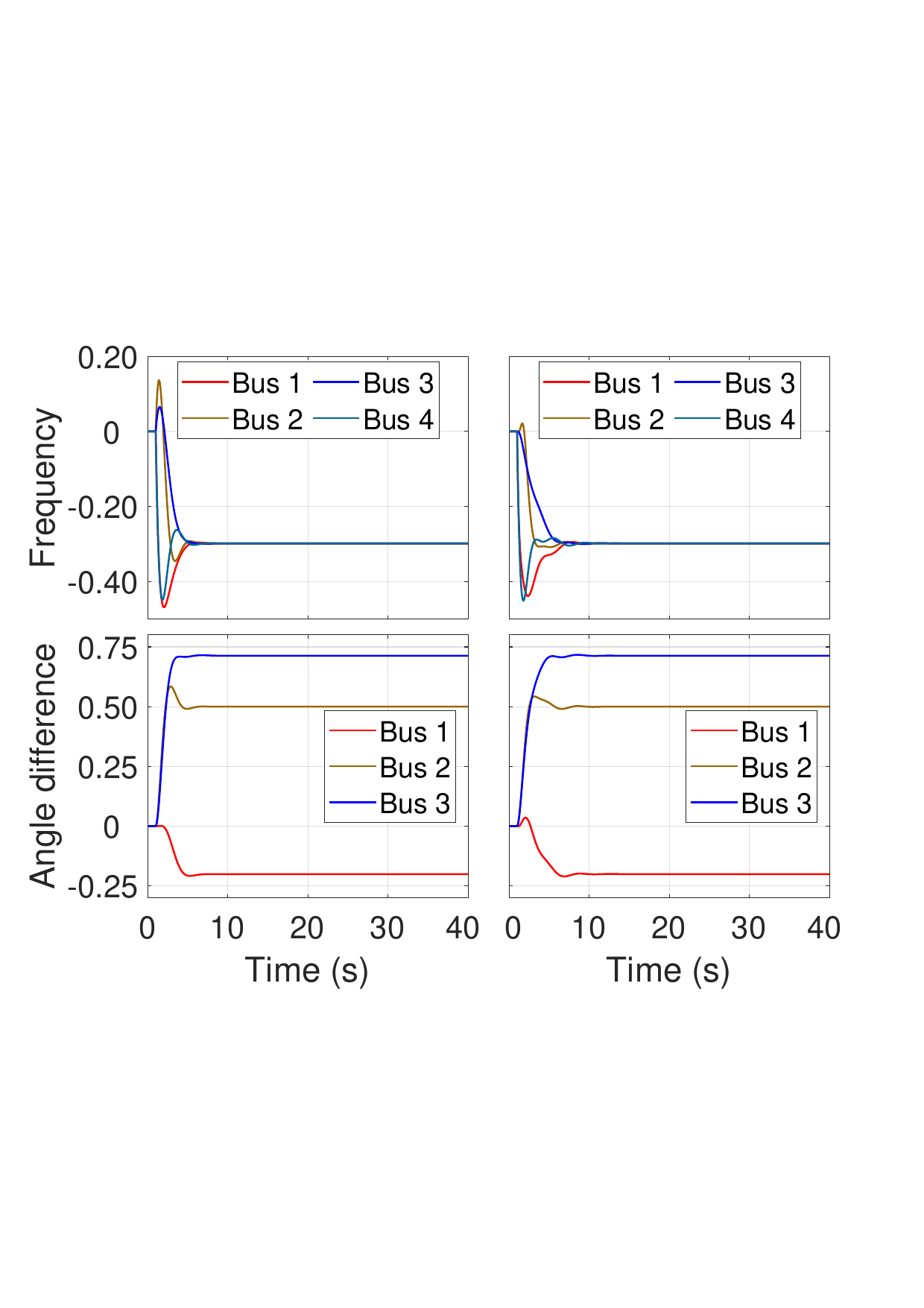}
    \caption{System frequencies and angle differences under optimal controller(Left) and proposed controller(Right)}
    \label{fig_state_trajectory_frequency_angle}
\end{figure}

\subsection{Simulation settings of a power system}
We perform the simulations on a simplified power system model with 4 buses, whose structure is shown in Fig. \ref{fig_structure_of_four_bus}. Every bus has a phase angle $\theta_i$ and a frequency $\omega_i$. The frequency control dynamics of the power system is as follows.
\begin{subequations}
    \label{eq_power_dynamics}
    \begin{align}
        \dot{\theta}_i&=\omega_i,\forall i\in \mathcal{N}\\
        M_i\dot{\omega}_i&=u_i-d_i-D_i\omega_i-\sum_{j\in \mathcal{N}_i} B_{ij}(\theta_i-\theta_j),\forall i\in\mathcal{N}
    \end{align}
\end{subequations}
where $u_i$ is the controllable power, $d_i$ is the power disturbance, $M_i>0$ is the inertia of bus $i$, $D_i>0$ is the dampness, $B_{ij}>0$ is the reactance of the line between bus $i$ and $j$, $\mathcal{N}_i$ is the set of the buses connected to $i$, $\mathcal{N}$ is the bus set, i.e. $\mathcal{N}=\{1,2,3,4\}$.

We choose to minimize the following target
\begin{align}
    f = \frac{1}{2}\sum_{i\in\mathcal{N}}(q_i \omega_i^2 + r_iu_i^2)
\end{align}
where $q_i$ is the weight for the cost of frequency $\omega_i$, and $r_i$ is the coefficient for the cost of power $u_i$. 

It can be seen that $\theta_i$ has a redundant state since the dynamics \eqref{eq_power_dynamics} is related to the angle difference $\theta_i-\theta_j$. We eliminate $\theta_4$ by introducing new variables $\delta_i,i\in\{1,2,3\}$ to represent the angle differences, i.e. $\delta_i = \theta_i-\theta_4$. Specifically, we define $\delta_4 \equiv 0$. Then \eqref{eq_power_dynamics} can be written as 
\begin{subequations}
    \label{eq_power_dynamics_delta}
    \begin{align}
        \dot{\delta}_i&=\omega_i-\omega_4,\forall i\in \{1,2,3\}\\
        M_i\dot{\omega}_i&=u_i-d_i-D_i\omega_i-\sum_{j\in \mathcal{N}_i} B_{ij}(\delta_i-\delta_j),\forall i\in \mathcal{N}
    \end{align}
\end{subequations}

The system is asymptotically stable under a constant $u_i$ \cite{cuiTPS2023reinforcement}. Therefore, the stabilizable and detectable property in Assumption \ref{assumption_system_property} is satisfied. The asymptotic stability of primal-dual dynamics in Assumption \ref{assumption_hurwitz_S} has been widely investigated in frequency control in power systems, e.g. \cite{zhao2014design, chenTCNS2020distributed}, and we omit the proof here.

The system parameters are given in Table \ref{label_system_parameters}.

\subsection{Simulation results}
The system works at the equilibrium when $t<1s$. At $t_0=1s$, we apply a power disturbance $d$ in the system, where $d=\text{col}(d_1,d_2,d_3,d_4)=\text{col}(3.5, 0, 0, 4.5)$.

\begin{figure}[t]
    \centering
    \includegraphics[width=0.35\textwidth]{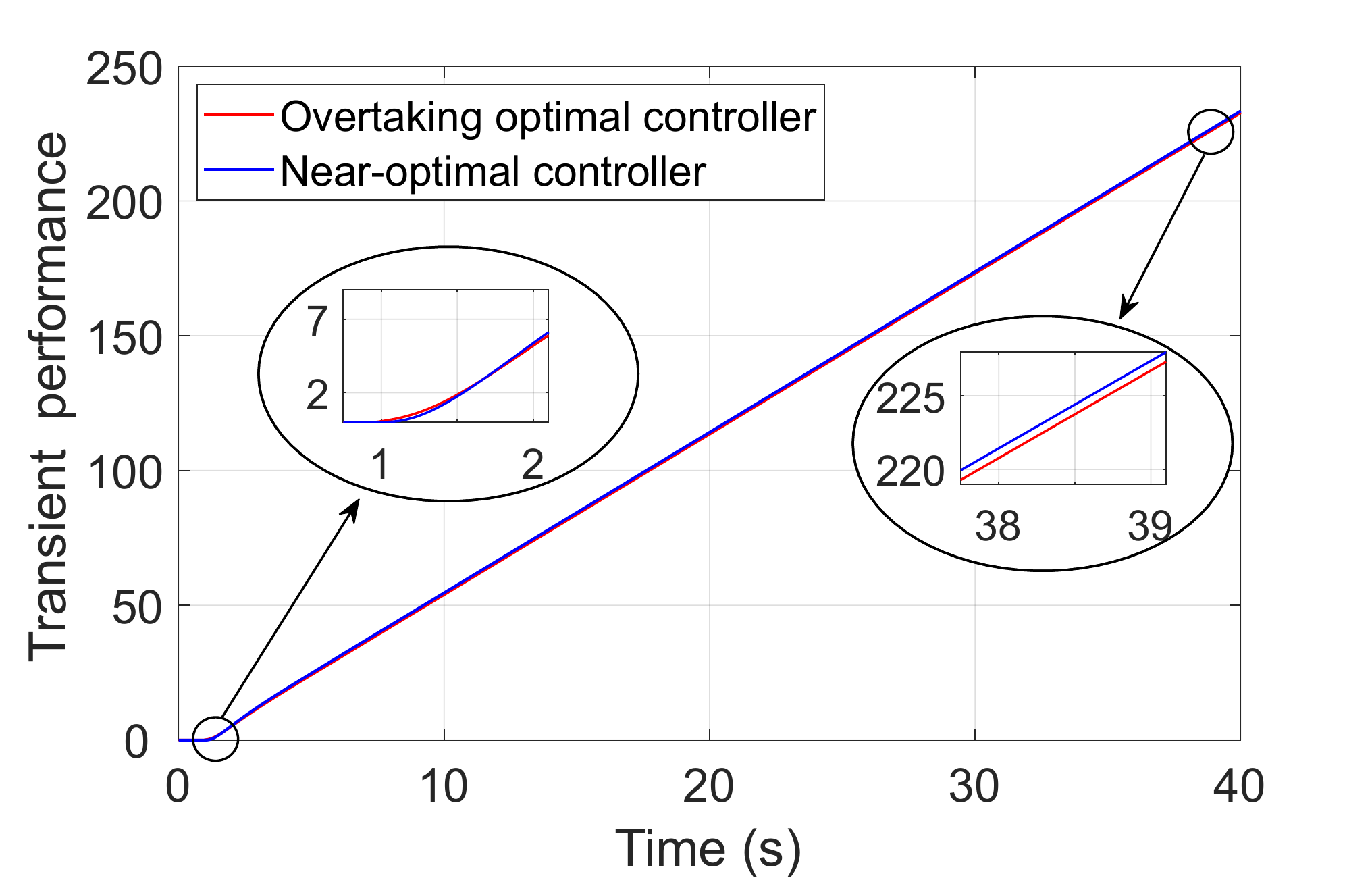}
    \caption{Transient Performance index $J_T$}
    \label{fig_performance_J_T}
\end{figure}

\begin{figure}[t]
    \centering
    \includegraphics[width=0.35\textwidth]{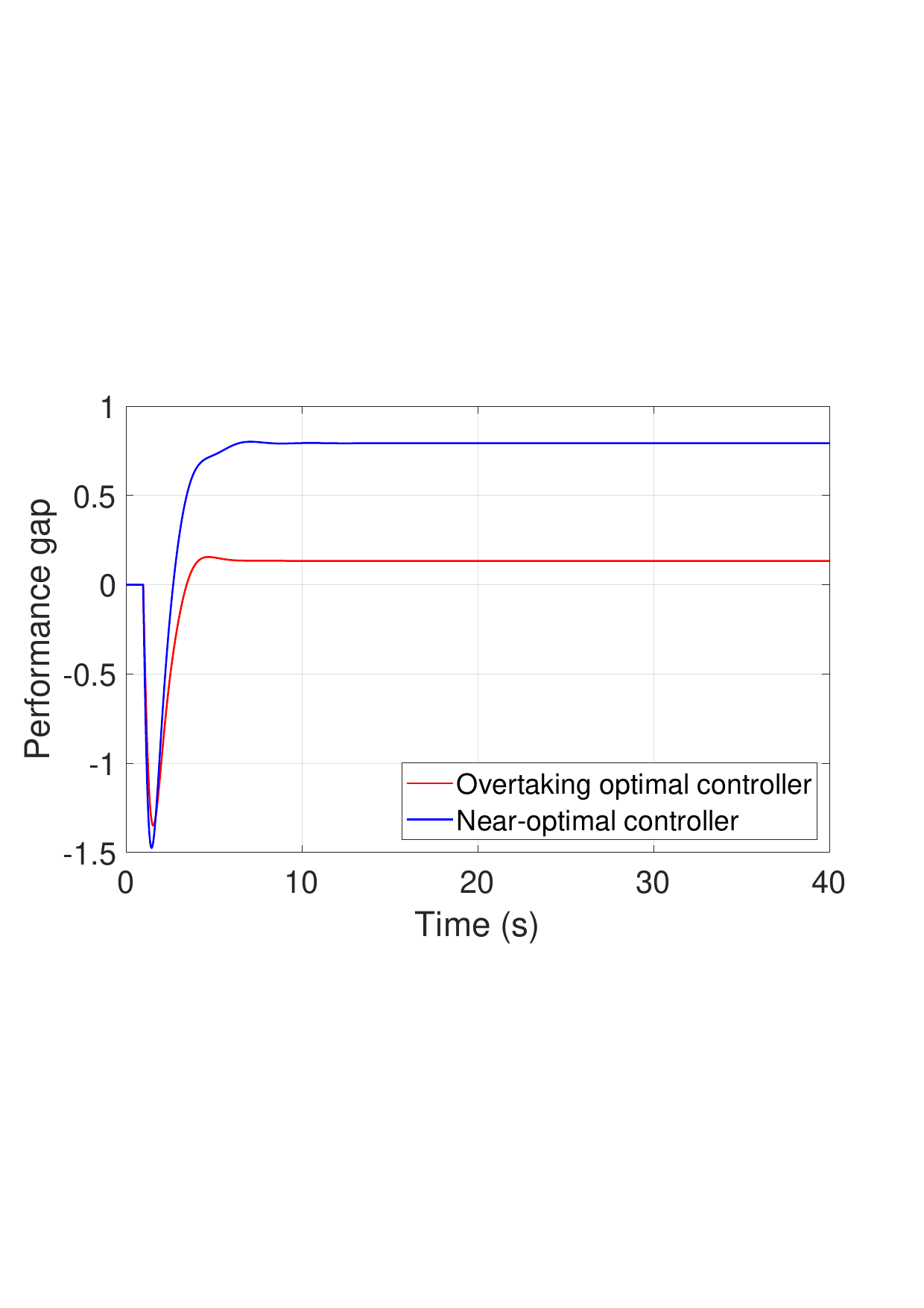}
    \caption{Transient performance gap with optimal steady state $J_T-\bar{f}\cdot(T-t_0)$}
    \label{fig_performance_J_T_difference_J_bar}
\end{figure}

\begin{figure}[t]
    \centering
    \includegraphics[width=0.35\textwidth]{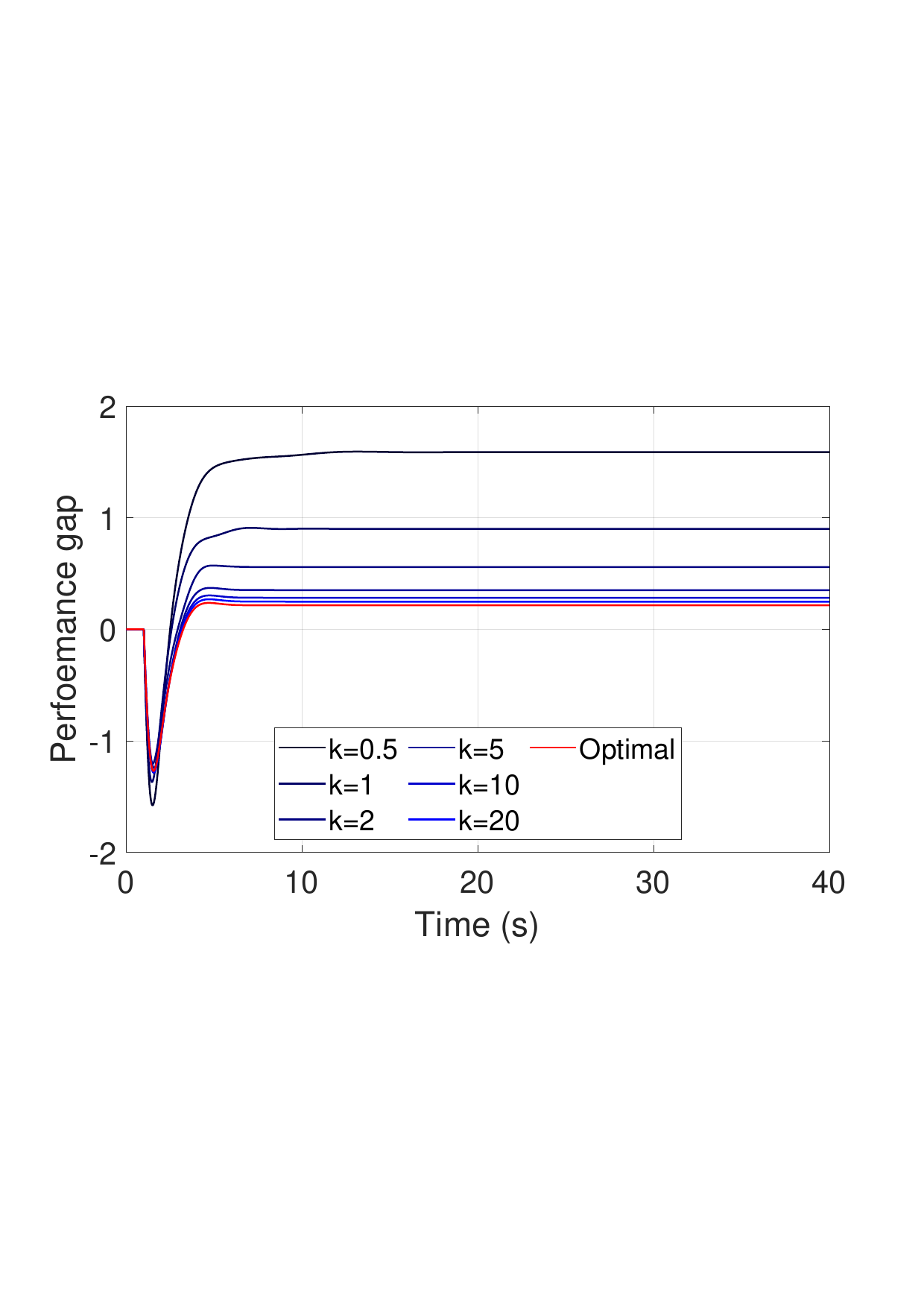}
    \caption{Transient performance gap $J_T-\bar{f}(T-t_0)$}
    \label{fig_performance_difference}
\end{figure}
\begin{figure}[t]
    \centering
    \includegraphics[width=0.35\textwidth]{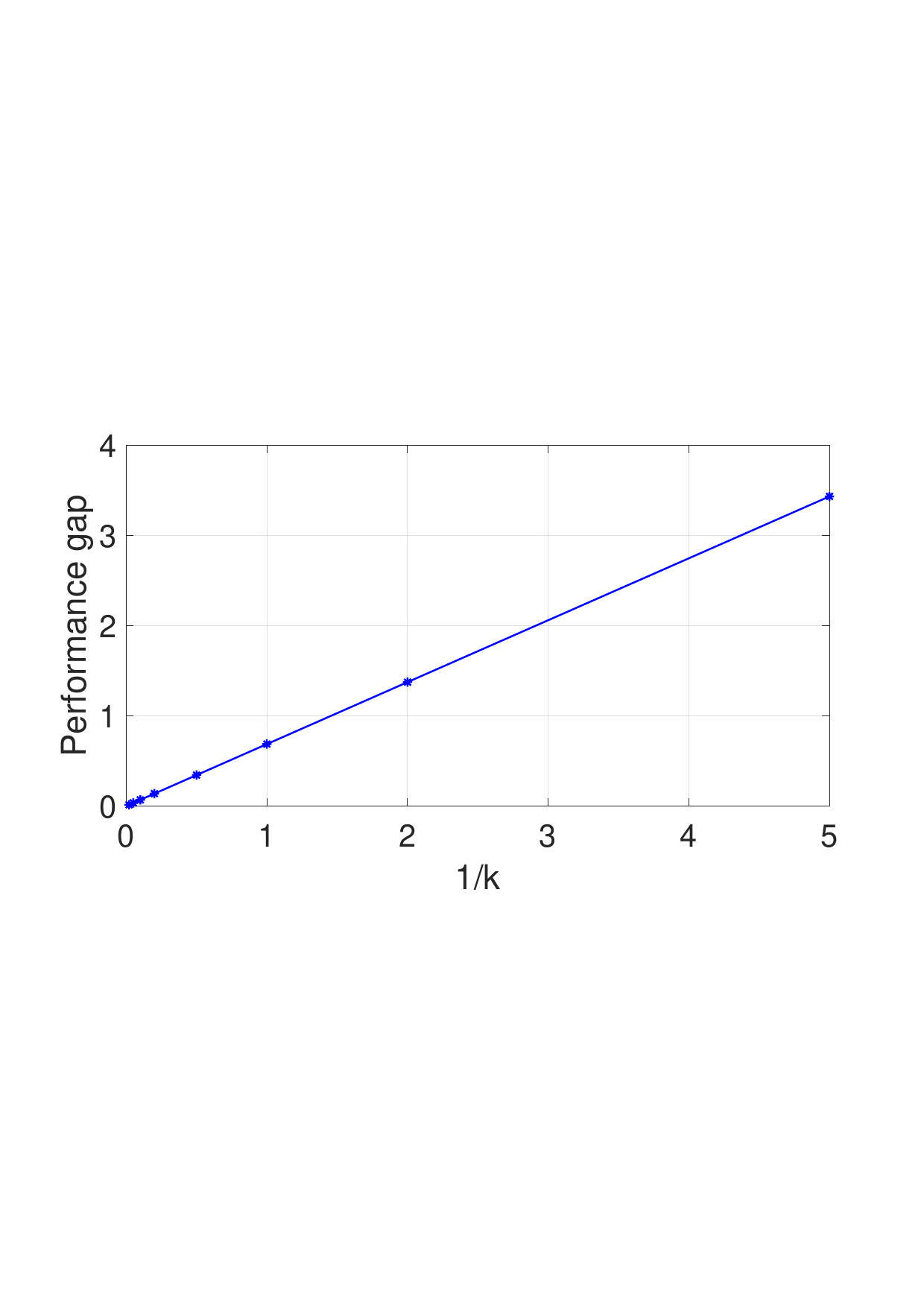}
    \caption{Transient performance gap $J_{T}^{p}-J_{T}^*$ with $T=40s$ under different $k$} 
    \label{fig_performance_under_k} 
\end{figure}

We compare the results under the overtaking optimal controller with the proposed near-optimal controller when introducing the disturbance. First, the optimal steady state $(\bar{\omega}_i, \bar{\delta}_i, \bar{u}_i),i\in\{1,2,3,4\}$ is computed by quadratic programming and shown in Table \ref{table_optimal_steady_state}. The optimal control feedback gain $K$ is 
\begin{align*}
    \begin{bmatrix}
        -0.1  & 0.09& -0.03&  2.31 & 0.  & -0.01& -0.03\\
        -0.09 & 0.06&  0.04&  0.   & 1.76&  0.01&  0.03\\
        -0.01 & 0.01&  0.02& -0.01 & 0.01&  0.88&  0.  \\
        -0.06 & 0.04&  0.03& -0.01 & 0.01&  0.  &  1.29\\
    \end{bmatrix}
\end{align*}
where the first three columns are gains for $\delta_i,i\in\{1,2,3\}$, and the others are gains for $\omega_i,i\in \{1,2,3,4\}$.

 In the proposed near-optimal controller, we first set $K^y=\bar{K}^y$ with $\bar{K}^y$ as the diagonal matrix with elements $(0.2, 0.2 ,0.2, 0.5, 0.5, 0.5, 0.5)$ on the diagonal, and $K^{\lambda}=\bar{K}^{\lambda}$ with $\bar{K}^{\lambda}$ as the diagonal matrix with elements $(10, 10, 10, 10, 10, 10, 10)$ on the diagonal. The trajectories of $\delta_i$ and $ \omega_i$ are shown in Fig. \ref{fig_state_trajectory_frequency_angle}. Compared with optimal state $\bar{\omega}_i, \bar{\delta}_i$, both the overtaking optimal controller and the near-optimal controller steer the system to the optimal steady state. 

To illustrate the transient gap of the performance index under different controllers, we plot the finite-time horizon performance index $J_T$ in Fig. \ref{fig_performance_J_T}, which shows that 1) during the transient process, the performance indices of both controllers increase after the disturbance occurs, 2) when the system enters the steady state (e.g. $t>10$s), both performance indices increase linearly as time goes on, and 3) the performance index of the proposed controller is a little higher than the optimal controller. Since the performance indices increase at the rate of $\bar{f}=\frac{1}{2}\sum_{i=1}^4(q_i \bar{\omega}_i^2 + r_i\bar{u}_i^2)$ after the system enters the steady state, we plot the gap of the performances with $\bar{f}\cdot (T-t_0)$ in Fig. \ref{fig_performance_J_T_difference_J_bar}, which clearly shows the differences of the proposed controller and the optimal controller.

We then perform the simulations under different control gains $K^y, K^{\lambda}$. We set $K^y=k\bar{K}^y,K^{\lambda}=k\bar{K}^{\lambda}$ with $k=0.5,1,2,5,10, 20$, respectively. The trajectories of $J_T-\bar{f}\cdot (T-t_0)$ under optimal controller and proposed controller with different $k$ are shown in Fig. \ref{fig_performance_difference}. As we increase $k$, the performance gaps between the proposed and optimal control are closer. To further validate the inversely proportional relationship between the performance gaps $J_{\infty}^p-J_{\infty}^*$ and parameters $K^y, K^{\lambda}$, we plot the values of $J_T^p-J_T^*$ at $T=40$s to approximate $J_\infty^p-J_\infty^*$ in Fig. \ref{fig_performance_under_k}. The result is consistent with the theoretical analysis.

\section{Conclusion}

We have investigated the optimal control problem in both the steady state and the transient process of an LTI system with unknown disturbances. The overtaking optimal controller can achieve the optimal performance in the two stages, which, however, requires detailed information of the disturbance. Leveraging the primal-dual dynamics, a near-optimal controller is proposed, which can steer the system with unknown disturbances to the optimal steady state. The transient performance gap compared with the optimal value is proved to be zero or inversely proportional to the control gains, depending on the initial state. 
Since the proposed near-optimal controller is independent of the disturbances, it can also be implemented in time-varying systems. The immediate future work is the performance analysis with time-varying disturbances. Besides, the near-optimal controller is currently implemented in a centralized manner. How to extend it to a distributed way is also significant in many systems, such as power systems and communication networks.

\appendices
\makeatletter
\@addtoreset{equation}{section}
\@addtoreset{theorem}{section}
\makeatother
\renewcommand{\theequation}{A.\arabic{equation}}
\renewcommand{\thetheorem}{A.\arabic{theorem}}
\setcounter{equation}{0}
\section*{Appendix}

\subsubsection*{Proof of Theorem \ref{theorem_overtaking_optimal}} 
\label{appendix_theorem_overtaking_optimal}
The optimal control problem is written as follows.
\begin{align*}
    \min_{u} &\quad J_{\infty}(x_0,u)=\int_0^{\infty} f(x,u)dt\\
    s.t. &\qquad   \dot{x}=g(x,u)
\end{align*}
where $f(x,u)$ is defined in \eqref{eq_steady_optimization_problem_target}, and $g(x,u):=Ax+Bu+Cd$. We will prove $u^*(t)$ in \eqref{eq_optimal_control_trace} to be the overtaking optimal controller. 

First, we give the solution to the Hamilton-Jacobi-Bellman equation in the following lemma.

% Denote the Lagrangian function by $\mathcal{L}(x, u, \lambda)$, which is as follows.
% \begin{align*}
%     L(x, u, \lambda)&=L(x, u)+\lambda^Tf(x, u)\\
%     &=\frac{1}{2}(x^TQx+u^TRu)+\lambda^T(Ax+Bu+Cd)
% \end{align*}
% The unique minimum for $\mathcal{L}$ about $u$ is $u=h(x, \lambda)$, and we can easily know that $h(x, \lambda)=-R^{-1}B^T\lambda$. Define $\mathcal{L}'(x, \lambda)=\min_{u}L(x, u, \lambda)$, which yields
% \begin{align*}
%     \mathcal{L}'(x, \lambda)&=\frac{1}{2}x^TQx-\frac{1}{2}\lambda^TBR^{-1}B^T\lambda +\lambda^T(Ax+Cd)
% \end{align*}

% To simplify the problem, we first give the solution to the corresponding Hamilton-Jacobi-Bellman equation, which will be used later.

\begin{lemma}
    For the Hamilton-Jacobi-Bellman equation in \eqref{eq_HJB}, 
    the function $V(x, t)$ in \eqref{eq_V_t} is the solution.
    \begin{align}
        \label{eq_HJB}
        \frac{\partial{V}}{\partial t}+f\left(x,h\left(x,\frac{\partial V}{\partial x}\right)\right)+\frac{\partial V}{\partial x}  ~g\left(x, h\left(x,\frac{\partial V}{\partial x}\right)\right)=0
    \end{align}
    \begin{align}
        \label{eq_V_t}
        V(x, t)&=-\tilde{L}(\bar{x}, \bar{\lambda})t+\frac{1}{2}(x-\bar{x})^TP^*(x-\bar{x})+\bar{\lambda}^T(x-\bar{x})
    \end{align}
    where $h(\cdot, \cdot)$ and  is the function defined in \eqref{eq_h_definition}, $\tilde{L}(\cdot, \cdot)$ is the function defined in \eqref{eq_reduced_Lagrangian}, $\bar{x}, \bar{\lambda}$ are defined in Lemma \ref{lemma_steady_state_problem}, and $P^*$ is the solution to the Riccati equation in Theorem \ref{theorem_overtaking_optimal}.
\end{lemma}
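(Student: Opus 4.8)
The plan is to verify directly that the proposed $V(x,t)$ in \eqref{eq_V_t} satisfies the HJB equation \eqref{eq_HJB}, by substituting it in and showing the left-hand side collapses to zero. First I would compute the two partial derivatives. Since the only $t$-dependence is the linear term $-\tilde{L}(\bar{x},\bar{\lambda})t$, we have $\partial V/\partial t = -\tilde{L}(\bar{x},\bar{\lambda})$, a constant. Writing $p := \partial V/\partial x = P^*(x-\bar{x})+\bar{\lambda}$, the minimizing control fed into \eqref{eq_HJB} is $h(x,p)=-R^{-1}B^T p$ by \eqref{eq_h_definition}; using $K=R^{-1}B^TP^*$ and $\bar{u}=-R^{-1}B^T\bar{\lambda}$ from \eqref{eq_KKT_condition_static_OP_2}, this is exactly the feedback $u^*(t)=-K(x-\bar{x})+\bar{u}$ of Theorem \ref{theorem_overtaking_optimal}, which is a reassuring consistency check.

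Next I would substitute $u=-R^{-1}B^T p$ into the combination $f(x,u)+\tfrac{\partial V}{\partial x}\,g(x,u)$. Using $u^TRu = p^T BR^{-1}B^T p$ and $g(x,u)=Ax-BR^{-1}B^T p+Cd$, these two terms collapse into $\tfrac12 x^TQx - \tfrac12 p^TBR^{-1}B^T p + p^TAx + p^TCd$. The key observation is that this expression is precisely the reduced Lagrangian $\tilde{L}(x,p)$ of \eqref{eq_reduced_Lagrangian} evaluated at the pair $(x,p)$. Hence the entire left-hand side of \eqref{eq_HJB} equals $\tilde{L}(x,p)-\tilde{L}(\bar{x},\bar{\lambda})$, and the proof reduces to showing that this difference vanishes identically in $x$.

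The heart of the argument is therefore establishing $\tilde{L}\bigl(x,\,P^*(x-\bar{x})+\bar{\lambda}\bigr)\equiv\tilde{L}(\bar{x},\bar{\lambda})$. I would set $e:=x-\bar{x}$, so $p=\bar{\lambda}+P^*e$, and expand $\tilde{L}(x,p)$ as a polynomial in $e$, sorting it into constant, linear, and quadratic parts. The constant part reproduces $\tilde{L}(\bar{x},\bar{\lambda})$ exactly and cancels against $-\tilde{L}(\bar{x},\bar{\lambda})$. After transposing, the linear coefficient is $(Q\bar{x}+A^T\bar{\lambda})+P^*(A\bar{x}-BR^{-1}B^T\bar{\lambda}+Cd)$; both parenthesized groups vanish by the KKT conditions \eqref{eq_KKT_condition_static_OP} (the second one using $B\bar{u}=-BR^{-1}B^T\bar{\lambda}$). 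Finally, after symmetrizing the cross term via $e^TP^*Ae=\tfrac12 e^T(P^*A+A^TP^*)e$, the quadratic part is $\tfrac12\, e^T\bigl(Q+P^*A+A^TP^*-P^*BR^{-1}B^TP^*\bigr)e$, whose matrix is exactly the left-hand side of the algebraic Riccati equation \eqref{eq_Riccati} and is therefore zero. With all three orders accounted for, the difference is zero and the HJB equation holds.

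I expect the main obstacle to be purely organizational rather than conceptual: correctly collecting all cross terms in the quadratic expansion and symmetrizing them so that the Riccati and KKT identities apply cleanly. Once one recognizes that the left-hand side of \eqref{eq_HJB} is nothing but $\tilde{L}(x,p)-\tilde{L}(\bar{x},\bar{\lambda})$, the remainder is bookkeeping in which the three defining relations of the problem—stationarity \eqref{eq_KKT_condition_static_OP_1}, primal feasibility \eqref{eq_KKT_condition_static_OP_3}, and the Riccati equation \eqref{eq_Riccati}—each kill one order of the polynomial in $e$.
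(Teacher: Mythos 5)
Your proof is correct, and it is organized differently from the paper's. The paper keeps the control $u$ free and, by completing the square, establishes the identity $\mathbb{L}(x,u,t):=f(x,u)+\tfrac{\partial V}{\partial t}+\bigl(\tfrac{\partial V}{\partial x}\bigr)^T g(x,u)=\tfrac{1}{2}\norm{u-\bar{u}+K(x-\bar{x})}_R^2$ for \emph{every} $u$, and only then substitutes $u=h\bigl(x,\tfrac{\partial V}{\partial x}\bigr)=-K(x-\bar{x})+\bar{u}$ to make the square vanish. You instead substitute the minimizing control first, observe that the resulting Hamiltonian $f\bigl(x,h(x,p)\bigr)+p^Tg\bigl(x,h(x,p)\bigr)$ is exactly the reduced Lagrangian $\tilde{L}(x,p)$ of \eqref{eq_reduced_Lagrangian} evaluated at $p=P^*(x-\bar{x})+\bar{\lambda}$, and then show $\tilde{L}\bigl(x,P^*(x-\bar{x})+\bar{\lambda}\bigr)\equiv\tilde{L}(\bar{x},\bar{\lambda})$ by expanding in $e=x-\bar{x}$: the constant order cancels by definition, the linear order dies by stationarity \eqref{eq_KKT_condition_static_OP_1} and feasibility \eqref{eq_KKT_condition_static_OP_3} (via $\bar{u}=-R^{-1}B^T\bar{\lambda}$), and the quadratic order dies by the Riccati equation \eqref{eq_Riccati}; all three steps check out. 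Your route is slightly more economical for the lemma as stated and has a nice conceptual payoff (the minimized Hamiltonian \emph{is} the reduced Lagrangian, and $V$ drifts at the saddle value), but note what the paper's longer route buys: the perfect-square identity for arbitrary $u$ is precisely what gets reused in the proof of Theorem \ref{theorem_overtaking_optimal}, where $\mathbb{L}(x',u',t)$ must be evaluated along an arbitrary admissible control $u'$; with your version of the lemma, that completion of squares would still have to be carried out separately before the overtaking-optimality argument can proceed.
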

\begin{proof}
    Define $\mathbb{L}(x, u, t)$ as 
    \begin{align*}
        \mathbb{L}(x, u, t) &:= f(x, u)+\frac{\partial V}{\partial t}+\left(\frac{\partial V}{\partial x}\right)^T g(x,u)
    \end{align*}
    Then, the Hamilton-Jacobi-Bellman equation can be written as $\mathbb{L}(x, h(x, \frac{\partial V}{\partial x}), t)=0$. Thus, we first simplify $\mathbb{L}(x, u, t)$ as follows.

    \begin{align*}
        \mathbb{L}(x, u, t)&\overset{\textcircled{1}}{=}\frac{1}{2}(x^TQx+u^TRu)-\tilde{L}(\bar{x}, \bar{\lambda})\\
        &\quad+(P^*(x-\bar{x})+\bar{\lambda})^T(Ax+Bu+Cd)\\
        &\overset{\textcircled{2}}{=}\frac{1}{2}x^TQx+\frac{1}{2}u^TRu-\frac{1}{2} \bar{x}^TQ\bar{x}+\frac{1}{2}\bar{\lambda}BR^{-1}B^T\bar{\lambda}\\
        &\quad +(x-\bar{x})^TP^*A(x-\bar{x})+\bar{\lambda}^TBu\\
        &\quad +(x-\bar{x})^TP^*(A\bar{x}+Bu+Cd)+\bar{\lambda}^TA(x-\bar{x})\\
        &\overset{\textcircled{3}}{=}\frac{1}{2} (x-\bar{x})^T(Q+2P^*A)(x-\bar{x})+\bar{x}^TQ(x-\bar{x})\\
        &\quad +\frac{1}{2}u^TRu+\frac{1}{2}\bar{u}^TR\bar{u}+\bar{\lambda}^TBu+\bar{\lambda}^TA(x-\bar{x})\\
        &\quad +(x-\bar{x})^TP^*B(u-\bar{u})\\
        &\overset{\textcircled{4}}{=}\frac{1}{2}(x-\bar{x})^TP^*BR^{-1}B^TP^*(x-\bar{x})\\
        &\quad +\frac{1}{2}(u-\bar{u})^TR(u-\bar{u})+(Q\bar{x}+A^T\bar{\lambda})^T(x-\bar{x})\\
        &\quad + \bar{\lambda}^TBu+\bar{u}^TRu+(x-\bar{x})^TP^*B(u-\bar{u})\\
        &\overset{\textcircled{5}}{=}\frac{1}{2}(x-\bar{x})^TK^TRK(x-\bar{x})\\
        &\quad +\frac{1}{2}(u-\bar{u})^TR(u-\bar{u})+(x-\bar{x})^TRK^T(u-\bar{u})\\
        &\overset{\textcircled{6}}{=}\frac{1}{2}\norm{u-\bar{u}+K(x-\bar{x})}^2_R
    \end{align*}
    where step \textcircled{1} is obtained by substituting $\frac{\partial V}{\partial t}$ and $\frac{\partial V}{\partial x}$; in step \textcircled{2}, the definition of $\tilde{L}$ in \eqref{eq_reduced_Lagrangian} is used; step \textcircled{3} is obtained by collecting the terms containing $x-\bar{x}$ and substituting $\bar{u}=-R^{-1}B^T\bar{\lambda}$; in step \textcircled{4}, the fact $P^*$ is the solution to the Riccati equation is used; step \textcircled{5} utilizes the definition of $K=R^{-1}B^TP^*$ and the condition $Q\bar{x}+A^T\bar{\lambda}=\mathbf{0}$ in \eqref{eq_KKT_condition_static_OP_1}; the result in step \textcircled{6} can be directly verified by computation. 

    Since $h(x, \frac{\partial V}{\partial x}) = -R^{-1}B^T(P^*(x-\bar{x})+\bar{\lambda})=-K(x-\bar{x})+\bar{u}$, we have
    \begin{align*}
        \mathbb{L}(x, h(x, \frac{\partial V}{\partial x}), t)=\frac{1}{2}\norm{h(x, \frac{\partial V}{\partial x})-\bar{u}+K(x-\bar{x})}_R^2=0
    \end{align*}
    which implies $V(x,t)$ in \eqref{eq_V_t} is the solution to the Hamilton-Jacobi-Bellman equation.
\end{proof}

Then we are ready to prove that $u^*=-K(x(t)-\bar{x})+\bar{u}$ is the overtaking optimal control. Denote by $x^*(t)$ the trajectory under $u^*(t)$. For any other control $u'(t)$ and the corresponding trajectory $x'(t)$, we have
\begin{align*}
    &\quad \int_{0}^{T} f(x^*, u^*)dt-\int_{0}^T f(x', u')dt\\
    &=\int_{0}^{T}\mathbb{L}(x^*, u^*, t)dt-\int_{0}^T\mathbb{L}(x',u',t)dt\\
    &\quad -\int_{0}^T \frac{d}{dt}(V(x^*,t))dt+\int_{0}^T\frac{d}{dt}(V(x',t))dt\\
    &=\int_{0}^{T}\mathbb{L}(x^*, u^*, t)dt-\int_{0}^T\mathbb{L}(x',u',t)dt\\
    &\quad -V(x^*(T), T)+V(x'(T), T)\\
    &=\frac{1}{2}\int_{0}^{T}\left(\norm{u^*\!-\!\bar{u}\!+\!K(x^*\!-\!\bar{x})}_R^2 \!-\!\norm{u'\!-\!\bar{u}\!+\!K(x'\!-\!\bar{x})}_R^2\right) dt\\
    &\quad-\frac{1}{2}\norm{x^*(T)-\bar{x}}_{P^*}^2+\frac{1}{2}\norm{x'(T)-\bar{x}}_{P^*}^2\!\\
    &\quad-\bar{\lambda}^T(x^*(T)-x'(T))
\end{align*}
Since $u^*(t)=K(x^*(t)-\bar{x})$ and $x^*(T)\to \bar{x}$, we have
\begin{align*}
    &\quad \limsup_{T\to \infty}\int_{0}^T(f(x^*, u^*)-f(x',u'))\\
    &=\limsup_{T\to \infty} \left(-\frac{1}{2}\int_{0}^T \norm{u'-\bar{u}+K(x'-\bar{x})}_R^2dt\right.\\
    &\quad \quad\quad \quad \quad \quad \left.+\frac{1}{2}\norm{x'(T)-\bar{x}}_{P^*}^2+\bar{\lambda}^T(x'(T)-\bar{x})\right)
\end{align*}

Define $\tilde{x}:=x'-\bar{x}$ and $\tilde{u}:=u'-\bar{u}+K(x'-\bar{x})$. We can know that 
\begin{align}
    \label{eq_dot_tilde_x}
    \dot{\tilde{x}}=(A-BK)\tilde{x}+B\tilde{u}
\end{align}
and 
\begin{align}
    &\quad \limsup_{T\to\infty} \int_0^T(f(x^*, u^*)-f(x',u'))\nonumber\\
    \label{eq_lim_sup_result}
    &=\limsup_{T\to\infty} \left(-\frac{1}{2}\int_{0}^T \norm{\tilde{u}}_R^2dt+\frac{1}{2}\norm{\tilde{x}}_{P^*}^2+\bar{\lambda}\tilde{x}\right)
\end{align}

We then prove \eqref{eq_lim_sup_result} is no larger than $0$ as follows. 
\begin{itemize}
    \item When $\int_{0}^{\infty}\norm{\tilde{u}}_R^2dt$ is bounded, i.e. $\tilde{u}\in \mathcal{L}_2$, the $\mathcal{L}_2$ stabliity theory shows that $\tilde{x}$ converges to $\mathbf{0}$ due to the Hurwitz property of $(A-BK)$. Thus, \eqref{eq_lim_sup_result} is equal to $-\frac{1}{2}\int_{0}^{\infty}\norm{\tilde{u}}_R^2dt$, which is nonpositive.
    \item When $\int_{0}^{\infty}\norm{\tilde{u}}_R^2dt$ is unbounded, since $u'(t)$ is an admissible control, $x'(t)$ and $u'(t)$ are bounded. Therefore, $\tilde{x}(t)$ is bounded. Then we have $-\frac{1}{2}\int_{0}^{T}\norm{\tilde{u}}_R^2dt +\frac{1}{2} \norm{\tilde{x}(T)}_{P^*}^2+\bar{\lambda}\tilde{x}(T)\to -\infty$ as $T\to \infty$.
\end{itemize}

Therefore, we conclude that $u^*(t)$ is the overtaking optimal control.

\section*{References}

\bibliographystyle{IEEEtran}
\bibliography{mybib}

% Generated by IEEEtran.bst, version: 1.14 (2015/08/26)
\begin{thebibliography}{10}
\providecommand{\url}[1]{#1}
\csname url@samestyle\endcsname
\providecommand{\newblock}{\relax}
\providecommand{\bibinfo}[2]{#2}
\providecommand{\BIBentrySTDinterwordspacing}{\spaceskip=0pt\relax}
\providecommand{\BIBentryALTinterwordstretchfactor}{4}
\providecommand{\BIBentryALTinterwordspacing}{\spaceskip=\fontdimen2\font plus
\BIBentryALTinterwordstretchfactor\fontdimen3\font minus
  \fontdimen4\font\relax}
\providecommand{\BIBforeignlanguage}[2]{{%
\expandafter\ifx\csname l@#1\endcsname\relax
\typeout{** WARNING: IEEEtran.bst: No hyphenation pattern has been}%
\typeout{** loaded for the language `#1'. Using the pattern for}%
\typeout{** the default language instead.}%
\else
\language=\csname l@#1\endcsname
\fi
#2}}
\providecommand{\BIBdecl}{\relax}
\BIBdecl

\bibitem{HauswirthAnnuRevControl2024optimization}
A.~Hauswirth, Z.~He, S.~Bolognani, G.~Hug, and F.~Dörfler, ``Optimization
  algorithms as robust feedback controllers,'' \emph{Annual Reviews in
  Control}, vol.~57, p. 100941, 2024.

\bibitem{liu2022merging}
F.~Liu, Z.~Wang, C.~Zhao, and P.~Yang, \emph{Merging Optimization and Control
  in Power Systems: Physical and Cyber Restrictions in Distributed Frequency
  Control and Beyond}.\hskip 1em plus 0.5em minus 0.4em\relax John Wiley \&
  Sons, 2022.

\bibitem{wangJAS2023online}
Z.~Wang, W.~Wei, J.~Z.~F. Pang, F.~Liu, B.~Yang, X.~Guan, and S.~Mei, ``Online
  optimization in power systems with high penetration of renewable generation:
  Advances and prospects,'' \emph{IEEE/CAA Journal of Automatica Sinica},
  vol.~10, no.~4, pp. 839--858, 2023.

\bibitem{zhangTAC2018distributed}
X.~Zhang, A.~Papachristodoulou, and N.~Li, ``Distributed control for reaching
  optimal steady state in network systems: An optimization approach,''
  \emph{IEEE Transactions on Automatic Control}, vol.~63, no.~3, pp. 864--871,
  2018.

\bibitem{menta2018stability}
S.~Menta, A.~Hauswirth, S.~Bolognani, G.~Hug, and F.~D{\"o}rfler, ``Stability
  of dynamic feedback optimization with applications to power systems,'' in
  \emph{2018 56th Annual Allerton Conference on Communication, Control, and
  Computing (Allerton)}.\hskip 1em plus 0.5em minus 0.4em\relax IEEE, 2018, pp.
  136--143.

\bibitem{HauswirthTAC2021Timescale}
A.~Hauswirth, S.~Bolognani, G.~Hug, and F.~Dörfler, ``Timescale separation in
  autonomous optimization,'' \emph{IEEE Transactions on Automatic Control},
  vol.~66, no.~2, pp. 611--624, 2021.

\bibitem{ColombinoTCNS2020Online}
M.~Colombino, E.~Dall’Anese, and A.~Bernstein, ``Online optimization as a
  feedback controller: Stability and tracking,'' \emph{IEEE Transactions on
  Control of Network Systems}, vol.~7, no.~1, pp. 422--432, 2020.

\bibitem{bianchinTCNS2022Timevarying}
G.~Bianchin, J.~Cortés, J.~I. Poveda, and E.~Dall’Anese, ``Time-varying
  optimization of {LTI} systems via projected primal-dual gradient flows,''
  \emph{IEEE Transactions on Control of Network Systems}, vol.~9, no.~1, pp.
  474--486, 2022.

\bibitem{low1999optimization}
S.~H. Low and D.~E. Lapsley, ``Optimization flow control. {I}. {B}asic
  algorithm and convergence,'' \emph{IEEE/ACM Transactions on networking},
  vol.~7, no.~6, pp. 861--874, 1999.

\bibitem{ChiangIEEEP2007Layering}
M.~Chiang, S.~H. Low, A.~R. Calderbank, and J.~C. Doyle, ``Layering as
  optimization decomposition: A mathematical theory of network architectures,''
  \emph{Proceedings of the IEEE}, vol.~95, no.~1, pp. 255--312, 2007.

\bibitem{kasisTSG2017stability}
A.~Kasis, N.~Monshizadeh, E.~Devane, and I.~Lestas, ``Stability and optimality
  of distributed secondary frequency control schemes in power networks,''
  \emph{IEEE Transactions on Smart Grid}, vol.~10, no.~2, pp. 1747--1761, 2017.

\bibitem{wangTSG2019distributedpart1}
Z.~Wang, F.~Liu, S.~H. Low, C.~Zhao, and S.~Mei, ``Distributed frequency
  control with operational constraints, {Part} {I}: Per-node power balance,''
  \emph{IEEE Transactions on Smart Grid}, vol.~10, no.~1, pp. 40--52, 2019.

\bibitem{liTCNS2016connect}
N.~Li, C.~Zhao, and L.~Chen, ``Connecting automatic generation control and
  economic dispatch from an optimization view,'' \emph{IEEE Transactions on
  Control of Network Systems}, vol.~3, no.~3, pp. 254--264, 2016.

\bibitem{cuiTPS2023reinforcement}
W.~Cui, Y.~Jiang, and B.~Zhang, ``Reinforcement learning for optimal primary
  frequency control: A lyapunov approach,'' \emph{IEEE Transactions on Power
  Systems}, vol.~38, no.~2, pp. 1676--1688, 2023.

\bibitem{jiangOJCSYS2022stable}
Y.~Jiang, W.~Cui, B.~Zhang, and J.~Cortés, ``Stable reinforcement learning for
  optimal frequency control: A distributed averaging-based integral approach,''
  \emph{IEEE Open Journal of Control Systems}, vol.~1, pp. 194--209, 2022.

\bibitem{yuanTPS2023learning}
Z.~Yuan, G.~Cavraro, M.~K. Singh, and J.~Cort{\'e}s, ``Learning provably stable
  local {V}olt/{V}ar controllers for efficient network operation,'' \emph{IEEE
  Transactions on Power Systems}, vol.~39, no.~1, pp. 2066--2079, 2023.

\bibitem{vrebieAuto2009Adaptive}
D.~Vrabie, O.~Pastravanu, M.~Abu-Khalaf, and F.~Lewis, ``Adaptive optimal
  control for continuous-time linear systems based on policy iteration,''
  \emph{Automatica}, vol.~45, no.~2, pp. 477--484, 2009.

\bibitem{PangTAC2021Adaptive}
B.~Pang and Z.-P. Jiang, ``Adaptive optimal control of linear periodic systems:
  An off-policy value iteration approach,'' \emph{IEEE Transactions on
  Automatic Control}, vol.~66, no.~2, pp. 888--894, 2021.

\bibitem{PangTAC2023Adaptive}
B.~{Pang} and Z.-P. Jiang, ``Reinforcement learning for adaptive optimal
  stationary control of linear stochastic systems,'' \emph{IEEE Transactions on
  Automatic Control}, vol.~68, no.~4, pp. 2383--2390, 2023.

\bibitem{pontryagin2018mathematical}
L.~S. Pontryagin, \emph{Mathematical theory of optimal processes}.\hskip 1em
  plus 0.5em minus 0.4em\relax Routledge, 2018.

\bibitem{bellman1954theory}
R.~Bellman, ``The theory of dynamic programming,'' \emph{Bulletin of the
  American Mathematical Society}, vol.~60, no.~6, pp. 503--515, 1954.

\bibitem{naidu2002optimal}
D.~S. Naidu, \emph{Optimal control systems}.\hskip 1em plus 0.5em minus
  0.4em\relax CRC press, 2002.

\bibitem{ArtsteinTAC1985Tracking}
Z.~Artstein and A.~Leizarowitz, ``Tracking periodic signals with the overtaking
  criterion,'' \emph{IEEE Transactions on Automatic Control}, vol.~30, no.~11,
  pp. 1123--1126, 1985.

\bibitem{TANSCL1998Overtaking}
H.~Tan and W.~J. Rugh, ``On overtaking optimal tracking for linear systems,''
  \emph{Systems \& Control Letters}, vol.~33, pp. 63--72, 1998.

\bibitem{tanTAC1998nonlinear}
H.~{Tan} and W.~J. {Rugh}, ``Nonlinear overtaking optimal control: Sufficiency,
  stability, and approximation,'' \emph{IEEE transactions on automatic
  control}, vol.~43, no.~12, pp. 1703--1718, 1998.

\bibitem{hespanha2018linear}
J.~P. Hespanha, \emph{Linear systems theory}.\hskip 1em plus 0.5em minus
  0.4em\relax Princeton university press, 2018.

\bibitem{kuvcera1973review}
V.~Ku{\v{c}}era, ``A review of the matrix {R}iccati equation,''
  \emph{Kybernetika}, vol.~9, no.~1, pp. 42--61, 1973.

\bibitem{2002Nonlinear}
H.~K. Khalil, ``Nonlinear systems {T}hird edition,'' \emph{Upper Saddle River
  Nj Prentice Hall Inc}, 2002.

\bibitem{mareels1996adaptive}
I.~Mareels, J.~W. Polderman, I.~Mareels, and J.~W. Polderman, \emph{Adaptive
  systems}.\hskip 1em plus 0.5em minus 0.4em\relax Springer, 1996.

\bibitem{boyd2004convex}
S.~Boyd and L.~Vandenberghe, \emph{Convex optimization}.\hskip 1em plus 0.5em
  minus 0.4em\relax Cambridge university press, 2004.

\bibitem{CortsJNS2016Distributed}
J.~Cort{\'e}s and S.~K. Niederl{\"a}nder, ``Distributed coordination for
  nonsmooth convex optimization via saddle-point dynamics,'' \emph{Journal of
  Nonlinear Science}, vol.~29, pp. 1247 -- 1272, 2016.

\bibitem{ChangCSLetter2019Saddleflow}
C.-Y. Chang, M.~Colombino, J.~Corté, and E.~Dall’Anese, ``Saddle-flow
  dynamics for distributed feedback-based optimization,'' \emph{IEEE Control
  Systems Letters}, vol.~3, no.~4, pp. 948--953, 2019.

\bibitem{TangSIAM2022Running}
Y.~Tang, E.~Dall'Anese, A.~Bernstein, and S.~Low, ``Running primal-dual
  gradient method for time-varying nonconvex problems,'' \emph{SIAM Journal on
  Control and Optimization}, vol.~60, no.~4, pp. 1970--1990, 2022.

\bibitem{zhao2014design}
C.~Zhao, U.~Topcu, N.~Li, and S.~Low, ``Design and stability of load-side
  primary frequency control in power systems,'' \emph{IEEE Transactions on
  Automatic Control}, vol.~59, no.~5, pp. 1177--1189, 2014.

\bibitem{chenTCNS2020distributed}
X.~Chen, C.~Zhao, and N.~Li, ``Distributed automatic load frequency control
  with optimality in power systems,'' \emph{IEEE Transactions on Control of
  Network Systems}, vol.~8, no.~1, pp. 307--318, 2020.

\end{thebibliography}

\begin{IEEEbiography}[{\includegraphics[width=1in,height=1.25in,clip,keepaspectratio]{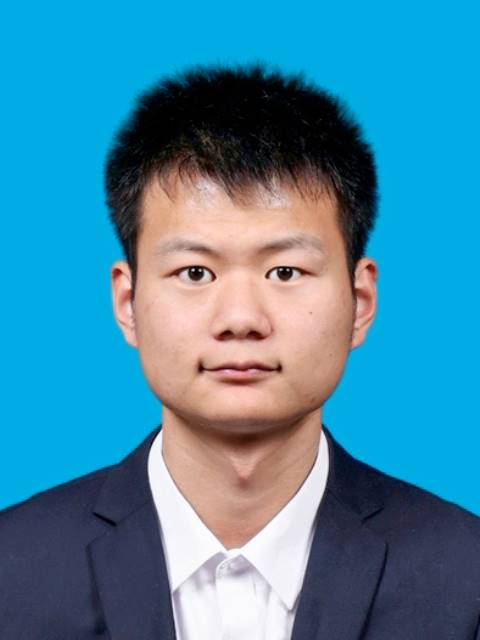}}]{Ming Li} received the B.Sc. degree in automation from Tsinghua University, Beijing, China, in 2018. He is currently pursuing the Ph.D degree in the Department of Automation, Shanghai Jiao Tong University, Shanghai, China. His research interests include game theory, frequency control in power system, and optimal control.
\end{IEEEbiography}

\begin{IEEEbiography}[{\includegraphics[width=1in,height=1.25in,clip,keepaspectratio]{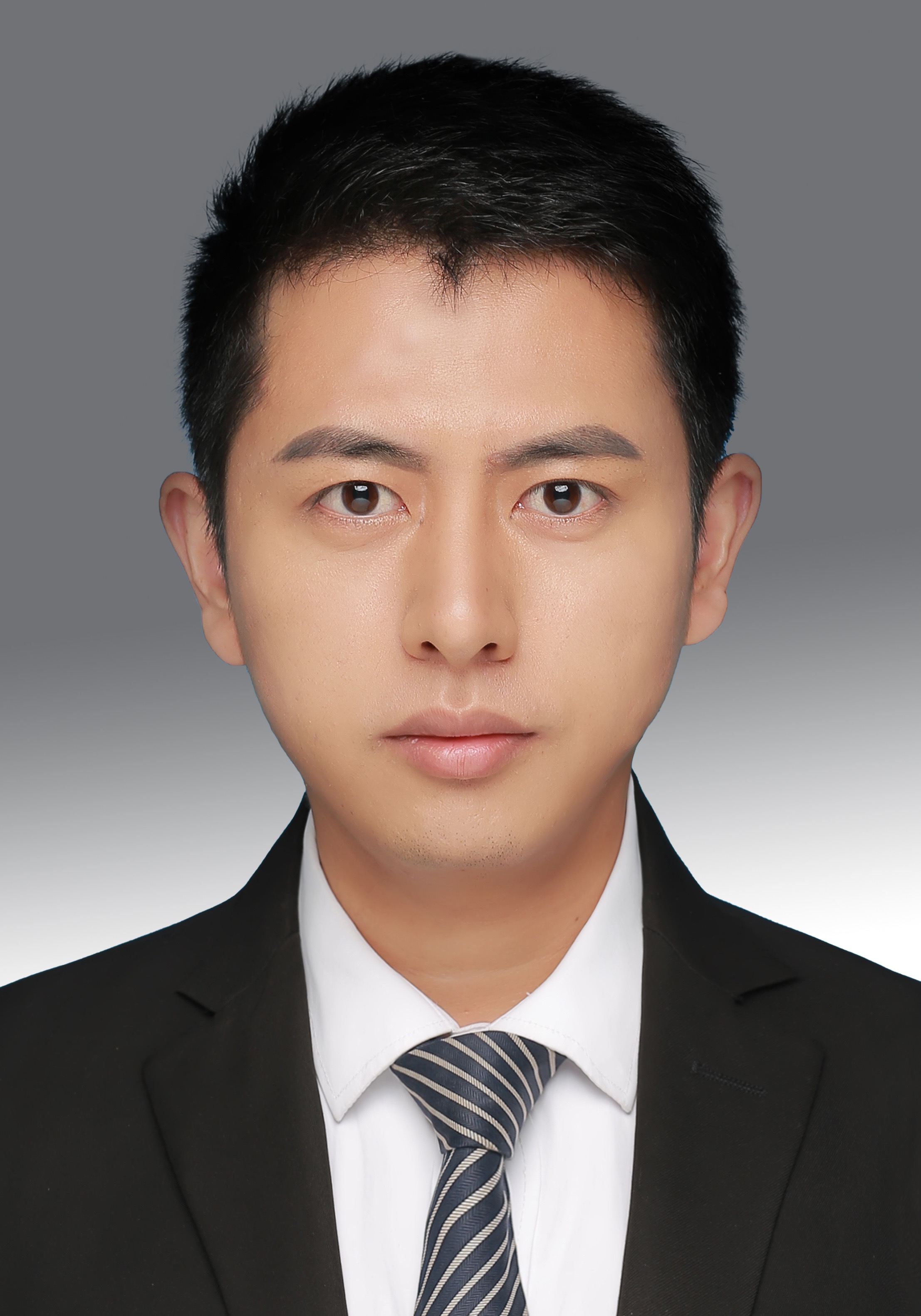}}]{Zhaojian Wang}
	(M'18)  received the B.Sc. degree in electrical engineering from Tianjin University, Tianjin, China, in 2013, and the Ph.D. degree in electrical engineering from Tsinghua University, Beijing, China, in 2018. From 2016 to 2017, he was a Visiting Ph.D. student with the California Institute of Technology, Pasadena, CA, USA. From 2018 to 2020, he was a Postdoctoral Scholar with Tsinghua University. He is currently an Associate Professor with Shanghai Jiao Tong University, Shanghai, China. His research interests include stability analysis, optimal control, V2G regulation and game theory based decision-making in energy and power systems. He is an Associate Editor for IEEE Transactions on Power Systems and IEEE Systems Journal. He was one of the top 5 reviewers of the IEEE Transactions on Smart Grid (TSG) for 2023 and the Best Paper award of  IEEE Transactions on Power Systems (2018-2020).
\end{IEEEbiography}

\begin{IEEEbiography}[{\includegraphics[width=1in,height=1.25in,clip,keepaspectratio]{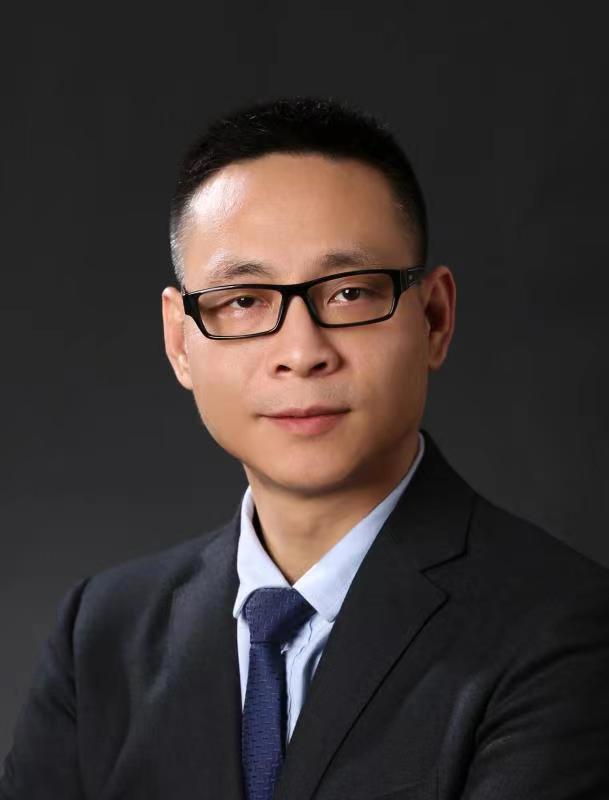}}]{Feng Liu}
	(M'10-SM'18) received the B.Sc. and Ph.D. degrees in electrical engineering from Tsinghua University, Beijing, China, in 1999 and 2004, respectively. He is currently an Associate Professor at Tsinghua University. From 2015 to 2016, he was a visiting associate at the California Institute of Technology, CA, USA. Dr. Feng Liu’s research interests include stability analysis, optimal control, robust dispatch, and game theory-based decision-making in energy and power systems. He is the author/coauthor of more than 300 peer-reviewed technical papers and four books and holds more than 20 issued/pending patents. Dr. Liu is an IET Fellow. He is an associated editor of several international journals, including IEEE Transactions on Power Systems, IEEE Transactions on Smart Grid, and Control Engineering Practice. He also served as a guest editor of IEEE Transactions on Energy Conversion. Dr. Feng Liu was the winner of the outstanding (Top 3) AE award of  IEEE Transactions on Smart Grid (2023), the Best Paper award of  IEEE Transactions on Power Systems (2018-2020) and several international conferences.
\end{IEEEbiography}
\begin{IEEEbiography}[{\includegraphics[width=1in,height=1.25in,clip,keepaspectratio]{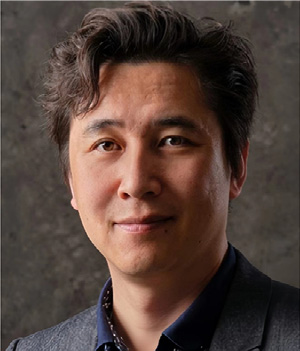}}]{Ming Cao} (Fellow, IEEE) received the bachelor’s degree in 1999 and the master’s degree in 2002 from Tsinghua University, China, and the Ph.D. degree in 2007 from Yale University, New Haven, CT, USA, all in electrical engineering. He is currently a Professor of systems and control with the Engineering and Technology Institute (ENTEG), the University of Groningen, Groningen, the Netherlands, where he started as an Assistant Professor in 2008. From 2007 to 2008, he was a Postdoctoral Research Associate with the Department of Mechanical and Aerospace Engineering, Princeton University, Princeton, NJ, USA. He worked as a Research Intern in 2006 with the Mathematical Sciences Department, IBM T. J. Watson Research Center, USA. His main research interest is in autonomous agents and multi-agent systems, decision making dynamics and complex networks.
\end{IEEEbiography}

\begin{IEEEbiography}[{\includegraphics[width=1in,height=1.25in,clip,keepaspectratio]{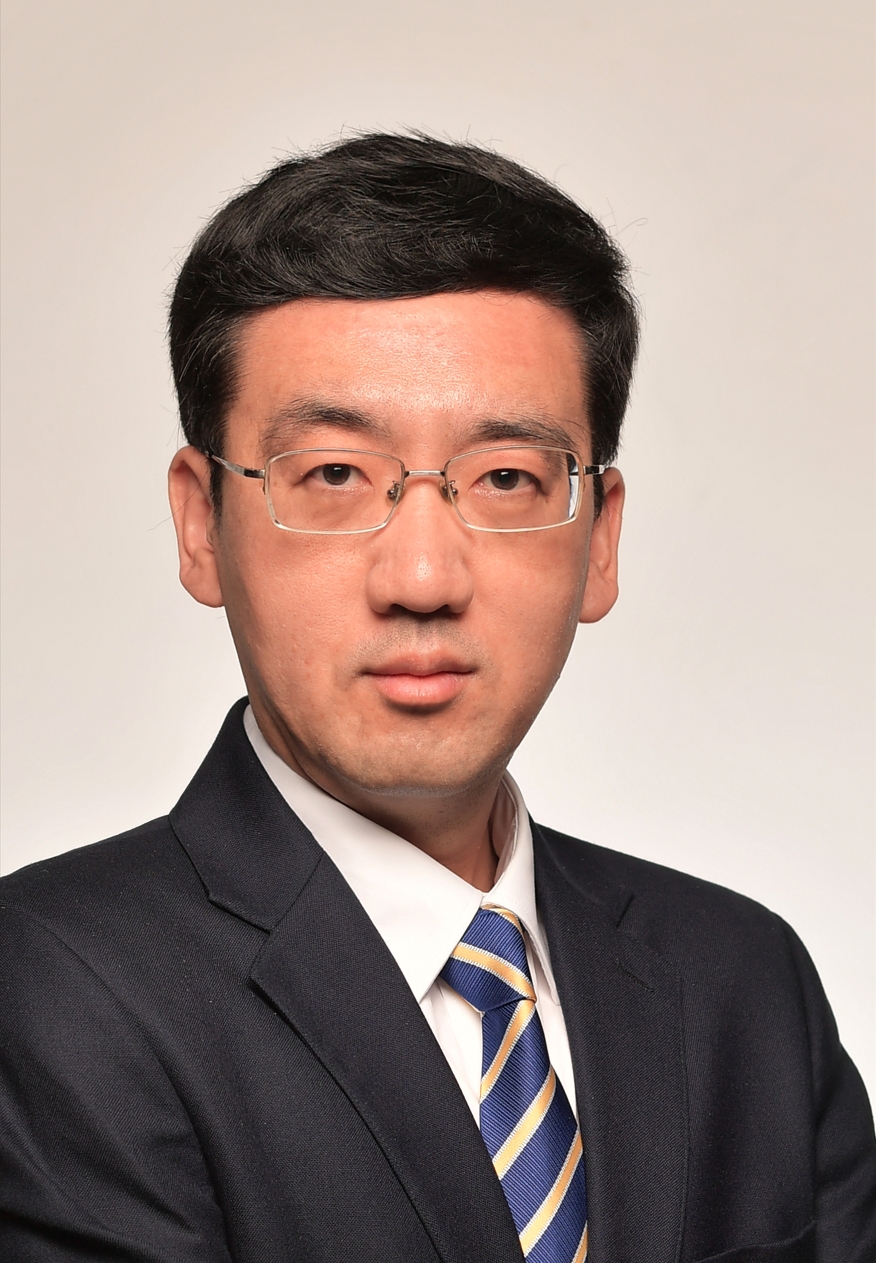}}]{Bo Yang} received the PhD degree in electrical engineering from the City University of Hong Kong, Hong Kong, in 2009. He is currently a full professor with Shanghai Jiao Tong University, Shanghai, China. He held visiting positions at KTH, Sweden; New York University, USA. His research interests include optimization and control for energy networks and Internet of Things. He has been the principal investigator in several research projects, including the NSFC Key Project and the National Science Fund for Distinguished Young Scholars, China. He was a recipient of the Ministry of Education Natural Science Award, the Shanghai Technological Invention Award, the Shanghai Rising Star Program.
\end{IEEEbiography}

\end{document}